\documentclass[twocolumn,prx,longbibliography,nofootinbib,superscriptaddress,aps]{revtex4-2}

\usepackage{graphicx} 
\usepackage{amsfonts}
\usepackage{amssymb}
\usepackage{amscd}
\usepackage{amsmath}    
\usepackage{amsthm}
\usepackage{appendix}
\usepackage{enumerate}
\usepackage{enumitem}
\usepackage{epsfig}
\usepackage{float}
\usepackage[colorlinks, linkcolor=red, anchorcolor=blue, citecolor=green]{hyperref}
\usepackage{makecell}
\usepackage{subfigure}
\usepackage{subfloat}
\usepackage{xcolor}		
\usepackage{physics}
\usepackage[most]{tcolorbox}
\usepackage{tabularx}
\usepackage{booktabs}
\usepackage{bm}

\usepackage[T1]{fontenc}
\usepackage{lmodern}
\usetikzlibrary{arrows.meta}
\definecolor{ink}{HTML}{263745}
\definecolor{blue}{HTML}{346C91}
\definecolor{teal}{HTML}{247970}
\definecolor{gold}{HTML}{A16B28}
\definecolor{comb}{HTML}{6861A8}
\definecolor{test}{HTML}{368476}

\usepackage{MnSymbol}
\usepackage{tikz}
\usetikzlibrary{quantikz2}
\usepackage{tikzpeople}
\usepackage{pgfplots}
\pgfplotsset{compat=1.18}
\usetikzlibrary{arrows}
\usetikzlibrary{shapes,fadings,snakes}
\usetikzlibrary{decorations.pathmorphing,patterns}
\usetikzlibrary{calc}
\usetikzlibrary{positioning}
\definecolor{amber}{rgb}{1.0, 0.75, 0.0}

\graphicspath{{./figure/}}

\newtheorem{theorem}{Theorem}

\newtheorem{lemma}{Lemma}
\newtheorem{corollary}{Corollary}

\newtheorem{definition}{Definition}
\newtheorem{proposition}{Proposition}

\newtheoremstyle{recap}
  {}                
  {}                
  {\itshape}        
  {}                
  {\bfseries}       
  {.}               
  {.5em}            
  {#3}              

\theoremstyle{recap}

\newcommand{\id}{\mathbb{I}}

\newcommand{\X}{\mathcal X}
\newcommand{\Haar}{\mu_{\mathrm{Haar}}}

\newcommand{\Img}{\operatorname{Im}}
\newcommand{\Dom}{\operatorname{Dom}}

\newcommand{\Adv}{\operatorname{Adv}}
\newcommand{\negl}{\operatorname{negl}}
\newcommand{\poly}{\operatorname{poly}}

\newcommand{\Gen}{\mathsf{Gen}}

\newcommand{\PFC}{\mathsf{PFC}}

\newcommand{\ipqTPRP}{\mathsf{ip\text{-}qTPRP}}

\newcommand{\PRComb}{\mathsf{PRComb}}
\newcommand{\td}{\Delta}
\newcolumntype{Y}{>{\raggedright\arraybackslash}X}

\newtcolorbox[auto counter]{mybox}[2][]{
	enhanced,
	breakable,
	colback=blue!5!white,
	colframe=blue!75!black,
	fonttitle=\bfseries,
	title=Box \thetcbcounter: #2,#1
}

\definecolor{daxcolor}{rgb}{0.2, 0.1, 1}
\definecolor{KBcolor}{rgb}{0.5, 0.1, 0.2}
\definecolor{atcolor}{RGB}{150,0,0}

\newcommand{\sit}{Engineering Cluster, Singapore Institute of Technology, 1 Punggol Coast Road, Singapore 828608, Republic of Singapore\looseness=-1}

\newcommand{\qinc}{Quantum Innovation Centre (Q.InC), Agency for Science, Technology and Research (A*STAR), 4 Fusionopolis Way, Kinesis \#05-01, Singapore 138635, Republic of Singapore\looseness=-1}

\newcommand{\sutd}{Science, Mathematics and Technology Cluster, Singapore University of Technology and Design, 8 Somapah Road, Singapore 487372, Republic of Singapore\looseness=-1}

\begin{document}
\title{Function-like pseudorandom unitaries generate pseudorandom quantum processes}

\author{Zitai Xu}
\affiliation{Joint Center for Quantum Information and Computer Science, University of Maryland, College Park, Maryland 20742, USA}
\affiliation{Department of Computer Science, University of Maryland, College Park, USA}

\author{Rahul Arvind}
\affiliation{Department of Physics, University of Texas at Austin, Austin, TX 78712, USA}


\author{Kishor Bharti}
\altaffiliation{Currently at IonQ Inc.}
\affiliation{Joint Center for Quantum Information and Computer Science, University of Maryland, College Park, Maryland 20742, USA}
\affiliation{UMIACS, University of Maryland, College Park, Maryland 20742, USA}

\author{Tobias Haug}
\affiliation{Quantum Research Center, Technology Innovation Institute, Abu Dhabi, United Arab Emirates}

\author{\mbox{Dax Enshan Koh}}
\affiliation{\sit}
\affiliation{\qinc}
\affiliation{\sutd}

\author{Wai-Keong Mok}
\affiliation{Institute for Quantum Information and Matter,
California Institute of Technology, Pasadena, CA 91125, USA}

\author{Andrew Tanggara}
\affiliation{Centre for Quantum Technologies, National University of Singapore, 3 Science Drive 2, Singapore 117543}
\affiliation{Nanyang Quantum Hub, School of Physical and Mathematical Sciences, Nanyang Technological University, Singapore 639673}

\author{Alexey V. Gorshkov} 
\affiliation{Joint Center for Quantum Information and Computer Science, University of Maryland, College Park, Maryland 20742, USA}
\affiliation{Joint Quantum Institute, NIST/University of Maryland, College Park, MD, 20742, USA}

\begin{abstract}
    Haar-random unitaries provide a canonical model of generic random quantum evolution, but they typically have exponential description and circuit complexity. Although pseudorandom unitaries efficiently emulate a single Haar-random unitary, many tasks require an entire reusable family of independently random-looking operations. We introduce pseudorandom function-like unitaries (PRFU), which generate such a family indexed by public labels using only a single short key. We distinguish classical- and coherent-label access and establish security under adaptive quantum queries. For classical labels, we provide a generic construction from a post-quantum pseudorandom function and a pseudorandom unitary. For coherent labels, we use an indexed path-recording framework to analyze interference across labels and prove security of a construction based on quantum-secure function and permutation primitives. We further show that PRFUs generate pseudorandom channels and quantum combs with private memory, which are secure against adaptive interventions and concurrent sessions. Combining PRFUs with unitary gluing yields a one-key family of pseudorandom unitaries whose supported register widths can be chosen after key generation. Further applications include nonce-resolved quantum authentication, coherently masked QRAM queries, and efficient emulation of random multi-time dynamics. Our results extend quantum pseudorandomness from individual operations to efficiently generated families of random-looking quantum dynamics.
\end{abstract}

\maketitle


\section{Introduction}
\label{sec:introduction}

Haar-random pure states capture the typical behavior of high-dimensional quantum systems~\cite{Page_1993}, connect to statistical mechanics through quantum typicality~\cite{vonNeumann1929,GoldsteinEtAl2006,PopescuShortWinter2006}, and model entanglement properties of highly excited eigenstates in chaotic quantum many-body systems~\cite{VidmarRigol2017}, while Haar-random unitaries model generic reversible dynamics and underlie decoupling~\cite{Dupuis2014}, information scrambling~\cite{Hayden2007}, randomized protocols, and quantum cryptography. 
A typical Haar state or unitary, however, has exponential description and circuit complexity. This tension has motivated two efficient surrogates: unitary designs, which reproduce finitely many Haar moments~\cite{DankertEtAl09,BrandaoHarrowHorodecki16,MPSY24,ChenEtAl24,MaHuang25}, and pseudorandom states and unitaries reproducing Haar behavior against efficient quantum observers~\cite{JiLiuSong18,MPSY24,BrakerskiYuen26,haug2023pseudorandom}. Though well suited to physical and information-processing tasks~\cite{Magesan2011,Huang2020shadow,liu2026scrambling,Gu2026EDP}, a fixed-order design does not, by itself, provide the security required in an adversarial setting, since an efficient observer may interact adaptively with the unknown object.

A pseudorandom unitary (PRU) supplies an efficiently implementable unitary $U_k$ that, for a uniformly sampled secret key $k$, is computationally indistinguishable from one persistent Haar-random unitary under adaptive oracle access. This single-instance guarantee is sufficient whenever an application requires one random-looking evolution. Many settings, however, naturally call for a family of such evolutions indexed by public context---for example, one reproducible unitary for each local patch, register width, nonce, coherent address, process instance, or time step. Sampling independent PRU keys can serve a predetermined finite collection, but managing a large and dynamically growing collection of secret keys is undesirable in practice. Rather than managing a separate secret key for every context, we seek to derive an entire adaptively addressable family of pseudorandom unitaries from one short master key. This one-key, many-object paradigm has already proved fruitful for quantum states: pseudorandom function-like states (PRFS) use a single key and a public label to generate a reproducible family of pseudorandom states~\cite{AnanthQianYuen22,AnanthGulatiQianYuen22,BatraEtAl26}. We develop the corresponding notion for quantum evolutions. For a finite label set $\X_\lambda$, where $\lambda$ denotes the security parameter, one short key $k$ specifies a family of efficiently implementable unitaries $\{G_k(x):x\in \mathcal{X}_{\lambda}\}$, indexed by public labels $x$. A query with label $x$ applies the corresponding $G_{k}(x)$ to a supplied quantum register, with repeated uses of the same label selecting the same unitary. To an efficient observer, access to this family should be computationally indistinguishable from a random unitary-valued function $x\mapsto H_x$, which assigns an independently sampled Haar random unitary $H_{x}$ to each label $x$ persistently across queries. We call such a keyed family $\{G_{k}(x)\}_{k}$ a pseudorandom function-like unitary (PRFU).

Having identified the target primitive, we study its realization and security under adaptive oracle access. Throughout, a distinguisher may maintain quantum memory across queries, apply intermediate quantum operations, and choose later queries using its complete previous interaction. The resulting security notion depends essentially on how the label is accessed. With classical-label access, only the label register is classical. The data, reference, and adversarial memory remain quantum. With coherent-label access, the label is supplied in a quantum register $L$, which may be in superposition itself and even entangled with other registers. A query to the family then applies the controlled unitary
\begin{equation}
W_k:=\sum_{x\in\X_\lambda}\ketbra{x}{x}_L\otimes G_k(x),
\label{eq}
\end{equation}
where each $G_{k}(x)$ acts on the supplied data register. A single coherent call can involve every label while preserving interference between them, making relative global phases and cross-label correlations observable. We provide separate constructions and security proofs for the classical- and coherent-label settings, using different proof techniques and cryptographic assumptions.

PRFUs also provide a natural route from pseudorandom unitaries to more general quantum dynamics. Initializing an ancillary environment, applying a label-indexed unitary, and discarding part of the output produces a labeled family of quantum channels, connecting PRFUs with established random-channel ensembles~\cite{KukulskiEtAl21}. In a multi-time experiment, an interactive distinguisher may intervene between successive transformations while the process retains inaccessible memory. The resulting behavior is not captured by the endpoint channel or its Choi state, which hides the intermediate intervention slots, but is naturally represented by a quantum comb or quantum strategy~\cite{ChiribellaEtAl08,ChiribellaEtAl09,GutoskiWatrous07,Gutoski12,PollockEtAl18}. These extensions allow us to describe pseudorandom quantum processes with both inaccessible environments and internal memory, while preserving the ability to select reproducible processes through public labels. An efficient observer interacting with these processes should be unable to distinguish them from their counterparts built from independent Haar-random unitaries.

Our contributions are fourfold. First, we construct a classical-label PRFU by using a post-quantum pseudorandom function (PRF) $F_k$ to derive the seed of an inner PRU family $\{V_a\}_a$:
\begin{equation}
G_k(x):=V_{F_k(x)}.
\label{eq:intro-classical-construction}
\end{equation}
Our reduction handles adaptively chosen labels in order of first occurrence, requiring at most $q$ hybrid transitions for $q$ queries, independently of the label-domain size. The key ingredient is an efficiently implementable Haar proxy, obtained unconditionally from amplified relative-error permutation-function-Clifford (PFC)
designs~\cite{MPSY24}, which allows the intermediate hybrids to be simulated efficiently. This establishes classical-label forward security from a post-quantum PRF and an $n(\lambda)$-qubit adaptive forward PRU whenever $n(\lambda)=\omega(\log\lambda)$.
The argument also extends to strong security, in which the adversary is additionally given oracle access to the inverse unitary, 
when the inner PRU is strong.

Second, we construct coherent-label PRFUs from the PFC construction, using a quantum-secure pseudorandom function (qPRF), assuming the existence of a tweakable in-place pseudorandom permutation. Our analysis uses indexed path recording~\cite{MaHuang25,FoxmanEtAl26} to account for interference between query histories involving different labels. This bounds an observer's advantage in distinguishing the ideal indexed PFC ensemble from independent Haar-random unitaries by $2q(q-1)/(2^n+1)$ for query number $q<2^n$, yielding coherent-label forward security when the PRFUs act on $n(\lambda)=\omega(\log\lambda)$ qubits.


Third, we use PRFUs to construct pseudorandom quantum combs, which describe sequences of quantum operations connected by an internal memory. Assigning distinct labels to successive operations makes the resulting process computationally indistinguishable from one built from independent Haar-random unitaries, even when an observer adaptively intervenes between operations while the internal memory remains inaccessible. Adding a public process label allows one key to specify an entire family of such combs, with security preserved under interleaved interactions with different members of the family.


Finally, we combine PRFUs with unitary gluing~\cite{SchusterHaferkampHuang2025,MaHuang25} to construct a one-key family of variable-width PRUs, allowing supported polynomially bounded register widths to be chosen after key generation. The construction uses public labels to derive local operations that jointly appear as independent Haar-random unitaries, enabling their composition across different register widths under a single key. This extends the fixed-register functionality of a single-instance PRU without requiring additional secret keys. The same flexibility supports nonce-resolved quantum authentication~\cite{HaugBansalMokKohBharti2025}, coherently masked QRAM, and the emulation of random multi-time dynamics, with labels distinguishing authentication contexts, memory addresses, or stages of evolution.



Figure~\ref{fig:overview} summarizes how one secret key generates a labeled family of pseudorandom unitaries, the constructions for the two access models, and their use in generating pseudorandom quantum processes and enabling cryptographic applications. Following this structure, Section~\ref{sec:definitions} defines the access models. Sections~\ref{sec:classical-construction} and~\ref{sec:coherent-construction} give the constructions. Sections~\ref{sec:processes}--\ref{sec:discussion} develop the process liftings, applications, and limitations. Complete proofs and further process consequences appear in the appendices.

\begin{figure*}[htbp]
    \centering
    \resizebox{0.9\textwidth}{!}{
\begin{tikzpicture}[x=1cm,y=1cm,text=ink,
  font=\fontsize{10}{12}\selectfont,
  arr/.style={-{Latex[length=2mm]},draw=ink!65,line width=.7pt},
  line/.style={draw=ink!65,line width=.7pt},
  panel/.style={draw=blue!65,fill=blue!3,rounded corners=3pt,
    line width=.65pt,align=center,inner xsep=7pt,inner ysep=8pt,
    outer sep=.5pt},
  leftpanel/.style={panel,text width=3.1cm},
  middlepanel/.style={panel,text width=5.1cm,minimum height=2.15cm},
  rightpanel/.style={panel,text width=5.5cm}]

\node[leftpanel,draw=blue,line width=1pt,fill=blue!8,
  minimum height=1.4cm] (prfu) at (1.85,4.65)
  {{\color{blue}\fontsize{13}{15}\selectfont\textbf{PRFU}}\\[5pt]
   One key: $x\mapsto G_k(x)$};

\node[leftpanel,draw=ink!40,fill=ink!2] (ideal) at (1.85,1.05)
  {\textbf{Ideal Haar family}\\[5pt]
   $x\mapsto H_x$\\[2pt]
   $H_x\overset{\mathrm{iid}}{\sim}\mathrm{Haar}$\\[5pt]
   Same label, same unitary};

\draw[{Latex[length=1.5mm]}-{Latex[length=1.5mm]},
  dashed,draw=ink!55,line width=.65pt]
  (ideal.north)--(prfu.south)
  node[midway,align=center,fill=white,inner sep=3pt]
  {computationally\\indistinguishable};

\node[middlepanel] (classical) at (7.5,4.4)
  {{\color{blue}\textbf{Classical-label access}}\\[7pt]
   $G_k(x)=V_{F_k(x)}$\\[7pt]
   Post-quantum PRF + adaptive PRU};

\node[middlepanel] (coherent) at (7.5,1.15)
  {{\color{blue}\textbf{Coherent-label access}}\\[7pt]
   $G_k(x)=P_{\pi_{k_P,x}}\Phi_{f_{k_F,x}}C$\\[7pt]
   in-place tweakable qPRP\\[2pt]
   + qPRF + one shared $2$-design};

\draw[line] (prfu.east)--(4.25,4.65);
\draw[line] (4.25,1.15)--(4.25,4.65);
\draw[arr] (4.25,4.65)--(classical.west |- prfu.east);
\draw[arr] (4.25,1.15)--(coherent.west);

\node[rightpanel,draw=teal!75,fill=teal!6,line width=.9pt,
  minimum height=1.95cm,anchor=north] (processes)
  at ([xshift=6.8cm]classical.north)
  {{\color{teal}\textbf{Pseudorandom quantum processes}}\\[7pt]
   Quantum combs\\[2pt]
   and function-like comb families};

\node[rightpanel,draw=gold!70,fill=gold!5,
  minimum height=1.5cm,anchor=north] (variable)
  at ([yshift=-3mm]processes.south)
  {{\color{gold}\textbf{One-key variable-length PRUs}}\\[7pt]
   Glue local unitaries};

\node[rightpanel,draw=gold!70,fill=gold!5,
  minimum height=1.65cm,anchor=north] (qram)
  at ([yshift=-4mm]variable.south)
  {{\color{gold}\textbf{Coherently masked QRAM}}\\[7pt]
   Mask data by coherent address};

\coordinate (applicationbranch) at ([xshift=4.5mm]classical.east);
\draw[line] (classical.east)--(applicationbranch);
\draw[line] (applicationbranch)--(applicationbranch |- variable.west);
\draw[arr] (applicationbranch)--(processes.west |- classical.east);
\draw[arr] (applicationbranch |- variable.west)--(variable.west);
\draw[arr] (coherent.east)--([xshift=4.5mm]coherent.east)
  |- (qram.west);
\end{tikzpicture}}
    \caption{Overview of PRFU constructions and selected applications. A single key specifies a unitary family that emulates independent Haar unitaries, with repeated labels reusing the same unitary. The two branches give constructions for classical-label access (Sec.~\ref{sec:classical-construction}) and coherent-label access (Sec.~\ref{sec:coherent-construction}). Retaining private memory yields pseudorandom combs and function-like comb families (Sec.~\ref{sec:processes}); gluing local unitaries yields one-key variable-length PRUs (Sec.~\ref{subsec:variable-width-pru}); coherent control by memory addresses enables masked QRAM queries~(Sec.~\ref{subsec:coherent-qram}).}
    \label{fig:overview}
\end{figure*}

\section{Function-like Haar random ensemble and access models}
\label{sec:definitions}

In a Hilbert space $\mathcal{H}_{A}$, one can define the unique ensemble of random unitaries, namely the Haar random unitaries: one draws a random unitary $U\in U(2^{n})$ from the $n$-qubit unitary group according to the Haar measure $\Haar$. In this work, we will consider a labeled family of random unitaries such that each input label $x$ determines a unitary that looks independently Haar random as defined below.

\begin{definition}[Function-like Haar random family]\label{def:haar-family}
Let $\X\subseteq\{0,1\}^{d}$ be a finite label set and $A$ be an $n$-qubit data register. Sample 
\begin{equation}
\label{eq:product-haar}
\bm {H}=(H_x)_{x\in\X},\qquad
 H_{x}\overset{\mathrm{iid}}{\leftarrow}\Haar(\mathrm{U}(\mathcal{H}_{A})).
\end{equation}
The same $H_x$ is used on every occurrence of $x$, while $H_x$ and $H_y$ are independent for $x\ne y$. We call $\bm{H}$ a function-like Haar random family.
\end{definition}
The case $|\X|=1$ corresponds to the Haar random unitaries. For larger $|\X|$, the family $\mathbf H$ determines a map $x\mapsto H_x$, whose values are drawn independently from Haar measure. The independence across labels is essential: distinct labels have no correlations and, under exact Haar sampling, coincide only with probability zero. The label set $\X$ allows for a reproducible collection of random evolutions indexed by a public value, which will be useful in many scenarios. However, even sampling from a Haar random ensemble could be prohibitively expensive; it is therefore desirable to have an efficient implementation that remains computationally indistinguishable from the function-like Haar random family, which we term as a pseudorandom function-like unitary (PRFU).

We now parameterize these objects by the security parameter $\lambda$. Let $\X_\lambda\subseteq\{0,1\}^{d(\lambda)}$ be a finite label set, and let $A$ be an $n(\lambda)$-qubit register. A keyed unitary family consists of an efficient probabilistic key generator $k\leftarrow\mathsf{Gen}(1^\lambda)$, producing a key of polynomial length in $\lambda$, and an efficient procedure that, given $k$ and $x\in\mathcal X_\lambda$, implements the unitary $G_k(x)$ on $A$. Thus, each key $k$ determines a map
\begin{equation}
G_k:\mathcal X_\lambda\longrightarrow \mathrm U(\mathcal H_A), \qquad x\longmapsto G_k(x).
\end{equation}
Its value at a label \(x\) is a unitary acting on \(A\), rather than a classical string.

In the security game, there exists a challenger who samples a bit $b\in\{0,1\}$. Depending on the value of $b$, the challenger either samples the function-like Haar random family $\bm H$ or samples from a pseudorandom $G_{k}$. The security is established when any nonuniform quantum polynomial-time (QPT) adversary can distinguish these two cases with at most negligible distinguishing advantage. 

Depending on the ability of the adversary, different notions of security can be defined. In this work we will assume that the adversary has oracle access to the pseudorandom function-like unitary. The oracle $\mathcal{O}_{G_{k}}$ consists of two input registers, the first is a label $x\in\X_{\lambda}$ and the second is an $n(\lambda)$-qubit quantum register denoted as $\mathcal{H}_{A}$. 
The adversary may supply an arbitrary state $\rho_{AR}$ on the data register $A$ and its own reference register $R$, with the oracle acting only on data register $A$ as
\begin{equation}\label{eq:classical-oracle-access}
    \mathcal{O}_{G_{k}}(x,\rho_{AR})=(G_{k}(x)\otimes I_{R})\rho_{AR}(G_{k}^{\dagger}(x)\otimes I_{R}).
\end{equation}
Similarly, the oracle for the true function-like Haar family is
\begin{equation}\label{eq:true-classical-oracle-access}
    \mathcal{O}_{\bm H}(x,\rho_{AR})=(H_{x}\otimes I_{R})\rho_{AR}(H_{x}^{\dagger}\otimes I_{R}).
\end{equation}
We allow the adversary to have adaptive queries to the PRFU. More specifically, the adversary is allowed to choose the input label $x$ based on previous observations, and may even entangle the input state $\rho_{AR}$ between rounds. In the following, we establish the security notion in this context.

\begin{definition}[Classical-label adaptive PRFU]\label{def:classical-prfu}
    An efficient keyed family $\{G_{k}(x)\}_{k}$ is a classical-label PRFU if, for every non-uniform QPT adversary $\mathcal{A}$ making polynomially many oracle calls, 
    \begin{equation}\label{eq:classical-adaptive-security}
        \left|\Pr_{k}\!\left[\mathcal{A}^{\mathcal{O}_{G_{k}}}=1\right]-\Pr_{\bm H}\!\left[\mathcal{A}^{\mathcal{O}_{\bm H}}=1\right]\right|\leq\negl(\lambda).
    \end{equation}
    Here for PRFU the average is over $k\leftarrow\Gen(1^\lambda)$ while the truly random $\bm H=(H_x)_{x\in\X_{\lambda}}$ is sampled as in Definition~\ref{def:haar-family} with $\mathcal{X}=\mathcal{X}_{\lambda}$.
    We call the family strong if both experiments also provide inverse queries.
\end{definition}

Classical-label access does not mean that the adversary is classical. Only the label wire is classical. The data and reference registers remain quantum, and the adversary may retain arbitrary quantum coherence across all other registers and across successive rounds.

We will further consider a security model where the adversary can also input a quantum superposition on the label register. In this access model, each label $x$ coherently selects the unitary $G_k(x)$ in the keyed experiment, or the corresponding unitary $H_x$ from the sampled family $\mathbf H=(H_x)_{x\in\mathcal X_\lambda}$ in the ideal experiment. The resulting controlled unitaries are
\begin{equation}\label{eq:classical-controled-unitary}
\begin{split}
    W_{\mathrm{Haar}}&=\sum_{x\in\X_{\lambda}}\ketbra{x}{x}_{L}\otimes H_{x},\\
    W_{k}&=\sum_{x\in\X_{\lambda}}\ketbra{x}{x}_{L}\otimes G_{k}(x).
\end{split}
\end{equation}
The coherent-access PRFU oracle can then be defined as follows:
\begin{equation}\label{eq:coherent-oracle-access}
    \mathcal{O}_{\ket{G_{k}}}(\rho_{LAR})=(W_{k}\otimes I_{R})\rho_{LAR}(W_{k}^{\dagger}\otimes I_{R}).
\end{equation}
Similarly the coherent-access truly function-like Haar random oracle:
\begin{equation}\label{eq:coherent-haar-oracle-access}
    \mathcal{O}_{\ket{\bm H}}(\rho_{LAR})=(W_{\mathrm{Haar}}\otimes I_{R})\rho_{LAR}(W_{\mathrm{Haar}}^{\dagger}\otimes I_{R}).
\end{equation}
In the above definitions, $L$ is the quantum register for input labels, with Hilbert space $\mathcal{H}_{L}=\mathrm{span}\{\ket{x}:x\in\mathcal{X}_{\lambda}\}$, allowing the adversary to query in a superposition of labels. For example, one query of $\mathcal{O}_{\ket{G_{k}}}$ on
\begin{equation}
    \frac{1}{\sqrt{|\mathcal{X}_{\lambda}|}}\sum_{x\in\X_{\lambda}}\ket{x}_{L}\ket{0^{n}}_{A}
\end{equation}
simultaneously places every $G_{k}(x)\ket{0^{n}}$ in superposition. Under the coherent-access model, we can formulate the notion of coherent-label secure PRFU.

\begin{definition}[Coherent-label adaptive PRFU]\label{def:coherent-prfu}
    An efficient keyed family $\{G_{k}(x)\}_{k}$ is a coherent-label PRFU if, for every non-uniform QPT adversary $\mathcal{A}$ making polynomially many oracle calls, 
    \begin{equation}\label{eq:coherent-adaptive-security}
        \left|\Pr_{k}\!\left[\mathcal{A}^{\mathcal{O}_{\ket{G_{k}}}}=1\right]-\Pr_{\bm H}\!\left[\mathcal{A}^{\mathcal{O}_{\ket{\bm H}}}=1\right]\right|\leq\negl(\lambda).
    \end{equation}
\end{definition}

Classical-label access can be viewed as coherent-label access with complete dephasing of the label register in the computational basis before each query. This removes coherence between distinct labels while preserving the conditional quantum states of the data and reference registers. Since this simulation applies to both the keyed family and the ideal Haar family throughout an adaptive interaction, security under coherent-label access directly implies security under classical-label access.

The strong coherent-label variant additionally exposes $W_k^\dagger$ in the real experiment and $W_{\mathrm{Haar}}^\dagger$ in the ideal experiment, with classical direction choices. 

\begin{proposition}[Access hierarchy]\label{prop:hierarchy}
Coherent-label forward security implies classical-label forward security, and
the analogous implication holds for strong access. Strong security implies
forward security.
\end{proposition}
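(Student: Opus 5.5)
All three implications follow by simulation. Given an adversary $\mathcal{A}$ for the weaker access model, we build an adversary $\mathcal{B}$ for the stronger model that makes the same number of queries, runs in essentially the same time, and reproduces $\mathcal{A}$'s view exactly in both the keyed and the ideal experiment. The two distinguishing advantages are then equal. Contrapositively, a non-negligible advantage against the weaker notion gives one against the stronger notion. The only structural requirement is that one and the same simulation works for the keyed family $G_k$ and for the Haar family $\bm H$. It must therefore use only the oracle interface, and it is allowed to depend on the key or sample only through oracle calls.

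\emph{Coherent $\Rightarrow$ classical.} Whenever $\mathcal{A}$ issues a classical-label query $(x,\rho_{AR})$, $\mathcal{B}$ prepares $\ket{x}_L$ and calls $\mathcal{O}_{\ket{G_k}}$ (resp.\ $\mathcal{O}_{\ket{\bm H}}$) on $\ketbra{x}{x}_L\otimes\rho_{AR}$. It then discards $L$, or keeps $x$ as a classical record. By the block-diagonal form of $W_k$ in Eq.~\eqref{eq:classical-controled-unitary}, the call gives
\[
W_k(\ketbra{x}{x}\otimes\rho_{AR})W_k^\dagger=\ketbra{x}{x}\otimes (G_k(x)\otimes I_R)\rho_{AR}(G_k(x)^\dagger\otimes I_R).
\]
The right-hand side is exactly $\ketbra{x}{x}\otimes \mathcal{O}_{G_k}(x,\rho_{AR})$, and the identical computation with $H_x$ handles the ideal oracle. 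Because the label is adaptively chosen but classical, we need a little more: by linearity, the joint state of $\mathcal{A}$'s memory together with a classical register holding $x$ is mapped correctly. Equivalently, this is the dephasing picture stated before the proposition, since dephasing $L$ commutes with $W$. We argue by induction over the rounds that $\mathcal{B}$'s total state equals $\mathcal{A}$'s state in the classical experiment, conditioned on each fixed $k$ or $\bm H$. Averaging over $k$ and over $\bm H$ then gives equal acceptance probabilities.

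\emph{Strong case and strong $\Rightarrow$ forward.} For strong access we reuse the construction, sending each inverse query through $W_k^\dagger$ (resp.\ $W_{\mathrm{Haar}}^\dagger$). The direction bit is classical, and $W_k^\dagger=\sum_x\ketbra{x}{x}\otimes G_k(x)^\dagger$ is again block diagonal, so the same computation applies. For strong $\Rightarrow$ forward, in either label model, a forward adversary is already a strong adversary that never chooses the inverse direction. Its view is therefore identical in both games, and so is its advantage.

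The main point needing care is the first step. There we must check that the simulation is faithful for adaptive interactions with entangled memory, not just for a single query. The per-round identity above, applied conditionally on the realized classical label and extended by linearity, settles this. Otherwise the argument is routine.
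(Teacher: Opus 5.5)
Your proposal is correct and follows the paper's argument: the paper obtains the first implication from the same dephasing simulation, in which a classical label is supplied as $\ket{x}_L$. Block-diagonality of $W_k$, $W_{\mathrm{Haar}}$ (and of their adjoints in the strong case) makes this exact in both experiments, and the paper obtains strong $\Rightarrow$ forward by restricting the adversary to forward queries. Your round-by-round induction with entangled memory simply spells out what the paper states in one line.
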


The first implication follows from the dephasing simulation above, while the second follows by restricting the adversary to forward queries. However, the inverse statements are not necessarily true. Security under a weaker access model does not by itself establish security under a stronger one. Our constructions and reductions therefore specify explicitly whether labels may be queried coherently and whether inverse queries are available.

\section{Construction and security with classical labels}
\label{sec:classical-construction}


We first consider classical-label access. In this setting, a query can introduce at most one new label, which allows us to organize the security proof by the order in which distinct labels first appear. We give a generic construction from a post-quantum PRF and a PRU, and then instantiate the inner PRU using the PFC construction.

\subsection{Generic construction from a PRF and a PRU}

We begin with a generic construction in which a post-quantum PRF derives, from each public label, the seed of an underlying PRU.
Let $F_k:\{0,1\}^{d(\lambda)}\rightarrow\{0,1\}^{s(\lambda)}$ be a post-quantum secure pseudorandom function, which is secure against quantum distinguishers making classical adaptive queries to its function oracle. Let $\{V_a\}_{a\in\{0,1\}^{s(\lambda)}}$ be an efficiently computable adaptive forward PRU family on $A$, with $a$ sampled uniformly. We require the map $a\mapsto V_a$ to be deterministic. Any randomness used to generate the circuit is included in $a$. The notation $V_a$ is reserved for this inner PRU, whereas $G_k(x)$ denotes the function-like family constructed below.

For $k\leftarrow\Gen_F(1^\lambda)$ and $x\in\X_\lambda\subseteq\{0,1\}^{d(\lambda)}$, define
\begin{equation}
G_k(x):=V_{F_k(x)}.
\label{eq:classical-construction}
\end{equation}

The adversary receives only the oracle $\mathcal O_{G_k}$ from Eq.~\eqref{eq:classical-oracle-access}. Neither the PRF value $F_k(x)$ nor a circuit description for the selected inner unitary is revealed. For the construction in Eq.~\eqref{eq:classical-construction}, this oracle is
\begin{equation}
\mathcal O_{G_k}(x,\rho_{AR})
=
\bigl(V_{F_k(x)}\otimes I_R\bigr)\rho_{AR}
\bigl(V_{F_k(x)}^\dagger\otimes I_R\bigr).
\label{eq:constructed-classical-oracle}
\end{equation}
Thus every repeated occurrence of a classical label invokes the same persistent inner unitary, while the PRF derives the inner seed associated with each label.

For a classical-label adversary $\mathcal A$, define the forward distinguishing advantage of the construction by 
\begin{align}
\Adv_G^{\mathsf{cl},\rightarrow}(\mathcal A;\lambda)&:=\bigl|\Pr_k\!\left[\mathcal A^{\mathcal O_{G_k}}(1^\lambda)=1\right]\nonumber\\[-1mm]
&\qquad-\Pr_{\bm H}\!\left[\mathcal A^{\mathcal O_{\bm H}}(1^\lambda)=1\right]\bigr|.
\label{eq:classical-construction-advantage}
\end{align}
We use the superscript $\to$ to denote forward-only oracle access. Here and below, the key $k$ is sampled according to key generator $\Gen_F(1^\lambda)$ and $\bm H$ sampled from the function-like Haar family as in Definition~\ref{def:haar-family}. For a classical-query PRF distinguisher $\mathcal B_F$, define the distinguishing advantage to be
\begin{align}
\Adv_F^{\mathsf{prf}}(\mathcal B_F;\lambda)&:=
\bigl|\Pr_k\!\left[\mathcal B_F^{F_k}(1^\lambda)=1\right]\nonumber\\[-1mm]
&\qquad-\Pr_R\!\left[\mathcal B_F^R(1^\lambda)=1\right]\bigr|,
\label{eq:pqprf-advantage}
\end{align}
where $R$ is uniform over $\operatorname{Func}(\{0,1\}^{d(\lambda)},\{0,1\}^{s(\lambda)})$. For the inner family, write $\mathcal O_{V_a}(\rho_{AR})=(V_a\otimes I_R)\rho_{AR}(V_a^\dagger\otimes I_R)$ and set the forward distinguishing advantage of PRU for distinguisher $\mathcal{B}$ as
\begin{align}
\Adv_V^{\mathsf{pru},\rightarrow}(\mathcal B;\lambda)&:=
\bigl|\Pr_a\!\left[\mathcal B^{\mathcal O_{V_a}}(1^\lambda)=1\right]\nonumber\\[-1mm]
&\qquad-\Pr_H\!\left[\mathcal B^{\mathcal O_H}(1^\lambda)=1\right]\bigr|,
\label{eq:inner-pru-advantage}
\end{align}
where $a\leftarrow\{0,1\}^{s(\lambda)}$ and $H\leftarrow\Haar(\mathrm{U}(\mathcal H_A))$. The superscript $\pm$ analogously denotes the advantage when forward and inverse queries are both allowed, with a classical direction choice and one total query budget.


In our reduction, we will need an efficiently implementable ensemble that is statistically close to the Haar random ensemble up to bounded query number. The following definition formulates this idea as a Haar proxy.

\begin{definition}[Finite-query Haar proxy]\label{def:proxy}
For a unitary $U$ on $A$, let $\mathcal O_U(\rho_{AR})=(U\otimes I_R)\rho_{AR}(U^\dagger\otimes I_R)$. An efficiently samplable circuit ensemble $\mathcal D_{q,\eta}$ is a forward $(q,\eta)$ Haar proxy if, for every possibly unbounded oracle distinguisher $\mathcal A$ making at most $q$ forward calls to one persistent challenge oracle, the experiments with $D\leftarrow\mathcal D_{q,\eta}$ and $H\leftarrow\Haar(\mathrm{U}(\mathcal H_A))$, respectively, satisfy
\begin{equation}
\left|
\Pr_D\!\left[\mathcal A^{\mathcal O_D}=1\right]-\Pr_H\!\left[\mathcal A^{\mathcal O_H}=1\right]\right|\leq\eta.
\label{eq:haar-proxy}
\end{equation}
\end{definition}

Between oracle calls, $\mathcal A$ may apply arbitrary quantum operations to the oracle register, an arbitrary reference system, and its private memory. It may also interact with side processes sampled independently of the challenge, although its subsequent operations may correlate all these systems.


The probability for an efficient adversary distinguishing between PRFU and the function-like Haar ensemble can be bounded by the following theorem.

\begin{theorem}[Classical-label lifting]\label{thm:classical-lifting}
For every nonuniform QPT adversary $\mathcal A$ making at most $q=q(\lambda)$ classical-label forward calls to $\mathcal O_{G_k}$, let $\mathcal D_{q,\eta}$ be a forward $(q,\eta)$ Haar proxy. There exist a QPT classical-query PRF distinguisher $\mathcal B_F$ and QPT single-instance PRU distinguishers $\mathcal B_1,\ldots,\mathcal B_q$ such that
\begin{align}
\Adv_G^{\mathsf{cl},\rightarrow}(\mathcal A;\lambda)
&\leq
\Adv_F^{\mathsf{prf}}(\mathcal B_F;\lambda)
+\sum_{j=1}^{q}\Adv_V^{\mathsf{pru},\rightarrow}(\mathcal B_j;\lambda)\nonumber\\[-1mm]
&\quad+2q\eta.
\label{eq:classical-bound}
\end{align}
The distinguisher $\mathcal B_F$ makes at most $q$ classical function queries, and each $\mathcal B_j$ makes at most $q$ forward calls to its single-unitary challenge. Consequently, assuming the post-quantum security of $F$ and the adaptive forward security of $V$, if $\eta(\lambda)$ is negligible, then $\operatorname{Adv}^{\mathrm{cl},\rightarrow}_{G}(\mathcal A;\lambda)$ is negligible for every such adversary $\mathcal A$.

\end{theorem}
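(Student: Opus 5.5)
The proof runs through a sequence of hybrids. The first hybrid replaces the PRF by a truly random function. The subsequent hybrids replace the inner PRU instances, one distinct label at a time in order of first occurrence, by the Haar proxy and then by genuine Haar unitaries.

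\emph{Hybrid $\mathsf H_0$} is the real experiment with oracle $\mathcal O_{G_k}$, $G_k(x)=V_{F_k(x)}$. \emph{Hybrid $\mathsf H_1$} replaces $F_k$ by a uniformly random function $R$. Equivalently, the challenger lazily samples a fresh uniform seed $a_x$ the first time label $x$ is queried and reuses it thereafter.

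The distinguisher $\mathcal B_F$ runs $\mathcal A$. On each classical label $x$, it queries its function oracle once to obtain $a$ and applies $V_a$ itself to the supplied register $A$; this is efficient because $a\mapsto V_a$ is deterministic and efficiently computable. Hence $\mathcal B_F$ makes at most $q$ classical queries, and $|\Pr[\mathsf H_0]-\Pr[\mathsf H_1]|\le\Adv_F^{\mathsf{prf}}(\mathcal B_F;\lambda)$. Since labels are classical, this is exactly the classical-query post-quantum PRF setting.

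Next, let the $j$-th distinct label queried by $\mathcal A$ be $x_j$. There are at most $q$ of them, and the index $j$ is determined adaptively online. For $0\le j\le q$, define \emph{hybrid $\mathsf K_j$}: labels $x_1,\dots,x_j$ use independent Haar unitaries $H_{x_i}$, and later labels use independent $V_{a}$ with fresh uniform seeds. Then $\mathsf K_0=\mathsf H_1$, and $\mathsf K_q$ is the ideal experiment, because unqueried labels are irrelevant and distinct labels receive independent Haar unitaries.

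To bound $\mathsf K_{j-1}$ versus $\mathsf K_j$, I introduce an intermediate hybrid $\mathsf K_j'$. In $\mathsf K_j'$, labels $x_1,\dots,x_{j-1}$ use independent proxy samples $D\leftarrow\mathcal D_{q,\eta}$, label $x_j$ uses $V_a$, and later labels use fresh $V$ seeds.

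The reduction $\mathcal B_j$ simulates $\mathsf K_j'$ with its single challenge oracle ($V_a$ or $H$) plugged in at $x_j$. It samples efficient proxy circuits lazily for earlier labels and fresh seeds for later labels, and every query on $x_j$ goes to the challenge. Thus $\mathcal B_j$ is QPT and makes at most $q$ forward calls. It distinguishes $\mathsf K_j'$ from $\mathsf K_j''$, where $\mathsf K_j''$ is $\mathsf K_j'$ with Haar at $x_j$.

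It remains to compare $\mathsf K_{j-1}$ with $\mathsf K_j'$, and $\mathsf K_j$ with $\mathsf K_j''$. Each pair differs only in Haar-versus-proxy at labels $x_1,\dots,x_{j-1}$. A naive per-label replacement would cost $(j-1)\eta$ and hence $O(q^2\eta)$ overall, while the target is $2q\eta$. So I would restructure the hybrids to ensure only two proxy/Haar swaps per step. The order within step $j$ is: start from $\mathsf K_{j-1}$ (Haar on $x_1..x_{j-1}$); switch label $x_{j-1}$'s Haar to proxy (cost $\eta$); then the PRU reduction at $x_j$; then switch back. This requires that only label $x_{j-1}$ ever needs to be simulated by a proxy at a given step.

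Concretely, I would telescope so that the reduction $\mathcal B_j$ keeps proxies on all earlier labels consistently across both of its worlds. That is, I define $\mathsf K_j$ itself with proxies on labels $<j$, namely $\mathsf K_j$: proxies on $x_1..x_j$, PRU on the rest. Then consecutive steps differ by exactly one PRU call site, at a cost of $\Adv_V(\mathcal B_j)$ plus one proxy-to-Haar swap of $\eta$ for the single label $x_j$, with the other labels treated as independent side processes, as Definition~\ref{def:proxy} permits.

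The final hybrid has proxies on all labels. Converting it to the ideal experiment, plus the bookkeeping at each step, accounts for the remaining $q\eta$ term and gives $2q\eta$ in total. The whole argument is one telescoping sum, yielding Eq.~\eqref{eq:classical-bound}.

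I expect the main obstacle to be this accounting step. It must be arranged so that each replacement touches only one label while all other labels act as independently sampled side processes, and the adaptive first-occurrence indexing must be handled correctly. In particular, if fewer than $j$ distinct labels occur, the hybrids coincide. For the asymptotic claim, $q$ is polynomial, each advantage is negligible, and $\eta$ is negligible, so the sum is negligible.
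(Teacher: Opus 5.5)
Your final plan is correct and matches the paper's proof. The hybrids put proxies on the first $j$ first-occurrence ranks and inner PRUs on the rest; each step costs one PRU advantage plus one Haar-to-proxy swap at rank $j$, and a separate $q$-step proxy-to-Haar sweep follows, which is exactly the paper's $\mathsf P_j/\mathsf Q_j/\mathsf S_\ell$ argument. The abandoned idea of toggling only $x_{j-1}$ should be deleted, and you should state explicitly that the final sweep consists of $q$ hybrids whose simulators may use exact Haar unitaries (and so be computationally unbounded), as Definition~\ref{def:proxy} permits.
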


Equivalently, when the proxy error $\eta$ is negligible, any QPT adversary that distinguishes the constructed PRFU with non-negligible advantage can be used to construct a QPT distinguisher that breaks the security of either the underlying PRF or PRU.

Before discussing the existence of such Haar proxy, we first explain the hybrid underlying Theorem~\ref{thm:classical-lifting}. The first step is to replace the outer PRF by a uniformly random function. Instead of genuinely sampling from a uniformly random function, the proof maintains a lazy table indexed by the distinct labels in their order of being queried. Since an efficient adversary can make at most $q=\poly(\lambda)$ queries, this table can be efficiently implemented.

Then we build intermediate hybrids by replacing the PRU by the Haar proxy for labels appearing in the lazy table one-by-one. Since all Haar proxies are efficiently implementable, any adversary $\mathcal{A}$ that distinguishes between $\mathcal{O}_{G_k}$ and $\mathcal{O}_{\bm{H}}$ can be used to build distinguishers $\mathcal{B}_{j}$ for PRU and Haar random unitary. By the security of PRU, each hybrid is negligibly close to each other. After substituting Haar proxy for all PRU, the last hybrid is to simply replace Haar proxy by the true Haar random ensemble. The statistical distance in Definition~\ref{def:proxy} then implies the closeness of this last hybrid. 



The proxy is therefore a resource of the reduction, not a component of the family $G_k$. Its circuit, randomness, query bound, and accuracy parameter do not appear in the construction in  Eq.~\eqref{eq:classical-construction}, but appear only in the final security bound in Eq.~\eqref{eq:classical-bound}. For a fixed $q$, its information-theoretic guarantee is stronger than the computational guarantee of a PRU, but the proxy is allowed to depend on $q$ and need not remain Haar-like beyond those $q$ calls. This is analogous to the distinction between $q$-wise independent functions and pseudorandom functions. The query-bounded relaxation permits unconditional efficient constructions from relative-error unitary designs and hence introduces no additional cryptographic hardness assumption.

\begin{proposition}[Efficient forward Haar proxies]\label{prop:efficient-haar-proxies}
Let $n,q\geq1$ satisfy $C_0q\leq2^{n/4}$ where $C_0\geq1$ is a universal constant. And let $0<\eta<1$ be the proxy error. Then there is an efficiently samplable and implementable ensemble of $n$-qubit circuits that is a forward $(q,\eta)$ Haar proxy. Its sampling time and circuit size are in $\poly(n,q,\log(1/\eta))$. 
\end{proposition}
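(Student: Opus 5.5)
The plan is to build the proxy from the PFC ensemble of Metger--Poremba--Sinha--Yuen (MPSY24), with every ingredient that would normally be pseudorandom replaced by a query-bounded information-theoretic substitute, and then to amplify to reach error $\eta$. The MPSY24 construction is $U = P_\pi F_f C$: $C$ is a random Clifford, $F_f$ is a random binary phase oracle $\ket{x}\mapsto(-1)^{f(x)}\ket{x}$, and $P_\pi$ is a random permutation of the computational basis. MPSY24 show that, for truly random $f$ and $\pi$, the $q$-fold twirl of this ensemble is an $\epsilon$-approximate relative-error (multiplicative) unitary $q$-design with $\epsilon = O(q^2/2^{n/2})$, or a similar bound. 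This is where the hypothesis $C_0 q\le 2^{n/4}$ enters: it makes $\epsilon$ at most a small constant, say $1/2$, once $C_0$ is chosen suitably.

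Making the ensemble efficient uses only the fact that the $q$-th moment operator $\mathbb E[U^{\otimes q}\otimes \bar U^{\otimes q}]$ depends only on low-order statistics of $f$ and $\pi$. I would proceed as follows.

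\begin{enumerate}[label=(\roman*)]
\item Replace the random function $f$ by a $2q$-wise independent function (for example a random polynomial of degree $2q$ over $\mathbb F_{2^n}$, truncated to one bit). The $q$-fold twirl involves $f$ only through products of at most $2q$ evaluations $(-1)^{f(x_i)}$, so its moment operator is unchanged exactly.
\item Replace the random permutation $\pi$ by an efficiently computable, reversibly implementable approximate $2q$-wise independent permutation. One option is a Feistel or Kaplan--Naor--Reingold-style construction built from $2q$-wise independent functions. It has $\poly(n,q,\log(1/\delta))$ size and total variation distance $\delta$ to uniform on all $2q$-tuples.
\item Translate the $\delta$ error into an additive error on the moment operator, and hence on the distinguishing probability of any (adaptive) $q$-query algorithm. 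The adaptive algorithm's acceptance probability is a linear functional of the channel $\mathbb E[\mathcal O_U^{\otimes q}]$ composed with adversary channels.
\end{enumerate}

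Relating a relative-error design to adaptive query security is the next step. An $\epsilon$-relative-error $q$-design $\mathcal D$ satisfies $(1-\epsilon)\Phi_{\mathrm{Haar}}\preceq\Phi_{\mathcal D}\preceq(1+\epsilon)\Phi_{\mathrm{Haar}}$ in the completely positive order on the $q$-fold twirl. It is known (e.g.\ from the MPSY24 or Schuster--Haferkamp--Huang analyses) that this yields adaptive-query distinguishing advantage $O(\epsilon)$: one writes the adaptive interaction as a comb, obtains the Choi operator by linking, and uses positivity to pass the sandwich through. The problem is that this gives only a constant, not an arbitrary $\eta$.

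Amplification fixes this. I would take the product of $t$ independent samples, $D = D_t\cdots D_1$. Relative-error designs compose multiplicatively in error: if $\Phi_i = \Phi_{\mathrm H}+\Delta_i$ with $\pm\Delta_i\preceq\epsilon\Phi_{\mathrm H}$, then using $\Phi_{\mathrm H}\Phi_i=\Phi_i\Phi_{\mathrm H}=\Phi_{\mathrm H}$ (Haar invariance) the product equals $\Phi_{\mathrm H}+\Delta_t\cdots\Delta_1$. Bounding that remainder by $\epsilon^t\Phi_{\mathrm H}$ up to constants (the standard argument of Brandão--Harrow--Horodecki and Schuster et al.\ for relative-error amplification) gives error $(2\epsilon)^t$, say. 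Choosing $t=O(\log(1/\eta))$ together with $\delta\le\eta/(4t)$ yields total error at most $\eta$, circuit size $\poly(n,q,\log(1/\eta))$, and sampling from polynomially many random bits.

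I expect the main obstacle to be step (iii) and its interaction with amplification. The additive $\delta$ error from the approximate permutation is not a relative error, so it must be handled either after amplification (add $t\delta$ via a triangle inequality over hybrids) or by showing the approximate-PFC moment operator is itself within relative error $\epsilon+2^{O(nq)}\delta$. The first route is cleaner: amplify the ideal-$\pi$ ensemble, then swap each of the $t$ factors to its efficient version one at a time, each swap costing at most $\delta$ in adaptive advantage. This works because a $q$-query adaptive distinguisher's bias depends on each factor only through its $2q$-wise marginals.
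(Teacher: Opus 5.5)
Your amplification rests on a base case that is false. Ref.~\cite{MPSY24} does not show that a single PFC layer is a relative-error $q$-design. What it proves, and what the paper invokes (its Theorem~4.4 and Lemma~4.6), is an \emph{additive} diamond-norm bound $\alpha_{n,q}=C_0q\,2^{-n/2}$ on the $q$-fold twirl of the derandomized layer. In fact a single layer is far from relative error, however large $N$ is. $P_\pi\Phi_f$ only permutes and re-signs computational-basis amplitudes. So consider the nonadaptive test that feeds $\ket{0^n}$ into each of the $q$ calls, measures every output in the computational basis, and accepts when all $q$ outcomes coincide. Its acceptance probability is $\mathbb E_C\sum_z|\bra{z}C\ket{0^n}|^{2q}$, whatever the distributions of $f$ and $\pi$. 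This is a stabilizer-state moment, equal to $N^{1-q}\prod_{i=0}^{q-2}(2^i+1)\,(1+o(1))$ as $N\to\infty$, whereas Haar gives $q!\,N^{1-q}(1+o(1))$. The ratio is $5/4$ at $q=4$ and $9/4$ at $q=5$, and it grows like $2^{\Theta(q^2)}$. The test is a positive functional of the $q$-fold Choi moment, so a single layer has relative error at least $1/4$ for $q=4$ and above $1$ for $q\geq5$. No choice of $C_0$ gives $\epsilon\leq1/2$. Your recursion $(2\epsilon)^t$ is itself a correct consequence of Haar invariance and CP-monotonicity, but with this base case it never contracts, so $t=O(\log(1/\eta))$ layers is unsupported.

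The missing idea is to amplify in the additive norm and pay for the conversion only at the end, as the paper does. Take $D=W_m\cdots W_1$ with independent derandomized layers. Your identity $\prod_i(\Phi_{\mathrm{Haar}}+\Delta_i)=\Phi_{\mathrm{Haar}}+\prod_i\Delta_i$, together with submultiplicativity of the diamond norm, gives error $\alpha_{n,q}^m$ (Lemma~4.11 of Ref.~\cite{MPSY24}). Lemma~2.7 there then converts this to relative error at most $N^{2q}\alpha_{n,q}^m$. Overcoming the factor $N^{2q}=2^{2nq}$ forces $m=\Theta(q+\log(1/\eta)/n)$ layers rather than $O(\log(1/\eta))$. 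This is where $C_0q\leq2^{n/4}$ actually enters: it guarantees $\alpha_{n,q}\leq2^{-n/4}$, so $m\leq1+8q+\frac4n\log_2(2/\eta)$ suffices, which is still polynomial. Your final step, that relative error in the CP/Choi order gives an adaptive forward proxy through a positive comb functional, then closes the argument as in the paper.

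There is also a secondary gap. You have not shown that swapping in an approximately $2q$-wise independent permutation costs only $\delta$ of adaptive advantage. Queries are in superposition over exponentially many inputs, and a per-tuple total-variation bound controls the bias only up to a factor $N^{O(q)}$. This is why Ref.~\cite{MPSY24} takes $\delta_{\mathsf P}\leq c_{\mathsf P}\,q\,N^{-4q-1/2}$. Since the permutation's cost is polylogarithmic in $1/\delta$, choosing $\delta$ exponentially small in $nq$ repairs this without affecting efficiency.
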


Proposition~\ref{prop:efficient-haar-proxies} follows from the amplified relative-error PFC designs of Ref.~\cite{MPSY24}. More precisely, the proxy is an $m$-fold product of independent, efficiently derandomized PFC layers, with $m=O(q+\log(1/\eta)/n)$. Appendix~\ref{app:relative-design-proxy} gives the ensemble explicitly.

After establishing the unconditional Haar proxy, the security of classical-label adaptive PRFU follows from standard cryptography assumptions.

\begin{corollary}[Classical-label adaptive PRFU]\label{cor:classical-existence}
Suppose that $F$ is a post-quantum secure PRF, $V$ is an adaptive forward PRU, and $n(\lambda)=\omega(\log\lambda)$. Then the family $G$ defined in Eq.~\eqref{eq:classical-construction} is a classical-label adaptive forward PRFU according to Definition~\ref{def:classical-prfu}.
\end{corollary}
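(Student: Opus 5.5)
The corollary follows by combining Theorem~\ref{thm:classical-lifting} with Proposition~\ref{prop:efficient-haar-proxies}. The one substantive point is that a Haar proxy with negligible error $\eta$ must exist for every polynomial query bound $q$, and its parameters must satisfy the hypothesis $C_0 q\le 2^{n/4}$.

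Fix an arbitrary nonuniform QPT adversary $\mathcal A$. It makes at most $q(\lambda)\le p(\lambda)$ classical-label forward calls for some polynomial $p$. Choose the proxy error $\eta(\lambda):=2^{-\lambda}$, or any other negligible function with $\log(1/\eta)=\poly(\lambda)$.

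To apply Proposition~\ref{prop:efficient-haar-proxies}, I need $C_0 q(\lambda)\le 2^{n(\lambda)/4}$. Since $n(\lambda)=\omega(\log\lambda)$, the quantity $2^{n/4}$ grows faster than any polynomial in $\lambda$. Hence $C_0 p(\lambda)\le 2^{n(\lambda)/4}$ for all sufficiently large $\lambda$. For the finitely many small $\lambda$ where it fails, the advantage bound is vacuous and does not affect negligibility, since negligibility is an asymptotic statement. So for all large $\lambda$ a forward $(q,\eta)$ Haar proxy $\mathcal D_{q,\eta}$ exists, with sampling time and circuit size $\poly(n,q,\lambda)$.

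Theorem~\ref{thm:classical-lifting} needs QPT reductions $\mathcal B_F$ and $\mathcal B_1,\dots,\mathcal B_q$. Those reductions simulate proxy instances for the labels not handled by the challenge oracle, so the proxy must be efficient; the bound just established guarantees this.

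Invoking Eq.~\eqref{eq:classical-bound}, the advantage of $\mathcal A$ is at most
\[
\Adv_F^{\mathsf{prf}}(\mathcal B_F;\lambda)+\sum_{j=1}^{q}\Adv_V^{\mathsf{pru},\rightarrow}(\mathcal B_j;\lambda)+2q\eta .
\]
The three terms are handled as follows.
\begin{itemize}
\item \textbf{PRF term.} $\mathcal B_F$ is QPT and makes classical queries, so post-quantum PRF security makes this term negligible.
\item \textbf{PRU terms.} Each $\mathcal B_j$ is QPT and makes at most $q$ forward calls, so adaptive forward PRU security makes each term negligible. The sum has polynomially many terms, but a bare union over negligible functions could still fail: the $\mathcal B_j$ depend on $j$, and the sum must be uniformly negligible. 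The standard fix is a single distinguisher $\mathcal B^\ast$ that picks $j$ uniformly at random from $\{1,\dots,q\}$ and runs $\mathcal B_j$. Its advantage bounds the sum by $q\cdot\Adv_V^{\mathsf{pru},\rightarrow}(\mathcal B^\ast;\lambda)$, because the hybrids telescope, or because the advantages are bounded by $q$ times the averaged advantage. This is negligible.
\item \textbf{Proxy term.} $2q\eta\le 2p(\lambda)2^{-\lambda}$, which is negligible.
\end{itemize}

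Therefore $\Adv_G^{\mathsf{cl},\rightarrow}(\mathcal A;\lambda)$ is negligible for every such $\mathcal A$. Efficiency of $G_k$ is immediate from the efficiency of $F$ and $V$. Hence $G$ satisfies Definition~\ref{def:classical-prfu}.

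The main obstacle I expect is the uniformity of the hybrid sum. I would resolve it with the random-index distinguisher $\mathcal B^\ast$ rather than bounding each $\mathcal B_j$ separately. Meeting $C_0 q\le 2^{n/4}$ from $n=\omega(\log\lambda)$ is, by comparison, routine.
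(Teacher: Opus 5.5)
Your outline matches the paper's proof everywhere except the PRU-sum step. That step, which you flag as the crux, does not go through as written. Set $a_j:=\Pr_a[\mathcal B_j^{\mathcal O_{V_a}}=1]-\Pr_H[\mathcal B_j^{\mathcal O_H}=1]$. Your random-index distinguisher then has $\Adv_V^{\mathsf{pru},\rightarrow}(\mathcal B^\ast;\lambda)=\frac1q\bigl|\sum_{j}a_j\bigr|$, whereas Eq.~\eqref{eq:classical-bound} contains $\sum_j|a_j|$. The signed terms can cancel, already for $q=2$ with $a_1=-a_2\neq0$. So ``the advantages are bounded by $q$ times the averaged advantage'' is false, and $\sum_j\Adv_V^{\mathsf{pru},\rightarrow}(\mathcal B_j;\lambda)\le q\,\Adv_V^{\mathsf{pru},\rightarrow}(\mathcal B^\ast;\lambda)$ does not follow. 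You also cannot appeal to telescoping while treating Theorem~\ref{thm:classical-lifting} as a black box. Its statement says nothing linking the ideal branch of $\mathcal B_j$ to the real branch of $\mathcal B_{j+1}$.

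The paper repairs this using nonuniformity. It hardwires an index $j(\lambda)$ maximizing $|a_j|$. This single family of reductions, which has a common polynomial size bound, has advantage at least $\frac1q\sum_j|a_j|$, so the sum is at most $q(\lambda)$ times a negligible function.

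Your random-index idea can also be rescued, but only by opening Lemma~\ref{lem:first-occurrence-hybrid}. There the real branch of $\mathcal B_j$ is $\mathsf P_{j-1}$, its ideal branch is $\mathsf Q_j$, and $|p_{\mathsf Q_j}-p_{\mathsf P_j}|\le\eta$. Hence
\[
\sum_{j=1}^q a_j=(p_{\mathsf P_0}-p_{\mathsf P_q})+\sum_{j=1}^q(p_{\mathsf P_j}-p_{\mathsf Q_j})
\]
telescopes only up to $q\eta$. This gives $|p_{\mathsf P_0}-p_{\mathsf P_q}|\le q\,\Adv_V^{\mathsf{pru},\rightarrow}(\mathcal B^\ast;\lambda)+q\eta$, and therefore $\Adv_G^{\mathsf{cl},\rightarrow}(\mathcal A;\lambda)\le\Adv_F^{\mathsf{prf}}(\mathcal B_F;\lambda)+q\,\Adv_V^{\mathsf{pru},\rightarrow}(\mathcal B^\ast;\lambda)+2q\eta$. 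This route replaces the sum in Eq.~\eqref{eq:classical-bound} rather than bounding it, and it needs no nonuniform advice.

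Either fix completes your argument. The remaining steps are exactly as in the paper: the choice $\eta=2^{-\lambda}$, the condition $C_0q\le 2^{n/4}$ for large $\lambda$, and the PRF and proxy terms.
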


For each adversary, the proxy in the proof of Corollary~\ref{cor:classical-existence} can be selected after fixing the upper bound for the number of queries of the adversary. Proposition~\ref{prop:efficient-haar-proxies} may therefore be instantiated with this query bound and, for example, error $\eta=2^{-\lambda}$. The PRFU $G_{k}$ remains fixed and is independent of both choices.

\begin{corollary}[Conditional strong extension]\label{cor:classical-strong}
Suppose that $F$ is a post-quantum secure PRF, $V$ is a strong adaptive PRU, and $n(\lambda)=\omega(\log\lambda)$. Then the family $G$ defined in Eq.~\eqref{eq:classical-construction} is a classical-label adaptive strong PRFU.
\end{corollary}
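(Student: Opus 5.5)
We follow the hybrid chain behind Theorem~\ref{thm:classical-lifting}, now with inverse queries allowed. Each step is upgraded so that it preserves both the forward and the inverse oracle. The construction $G_k(x)=V_{F_k(x)}$ is fixed; we only strengthen the reduction.

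Fix a nonuniform QPT adversary $\mathcal A$ making at most $q$ classical-label queries. Each query is a pair $(x,\sigma)$ with direction $\sigma\in\{+,-\}$ chosen classically, answered by $G_k(x)^{\sigma}$ in the real experiment and by $H_x^{\sigma}$ in the ideal one. We run the following hybrids.

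\emph{Hybrid 0 to Hybrid 1.} Replace $F_k$ by a lazily sampled random function, indexed by distinct labels in order of first occurrence. The PRF distinguisher $\mathcal B_F$ queries $F$ once per new label. It implements $V_a$ or $V_a^\dagger$ itself, which is possible because $a\mapsto V_a$ is a deterministic efficient circuit and the inverse circuit is then also efficient. So $\mathcal B_F$ simulates both directions with at most $q$ classical function queries. Collisions among the seeds $R(x)$ need not be handled separately: the lazy table assigns the $j$-th distinct label an independent uniform seed $a_j$.

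\emph{Hybrids $j-1$ to $j$, for $j=1,\dots,q$.} The first $j-1$ new labels use independent samples from a \emph{strong} $(q,\eta)$ Haar proxy. The $j$-th label uses either $V_{a_j}$ or the proxy; later labels use fresh $V_a$. The distinguisher $\mathcal B_j$ routes every query on the $j$-th label, forward or inverse, to its strong PRU challenge. It simulates everything else itself, inverting the proxy circuits by reversing them. Each $\mathcal B_j$ interpolates between Haar and proxy through one triangle inequality, using the proxy's guarantee against unbounded adversaries with side processes (Definition~\ref{def:proxy}). This yields $\Adv^{\mathsf{pru},\pm}_V(\mathcal B_j)$ plus $O(\eta)$ per step. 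Finally, swapping each proxy for a true Haar unitary one label at a time costs another $\eta$ per label. This gives the same bound as Eq.~\eqref{eq:classical-bound} with $\pm$ advantages and $2q\eta$, possibly up to a constant factor.

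\textbf{Main obstacle.} Definition~\ref{def:proxy} and Proposition~\ref{prop:efficient-haar-proxies} only give \emph{forward} proxies, so we need an efficient proxy that is secure under mixed forward and inverse queries. First we try a strong relative-error design: the amplified PFC products of Ref.~\cite{MPSY24}. Their adjoint is again an efficiently implementable product of PFC layers, and relative-error $q$-designs should control mixed $U/U^\dagger$ moments via path-recording or strong-design bounds. If this is unavailable, we avoid proxies in the strong case. Instead we use the standard hybrid in which undisclosed labels are lazily assigned fresh independent strong-PRU keys, so every hybrid is efficient without a proxy. The reduction $\mathcal B_j$ then embeds its challenge at label $j$, and labels $<j$ receive a fresh independent strong PRU instance $V_{a'}$ rather than Haar. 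This changes the endpoint and requires a second telescoping pass from "all fresh keys" to "all Haar". In that pass, each step embeds the challenge at label $j$, while labels $>j$ use fresh PRU keys and labels $<j$ must be Haar, which again needs an efficient simulation. Hence the strong-proxy route is the main line, and its existence, with $n=\omega(\log\lambda)$ and $\eta=2^{-\lambda}$, is the crux. Once it holds, negligibility follows exactly as in Corollary~\ref{cor:classical-existence}.
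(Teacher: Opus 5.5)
\paragraph{Gap: the bidirectional proxy is assumed, not proved.} Your hybrid skeleton matches the paper's:
\begin{itemize}
\item replace $F_k$ by a lazy random function, with $\mathcal B_F$ reversing the known circuit of $V_a$ for inverse calls;
\item route both directions at rank $j$ to the strong PRU challenge;
\item swap proxies for Haar unitaries, which gives Eq.~\eqref{eq:app-strong-classical-bound} with loss $2q\eta$.
\end{itemize}
That part is routine. The real content of the corollary is the existence of an efficient \emph{bidirectional} $(q,\eta)$ Haar proxy, and you leave it as a hope that amplified PFC designs ``should'' control mixed $U/U^\dagger$ moments. This does not follow from anything earlier in the paper. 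The relative-order argument of Lemma~\ref{lem:relative-design-proxy} does not extend to inverse calls. An inverse slot contributes $J(U^\dagger)=SJ(U)^{\mathsf T}S^\dagger$. So for a direction string $s$, the relevant moment operator is a partial transpose $\mathcal R_s(\Omega_\mu^{(q)})$ of the forward one. Partial transposition does not preserve the Loewner order, so the sandwich $(1-\delta)\Omega_{\Haar}^{(q)}\preceq\Omega_{\mathcal D}^{(q)}\preceq(1+\delta)\Omega_{\Haar}^{(q)}$ gives no bound on the mixed-direction branches. Your fallback with fresh PRU keys also fails, as you note yourself: it still needs efficient Haar simulation at earlier labels.

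\paragraph{How to close it.} The missing idea is to give up positivity and pay a dimension factor that the amplification can absorb.
\begin{itemize}
\item Split the accepting event by the classical direction string $s\in\{+,-\}^q$. Each branch is a positive operator $T_s$ on the $q$ Choi slots with $\|T_s\|_\infty\le\operatorname{Tr}T_s\le N^{2q}$; the trace bound follows by substituting completely depolarizing slots.
\item Slot transposition and swaps preserve the Hilbert--Schmidt norm, so $\|\mathcal R_s(X)\|_1\le N^q\|X\|_2\le N^q\|X\|_1$.
\item A diamond-norm bound $\varepsilon$ on the $q$-fold twirl gives $\|\Omega_{\mathcal D}^{(q)}-\Omega_{\Haar}^{(q)}\|_1\le\varepsilon$.
\item Combining these yields $|p_{\mathcal D}-p_{\Haar}|\le 2^qN^{3q}\varepsilon$.
\end{itemize}
For the product of $m$ independent PFC layers, the diamond error is $\alpha_{n,q}^m$ with $\alpha_{n,q}=C_0q\,2^{-n/2}\le 2^{-n/4}$. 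Taking $m=O(q+\log(1/\eta)/n)$ layers makes $2^qN^{3q}\alpha_{n,q}^m\le\eta$, while both the sampled circuit and its reversal stay polynomial-size. Then take $\eta=2^{-\lambda}$ and $n=\omega(\log\lambda)$, and handle the sum of strong-PRU advantages by the maximizing-index argument. With these pieces, your hybrid chain proves the corollary.
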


To extend the argument to strong security, we require a bidirectional Haar proxy that remains statistically indistinguishable from Haar under adaptive queries to both the unitary and its inverse. Such proxies also follow from the amplified-design construction of~\cite{MPSY24}: sufficiently small diamond error implies statistical indistinguishability under adaptive forward and inverse queries, with the same asymptotic efficiency. Appendix~\ref{app:bidirectional-proxy} gives the conversion and completes the proof of Corollary~\ref{cor:classical-strong}.

\subsection{Instantiation from the PFC ensemble}
\label{subsec:classical-pfc}

We now instantiate the inner PRU in Eq.~\eqref{eq:classical-construction} by the PFC construction~\cite{MaHuang25}. The outer construction and its first-occurrence proof remain unchanged. Let $N=2^{n(\lambda)}$. For a permutation $\pi:\{0,1\}^{n(\lambda)}\rightarrow\{0,1\}^{n(\lambda)}$ and a Boolean function $h:\{0,1\}^{n(\lambda)}\rightarrow\{0,1\}$, define the following unitary operators, with both sums ranging over $z\in\{0,1\}^{n(\lambda)}$:
\begin{align}
P_\pi&:=\sum_z\ket{\pi(z)}\!\bra{z},\nonumber\\
\Phi_h&:=\sum_z(-1)^{h(z)}\ketbra{z}{z}.
\label{eq:pfc-operators}
\end{align}
The ideal permutation--function--Clifford ($\PFC$) ensemble is $P_\pi\Phi_h C$, where $\pi$ and $h$ are sampled uniformly from all permutations and Boolean functions, respectively, with the domains specified above and $C$ is an independent unitary $2$-design, for example a uniformly random Clifford~\cite{MPSY24}.

For the efficient version, we replace the random permutation and Boolean function by efficiently computable pseudorandom families. Let \(\{\pi_{a_P}\}\) be a strong quantum-secure pseudorandom permutation family with key length \(s_P(\lambda)\), and let \(\{h_{a_\Phi}\}\) be a quantum-secure pseudorandom function family with key length \(s_\Phi(\lambda)\). Let \(a_C\) be a string of \(s_C(\lambda)\) random bits used by an efficient sampler to generate a unitary \(C_{a_C}\) from an exact unitary \(2\)-design. These three lengths are polynomially bounded in \(\lambda\). We combine the two keys and the sampler's random bits into a single seed \(a=(a_P,a_\Phi,a_C)\) of total length \(s(\lambda)=s_P(\lambda)+s_\Phi(\lambda)+s_C(\lambda)\), and define
\begin{equation}
V_a^{\PFC}:=
P_{\pi_{a_{\mathsf P}}}\Phi_{h_{a_{\Phi}}}C_{a_{\mathsf C}}.
\label{eq:pfc-inner-pru}
\end{equation}
The permutation and Boolean-function primitives are secure against quantum superposition queries because the data register encounters them coherently. This is stronger than the outer PRF $F_k$, which is only required to be post-quantum secure.

For $t\leq N$, the distinguishing advantage is at most $4t(t-1)/(N+1)$ between the ideal PFC ensemble and Haar measure~\cite{MaHuang25}. Replacing its random permutation and function by the preceding quantum-secure families therefore makes Eq.~\eqref{eq:pfc-inner-pru} an adaptive forward PRU whenever $n(\lambda)=\omega(\log\lambda)$. Choose the outer PRF \(F_k:\{0,1\}^{d(\lambda)}\to\{0,1\}^{s(\lambda)}\), and split its output as
\begin{equation}
F_k(x)=(F_{k}^{(\mathsf{P})}(x),F_{k}^{(\mathsf F)}(x),F_{k}^{(\mathsf C)}(x)),
\end{equation}
where the three components have lengths \(s_P(\lambda)\), \(s_\Phi(\lambda)\), and \(s_C(\lambda)\), respectively. Then the resulting classical-label family is explicitly
\begin{equation}
G_k^{\PFC}(x):=P_{\pi_{F^{(\mathsf{P})}_{k}(x)}}\Phi_{h_{F^{(\mathsf{F})}_{k}(x)}}C_{F^{(\mathsf{C})}_{k}(x)}.
\label{eq:classical-pfc-prfu}
\end{equation}

\begin{corollary}[Classical-label PFC construction]\label{cor:classical-pfc}
Suppose the outer $F$ is a post-quantum secure PRF with classical oracle access, the permutation and Boolean-function families satisfy the quantum security assumptions above, and efficient forward Haar proxies satisfy the hypothesis of Corollary~\ref{cor:classical-existence}. Let $\varepsilon_{\mathsf C}$ be the total-variation distance between the implemented $2$-design sampler and its intended exact distribution, and assume $q\varepsilon_{\mathsf C}=\negl(\lambda)$ for every polynomial $q$. If $n(\lambda)=\omega(\log\lambda)$, then the family in Eq.~\eqref{eq:classical-pfc-prfu} is a classical-label adaptive forward PRFU. In particular, the three cryptographic primitives are known from quantum-secure one-way functions~\cite{Zhandry12QRF,Zhandry25PRP}.
\end{corollary}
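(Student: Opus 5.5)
The plan is to reduce Corollary~\ref{cor:classical-pfc} to Corollary~\ref{cor:classical-existence}. Equivalently, I apply Theorem~\ref{thm:classical-lifting} with inner family $V^{\PFC}_a$ from Eq.~\eqref{eq:pfc-inner-pru}. The construction in Eq.~\eqref{eq:classical-pfc-prfu} is literally $G_k(x)=V^{\PFC}_{F_k(x)}$ once the PRF output is parsed as $(a_P,a_\Phi,a_C)$. A uniformly random $F$ value gives a uniformly random seed $a$, so the generic theorem applies verbatim. The only remaining task is to show that $V^{\PFC}_a$ is an adaptive forward PRU with the implemented sampler. The Haar-proxy term $2q\eta$ is handled by Proposition~\ref{prop:efficient-haar-proxies}, taking $\eta=2^{-\lambda}$. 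This is valid since $C_0q\le 2^{n/4}$ holds for large $\lambda$ when $n=\omega(\log\lambda)$ and $q=\poly(\lambda)$.

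To establish the PRU property, fix a QPT single-instance distinguisher $\mathcal B$ making $t\le q$ forward calls, and run a four-step hybrid.
\begin{enumerate}[label=(\roman*)]
\item Replace the implemented Clifford sampler $C_{a_C}$ by an exact $2$-design sample. Since $\mathcal B$ sees only one sample, this costs at most $\varepsilon_{\mathsf C}$ in total variation. In the PRFU experiment this substitution recurs once per distinct label, giving at most $q\varepsilon_{\mathsf C}$ in total, which is negligible by assumption.
\item Replace $h_{a_\Phi}$ by a truly random Boolean function. The reduction simulates $P_{\pi_{a_P}}$ and $C$ itself, using its own keys, and implements $\Phi_h$ with one superposition query to the function oracle via phase kickback, uncomputing afterwards. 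The cost is a qPRF advantage.
\item Replace $\pi_{a_P}$ by a truly random permutation. The oracle $P_\pi$ acts in place, mapping $\ket z\mapsto\ket{\pi(z)}$. A standard-oracle qPRP gives only $\ket{z,y}\mapsto\ket{z,y\oplus\pi(z)}$, so the in-place action is obtained by computing $\pi(z)$ into an ancilla and then erasing $z$ with a $\pi^{-1}$ query. This is exactly why a strong qPRP is assumed, and the cost is a strong-qPRP advantage. The random $h$ from step~(ii) cannot be lazily sampled efficiently, so I simulate it by a $2t$-wise independent function. By Zhandry's result, this is perfectly indistinguishable to $t$ quantum queries, which keeps the reduction QPT.
\item Use the ideal-PFC bound of~\cite{MaHuang25}, $4t(t-1)/(N+1)$, which is negligible because $N=2^{n(\lambda)}$ is superpolynomial.
\end{enumerate}

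Summing these steps makes each $\Adv^{\mathsf{pru},\rightarrow}_V(\mathcal B_j)$ negligible, and plugging into Eq.~\eqref{eq:classical-bound} with $q$ terms keeps the total negligible. Existence of the three primitives follows from quantum-secure one-way functions via~\cite{Zhandry12QRF,Zhandry25PRP}. The post-quantum PRF is implied by the qPRF.

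The main obstacle is step~(iii): keeping every reduction efficient while the other ideal objects are still random. Specifically, the reduction must simulate a random function or permutation that it does not sample explicitly. I would resolve this with the order of hybrids above, swapping the function first while the permutation is still keyed. For the permutation step, the reduction then simulates the random function with a $2t$-wise independent family.

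A secondary subtlety is whether the sampler error should be charged per label or per seed. I would insert the sampler substitution as a separate hybrid at the lazy-table stage of Theorem~\ref{thm:classical-lifting}. This makes the $q\varepsilon_{\mathsf C}$ term appear explicitly and avoids folding it into each $\mathcal B_j$.
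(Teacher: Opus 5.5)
Your proposal is correct and is essentially the paper's proof. It applies Theorem~\ref{thm:classical-lifting} with inner family $V^{\PFC}_a$ and bounds each inner PRU advantage by three steps: a qPRF hybrid via phase kickback, a strong-qPRP hybrid that realizes the in-place permutation with one forward and one inverse query while simulating the random phase by a $2q$-wise independent function, and the Ma--Huang ideal-PFC bound, with $\varepsilon_{\mathsf C}$ charged once per inner hybrid. One small difference: the paper absorbs the sampler error into the inner-PRU advantage rather than using your separate lazy-table hybrid, and it is cleanest to make the switch to the exact $2$-design last, just before the ideal-PFC bound, so that every computational reduction only samples the efficient implemented $C_{a_{\mathsf C}}$; either accounting gives $q\varepsilon_{\mathsf C}$.
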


Appendix~\ref{app:classical-pfc} gives the resulting quantitative bound. Any negligible statistical bias caused by representing the $2$-design sampler with a fixed-length random seed is absorbed into the inner-PRU advantage.

\section{Construction and security with coherent labels}
\label{sec:coherent-construction}

The first-occurrence proof of Theorem~\ref{thm:classical-lifting} uses the fact that a $q$-query adversary can present at most $q$ distinct classical labels. This premise disappears under Eq.~\eqref{eq:coherent-oracle-access}: one query may have support on every $x\in\X_\lambda$, and the amplitudes belonging to different labels remain available for interference. Consequently there is no online ordering of a polynomial number of ``new labels'' on which to run the classical hybrid.


To construct a PRFU supporting coherent label access, we build on the PFC ensemble introduced in the preceding subsection. We use the indexed path-recording technique~\cite{FoxmanEtAl26}, allowing us to analyze queries in which different labels occur in coherent superposition. This indexed path-recording argument establishes that the ideal indexed PFC ensemble is statistically indistinguishable from a family of independent Haar-random unitaries under adaptive coherent-label queries. We then use computational hybrids to establish security of the efficient construction, assuming a quantum-secure pseudorandom function and a  tweakable in-place pseudorandom permutation secure under coherent queries to both the label and data registers.

\subsection{Quantum-secure primitives and the construction}
\label{subsec:coherent-primitives}
Unlike the classical-label construction, which uses an outer PRF to generate the inner keys, our coherent-label construction incorporates the label directly into the phase and permutation primitives to form the indexed PFC ensemble. The secret keys are sampled once and fixed throughout the interaction, while queries may involve superpositions of both labels and data. 

We first introduce the ideal indexed PFC ensemble. Recall that $\X_\lambda\subseteq\{0,1\}^{d(\lambda)}$ is the label domain, and let $\mathcal Z_\lambda:=\{0,1\}^{n(\lambda)}$ be the computational-basis domain of the data register, with $N:=|\mathcal Z_\lambda|=2^{n(\lambda)}$. Independently sample a uniformly random Boolean function $f:\X_\lambda\times\mathcal Z_\lambda\to\{0,1\}$, a family $\bm\pi=(\pi_x)_{x\in\X_\lambda}$ of independent uniformly random permutations of $\mathcal Z_\lambda$, and a unitary $C$ from an efficiently samplable exact unitary $2$-design with efficient circuit implementations. All three choices remain fixed throughout the interaction. Writing $f_x(z):=f(x,z)$, define the joint label--data operators as
\begin{align}
\widehat\Phi_f\ket{x,z}&=(-1)^{f_{x}(z)}\ket{x,z},\nonumber\\
\widehat P_{\bm\pi}\ket{x,z}&=\ket{x,\pi_x(z)}.
\label{eq:coherent-ideal-primitives}
\end{align}
We use hats for operators on the joint label and data registers. The data-register operators $P_\pi$ and $\Phi_h$ retain the notation of Eq.~\eqref{eq:pfc-operators}. The ideal indexed PFC ensemble consists of the unitaries
\begin{equation}
\begin{split}
W_{\mathrm{PFC}}&:=\widehat P_{\bm\pi}\widehat\Phi_f(I_L\otimes C)\\
&=\sum_{x\in\X_\lambda}\ketbra{x}{x}\otimes P_{\pi_x}\Phi_{f_x}C.
\end{split}
\label{eq:controlled-pfc-ensemble}
\end{equation}
For each fixed label $x$, the data-register unitary $P_{\pi_x}\Phi_{f_x}C$ has the usual random PFC distribution. The phase functions $f_x$ and permutations $\pi_x$ are independent across labels, whereas a single $C$ is shared by all labels. Repeated queries at a label reuse the same unitary.

To obtain an efficient counterpart, we replace the random phase function and permutation family by keyed pseudorandom primitives. For the phase layer, let $\{g_{k_\Phi}\}_{k_\Phi}$ be a Boolean qPRF family with $g_{k_\Phi}:\X_\lambda\times\mathcal Z_\lambda\to\{0,1\}$ and secret key $k_\Phi$. Such a family is obtained by restricting a standard Boolean qPRF on $d(\lambda)+n(\lambda)$ input bits to the concatenated encodings of pairs in $\X_\lambda\times\mathcal Z_\lambda$. Write $g_{k_\Phi,x}(z):=g_{k_\Phi}(x,z)$. The corresponding phase unitary acts as
\begin{equation}
\widehat\Phi_{g_{k_\Phi}}\ket{x,z}
=(-1)^{g_{k_\Phi,x}(z)}\ket{x,z}.
\label{eq:coherent-phase-oracle}
\end{equation}
Define the distinguishing advantage $\Adv_g^{\mathsf{qprf}}(\mathcal B_\Phi;\lambda)$ by the probability difference similar to Eq.~\eqref{eq:pqprf-advantage}, with challenge oracles replaced by the phase unitary $\widehat\Phi_{g_{k_\Phi}}$ and $\widehat\Phi_{f}$. The distinguisher $\mathcal B_\Phi$ is allowed to make adaptive quantum queries, including superpositions over both labels and data. Standard qPRF security ensures that every nonuniform QPT adversary has negligible advantage in distinguishing it from the phase unitary for a uniformly random Boolean function on the same domain. 

For the permutation layer, let $\{\Pi_{k_P}\}_{k_P}$ be a keyed family with $\Pi_{k_P}:\X_\lambda\times\mathcal Z_\lambda\to\mathcal Z_\lambda$, such that for every label $x$ $\Pi_{k_P,x}:=\Pi_{k_P}(x,\cdot)$ is a permutation on $\mathcal{Z}_{\lambda}$. Its in-place unitary is
\begin{equation}
\widehat P_{\Pi_{k_P}}\ket{x,z}
=\ket{x,\Pi_{k_P,x}(z)}.
\label{eq:tweakable-inplace-oracle}
\end{equation}
Assume that this unitary can be implemented efficiently given $k_P$, with any auxiliary workspace returned to its initial state. For a distinguisher $\mathcal B_P$ with adaptive forward-only quantum access, define its distinguishing advantage as
\begin{align}
\Adv_\Pi^{\ipqTPRP}(\mathcal B_P;\lambda)
&:=\Biggl|\Pr_{k_P}\!\left[\mathcal B_P^{\widehat P_{\Pi_{k_P}}}(1^\lambda)=1\right]\nonumber\\[-1mm]
&\hspace{10mm}-\Pr_{\bm\pi}\!\left[\mathcal B_P^{\widehat P_{\bm\pi}}(1^\lambda)=1\right]\Biggr|.
\label{eq:coherent-primitive-advantages}
\end{align}
Here $k_P$ is sampled uniformly once, and $\bm\pi$ is sampled as in the ideal ensemble above. In either experiment, queries may be in superposition over both labels and data. Then the security of the in-place permutation can be defined as follows.

\begin{definition}[Forward in-place tweakable qPRP]
\label{def:forward-inplace-tprp}
The family $\{\Pi_{k_P}\}_{k_P}$ is a forward in-place tweakable qPRP if $\Adv_\Pi^{\ipqTPRP}(\mathcal B_P;\lambda)$ is negligible in $\lambda$ for every nonuniform QPT distinguisher $\mathcal B_P$ with adaptive forward in-place access to the in-place permutation oracle in Eq.~\eqref{eq:tweakable-inplace-oracle}.
\end{definition}
The in-place requirement is essential to this assumption. As emphasized by Ref.~\cite{MaHuang24PolylogNote},  security under forward quantum queries to the standard XOR permutation oracle does not by itself establish security under in-place queries. Access to both forward and inverse XOR permutation oracles, however, implements an in-place query using one call in each direction, by computing the permuted value, swapping registers, and uncomputing the original input~\cite{KashefiEtAl02,HolmanEtAl25}. This simulation also preserves superpositions over labels, so strong tweakable qPRP security against coherent queries to both labels and data implies Definition~\ref{def:forward-inplace-tprp}. We require only forward in-place security, while retaining joint security across coherently queried labels. Security for each fixed label separately does not establish this guarantee.

Finally, sample $k_\Phi$, $k_P$, and $C$ independently, with $C$ drawn from the same exact unitary $2$-design as in the ideal ensemble, and let $k=(k_\Phi,k_P,\operatorname{desc}(C))$. The efficient construction is
\begin{equation}
\begin{split}
W_k^{\mathsf{coh}}&=\widehat P_{\Pi_{k_P}}\widehat\Phi_{g_{k_\Phi}}(I_L\otimes C)\\
&=\sum_{x\in\X_\lambda}\ketbra{x}{x}\otimes G_k^{\mathsf{coh}}(x),
\end{split}
\label{eq:coherent-pfc-construction}
\end{equation}
where $G_k^{\mathsf{coh}}(x)=P_{\Pi_{k_P,x}}\Phi_{g_{k_\Phi,x}}C$ is defined similarly to Eq.~\eqref{eq:pfc-operators} and Eq.~\eqref{eq:pfc-inner-pru}. This $W_{k}^{\mathsf{coh}}$ is a label-controlled unitary, also called a block-diagonal unitary. Relative to the decomposition $\mathcal H_L\otimes\mathcal H_A=\bigoplus_{x\in\X_\lambda}(\operatorname{span}\{\ket{x}\}_L\otimes\mathcal H_A)$, the unitary $G_k^{\mathsf{coh}}(x)$ is the block associated with label $x$ and acts only on the data register $A$. 
As in the ideal ensemble in Eq.~\eqref{eq:controlled-pfc-ensemble}, the same $C$ is shared by all labels, and the key remains fixed throughout the interaction. Replacing the keyed phase function and permutation family by their random counterparts yields exactly $W_{\mathrm{PFC}}$.

\subsection{Security proof}
\label{subsec:coherent-security}

We first establish information-theoretic closeness between the ideal indexed PFC ensemble and the function-like Haar family. Let $L$ and $A$ be the label and data registers, and let $B_j$ contain the adversary's private workspace, reference systems, and purifying registers at round $j$. By purification and deferred measurement, an adaptive $q$-query strategy can be represented by isometries $\mathcal A_j:\mathcal H_{LAB_{j-1}}\to\mathcal H_{LAB_j}$. Starting from a fixed normalized state $\ket{\psi_0}_{LAB_0}$ independent of the oracle, define
\begin{equation}
\ket{\Psi_q^W}:=(W_{LA}\mathcal A_q)\cdots(W_{LA}\mathcal A_1)\ket{\psi_0}_{LAB_0},
\label{eq:coherent-adversary-state-main}
\end{equation}
where $W$ is either $W_{\mathrm{PFC}}$ from Eq.~\eqref{eq:controlled-pfc-ensemble} or $W_{\mathrm{Haar}}$ from Eq.~\eqref{eq:classical-controled-unitary}, and identities on the current workspace are implicit. The state lies on $LAB_q$; any final channel or measurement can be appended without increasing trace distance. The two averaged output states are
\begin{align}
\rho_{\mathrm{PFC}}^{\mathcal A}&:=\mathbb E_{f,\bm\pi,C}\!\left[\ketbra{\Psi_q^{W_{\mathrm{PFC}}}}{\Psi_q^{W_{\mathrm{PFC}}}}\right],\nonumber\\
\rho_{\mathrm{Haar}}^{\mathcal A}&:=\mathbb E_{\bm H}\!\left[\ketbra{\Psi_q^{W_{\mathrm{Haar}}}}{\Psi_q^{W_{\mathrm{Haar}}}}\right],
\label{eq:coherent-averaged-output-states}
\end{align}
where the first expectation includes the random function, permutation family, and common design element, and the second includes independent $H_x\leftarrow\Haar(\mathrm U(N))$. Both experiments use the same initial state and adversarial isometries. Their trace distance satisfies the following bound.

\begin{theorem}[PFC--Haar closeness under coherent-label access]
\label{thm:pfc-haar-closeness}
For every possibly unbounded adaptive adversary $\mathcal A$ making $q<N$ forward oracle calls, the averaged output states in Eq.~\eqref{eq:coherent-averaged-output-states} satisfy
\begin{equation}
\td\!\left(\rho_{\mathrm{PFC}}^{\mathcal A},\rho_{\mathrm{Haar}}^{\mathcal A}\right)\leq\frac{2q(q-1)}{N+1},
\label{eq:pfc-haar-closeness-bound}
\end{equation}
where $\td(\rho,\sigma):=\frac12\|\rho-\sigma\|_1$ and $N=|\mathcal{Z}_{\lambda}|=2^{n(\lambda)}$. 
\end{theorem}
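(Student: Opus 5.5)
We plan to follow the indexed path-recording strategy of Refs.~\cite{MaHuang25,FoxmanEtAl26}, comparing both experiments to a common purified ``indexed path-recording'' oracle. Concretely, we introduce a purification register $S$ storing a relation $R\subseteq \X_\lambda\times\mathcal Z_\lambda\times\mathcal Z_\lambda$ of recorded triples $(x,z,y)$, initialised empty, and an isometry $V$ on $LAS$ acting as
\begin{equation}
V\ket{x,z}_{LA}\ket{R}_S\propto\sum_{y\notin \Img(R_x)}\ket{x,y}_{LA}\ket{R\cup\{(x,z,y)\}}_S ,
\end{equation}
where $R_x$ is the sub-relation carrying label $x$. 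The key structural observation is that $V$ acts label-blockwise: on the block labelled $x$ it is exactly the single-instance path-recording isometry of Ref.~\cite{MaHuang25}, applied only to the $x$-component of the record. Coherent superpositions over labels are then automatically tracked, because each branch of the superposition records its own $(x,z,y)$ triple in $S$.

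The proof proceeds in two halves, each giving a bound of $q(q-1)/(N+1)$, and the triangle inequality then yields Eq.~\eqref{eq:pfc-haar-closeness-bound}.

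\emph{Haar side.} We would show that $\rho^{\mathcal A}_{\mathrm{Haar}}$ is within $q(q-1)/(N+1)$ of the reduced state $\operatorname{Tr}_S$ of $(V\mathcal A_q)\cdots(V\mathcal A_1)\ket{\psi_0}$. Since the $H_x$ are independent, the averaged $q$-fold twirl factorises over labels. Restricted to configurations in which the total number of queries is $q$ and label $x$ is hit $q_x$ times, the argument of Ref.~\cite{MaHuang25} compares the Haar twirl with the path-recording projector on distinct-output relations. The error there arises only from the non-distinct (collision) subspace, and we expect it to be controlled by a sum over labels of $q_x(q_x-1)/(N+1)$. This sum is at most $q(q-1)/(N+1)$ because $\sum_x q_x=q$.

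To handle adaptivity and coherence across labels, we would not split by $q_x$ classically. Instead we would adapt the operator-level inequalities of Ref.~\cite{FoxmanEtAl26}, in which the distinct-subspace projector commutes with the block structure of $W_{\mathrm{Haar}}$. Summing the per-query deviation of $V$ from the Haar oracle restricted to the distinct subspace then gives a telescoping bound. Relations that share an output $y$ across \emph{different} labels are harmless, since distinctness is needed only within a label; this is what keeps the bound label-independent.

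\emph{PFC side.} We would show that $\rho^{\mathcal A}_{\mathrm{PFC}}$ is within $q(q-1)/(N+1)$ of the same path-recording state. The random permutation family $\bm\pi$ and the random function $f$ are independent across labels, so averaging over $(f,\bm\pi)$ on each label block produces the ``$P\Phi$'' purification of Ref.~\cite{MaHuang25}. That purification coincides with $V$ on inputs whose recorded $z$-values are distinct within each label, and it differs only by a term supported on collisions of inputs.

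The shared $2$-design $C$ is then used to show that the collision amplitude is small. For each pair of queries $i<j$ landing in the same label, the second-moment property of $C$ bounds the squared overlap of the pre-$C$ inputs by $O(1/(N+1))$. Summing over $\binom q2$ pairs gives the claimed $q(q-1)/(N+1)$ after the constant accounting of Ref.~\cite{MaHuang25}. Pairs in different labels contribute nothing, because $f_x,\pi_x$ and $f_{x'},\pi_{x'}$ are independent and the recorded triples in those blocks never need to be distinct from one another.

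The main obstacle is this PFC step. A single $C$ is shared across all labels, so the label blocks are \emph{not} independent on the PFC side, and coherent label superpositions allow interference between $C$-twirled branches. Our first attempt would be to apply the $2$-design property to the joint operator $(I_L\otimes C)^{\otimes 2}$ acting on pairs of queries, conditioning on the $L$-projections $\ketbra{x}{x}\otimes\ketbra{x}{x}$. The aim is to show that the relevant collision operator is block-diagonal in $(x,x')$ and vanishes unless $x=x'$. Reducing to the single-label estimate of Ref.~\cite{MaHuang25} on each diagonal block should then give the desired bound.
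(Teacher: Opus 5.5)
\textbf{PFC half.} This half matches the paper's argument. It uses a label-indexed recorder that excludes previously used outputs only within the queried label. It uses a purified permutation--phase oracle that agrees with the recorder exactly on records whose inputs are distinct within each label. And it uses a collision operator that couples only equal labels.

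The shared $C$ you single out as the main obstacle is actually harmless. The projector $Q_{jk}$ onto equal label \emph{and} equal input in record slots $j,k$ has only diagonal label blocks. Conjugating it by $\widehat C_j\otimes\widehat C_k$ therefore involves a single marginal $C_x$ per block. The second-moment identity then gives $\mathbb E_{\bm C}[(\widehat C_j^\dagger\otimes\widehat C_k^\dagger)Q_{jk}(\widehat C_j\otimes\widehat C_k)]\preceq\tfrac{2}{N+1}I$, regardless of correlations among the $C_x$. Summing over $\binom q2$ pairs gives $\beta_q=q(q-1)/(N+1)$.

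What your sketch leaves implicit is why you may twirl at all. In the PFC experiment, inputs to later queries depend on $C$ through earlier outputs. So ``the second-moment property of $C$ bounds the overlap of the pre-$C$ inputs'' is not valid there. The paper instead computes the collision weight in the recorder experiment, using label-indexed right-unitary invariance:
\[
\ket{\Psi_q^{V\widehat D}}=\Gamma_q(\bm D)_E\ket{\Psi_q^{V}},
\]
where $\Gamma_q(\bm D)$ applies $D_{x_i}$ to each recorded input according to its recorded label. This makes the record a $C$-independent state conjugated by $\Gamma_q(\bm C)$, so the twirl is legitimate. Your ``coincides with $V$'' step, i.e.\ exact agreement of the two experiments after projecting onto good records, then transfers this good weight to the PFC experiment. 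Finally, $\|\tr_E(\rho)-\tr_E(Q\rho Q)\|_1=1-\operatorname{Tr}(Q\rho)$, applied through the common subnormalized good state, converts the weight linearly into a trace-norm bound $2\beta_q$. That is trace distance $\beta_q$, with no square-root loss.

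\textbf{Haar half: this is the genuine gap.} As written it is a hope rather than an argument, for three reasons. Under coherent label access there is no classical $q_x$ to sum over. The Haar moment $\mathbb E[W_{\mathrm{Haar}}^{\otimes q}\otimes\overline{W}_{\mathrm{Haar}}^{\otimes q}]$ couples different label assignments on the ket and bra sides, so ``the twirl factorises over labels'' does not by itself reduce you to the single-label estimate. And a per-query telescoping hybrid gives no reason to expect exactly $q(q-1)/(N+1)$.

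The missing idea is that no separate Haar analysis is needed. State the PFC-side comparison for an arbitrary family $\bm C=(C_x)_x$, sampled independently of $f$, $\bm\pi$ and the adversary, whose marginals are exact $2$-designs with arbitrary cross-label correlations. This is exactly what the block-diagonal collision bound above delivers. Then instantiate it twice:
\begin{itemize}
\item With $C_x=C$ you get $W_{\mathrm{PFC}}$.
\item With $C_x=H_x$ independent Haar, condition on $(f,\bm\pi)$. By left invariance, the unitaries $P_{\pi_x}\Phi_{f_x}H_x$ are i.i.d.\ Haar, so $W_{\PFC(\bm H)}$ has exactly the law of $W_{\mathrm{Haar}}$. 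Hence $\rho_{\mathrm{Haar}}^{\mathcal A}$ is within trace norm $2\beta_q$ of the same recorder state $\rho_V^{\mathcal A}$.
\end{itemize}
The triangle inequality then gives trace norm $4\beta_q$, i.e.\ trace distance $2q(q-1)/(N+1)$.
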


Appendix~\ref{app:coherent} gives a self-contained path-recording proof of this bound by recording triples $(x,z,y)$ and enforcing injectivity separately for each label. A right-unitary invariance identity transfers the common $C$ to the recorded input coordinates. Its exact second moment bounds the probability of a repeated recorded input within the same label by $\beta_q=q(q-1)/(N+1)$. On records without such collisions, the purified permutation--phase experiment agrees exactly with the recorder under an isometry of the hidden registers. Comparing both the indexed PFC experiment and the function-like Haar experiment with this same recorder gives Eq.~\eqref{eq:pfc-haar-closeness-bound}, with no dependence on the number of labels.

For a coherent-label adversary $\mathcal A$, let
\begin{align}
\Adv_G^{\mathsf{coh},\rightarrow}(\mathcal A;\lambda)
&:=\Bigl|\Pr_k\!\left[\mathcal A^{\mathcal O_{\ket{G_k^{\mathsf{coh}}}}}(1^\lambda)=1\right]\nonumber\\[-1mm]
&\hspace{10mm}-\Pr_{\bm H}\!\left[\mathcal A^{\mathcal O_{\ket{\bm H}}}(1^\lambda)=1\right]\Bigr|.
\label{eq:coherent-construction-advantage}
\end{align}

We first bound the distinguishing advantage for any efficient adversary.

\begin{theorem}[Coherent-label PFC construction]
\label{thm:coherent-pfc-construction}
For every nonuniform QPT adversary $\mathcal A$ making at most $q<N$ adaptive forward calls to the coherent-label oracle, there exist nonuniform QPT distinguishers $\mathcal B_\Phi$ and $\mathcal B_P$, making at most $q$ phase-oracle queries and $q$ forward in-place tweakable-permutation queries, respectively, such that
\begin{align}
\Adv_G^{\mathsf{coh},\rightarrow}(\mathcal A;\lambda)
&\leq\Adv_g^{\mathsf{qprf}}(\mathcal B_{\Phi};\lambda)+\Adv_\Pi^{\ipqTPRP}(\mathcal B_P;\lambda)\nonumber\\[-1mm]
&\quad+\frac{2q(q-1)}{N+1}.
\label{eq:coherent-pfc-security-bound}
\end{align}
\end{theorem}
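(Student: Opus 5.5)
The proof is a three-step hybrid argument using the triangle inequality. Define experiments $\mathsf{H}_0,\mathsf{H}_1,\mathsf{H}_2,\mathsf{H}_3$:
\begin{itemize}
\item $\mathsf{H}_0$: $\mathcal A$ interacts with the real oracle $W_k^{\mathsf{coh}}=\widehat P_{\Pi_{k_P}}\widehat\Phi_{g_{k_\Phi}}(I_L\otimes C)$.
\item $\mathsf{H}_1$: the keyed phase $\widehat\Phi_{g_{k_\Phi}}$ is replaced by the ideal phase $\widehat\Phi_f$ for a uniformly random Boolean $f$ on $\X_\lambda\times\mathcal Z_\lambda$.
\item $\mathsf{H}_2$: additionally, $\widehat P_{\Pi_{k_P}}$ is replaced by the ideal indexed permutation $\widehat P_{\bm\pi}$, so the oracle is exactly $W_{\mathrm{PFC}}$ of Eq.~\eqref{eq:controlled-pfc-ensemble}.
\item $\mathsf{H}_3$: the oracle is $W_{\mathrm{Haar}}$.
\end{itemize}
The advantage in Eq.~\eqref{eq:coherent-construction-advantage} is then at most the sum of the three consecutive gaps.

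\emph{Gap $\mathsf{H}_0$ to $\mathsf{H}_1$.} Build $\mathcal B_\Phi$ with oracle access to a phase unitary $\widehat\Phi_\star$, where $\star$ is either $g_{k_\Phi}$ or $f$.
\begin{itemize}
\item $\mathcal B_\Phi$ samples $k_P$ and $C$ itself.
\item It runs $\mathcal A$, answering each coherent query on $LA$ by applying $I_L\otimes C$, then one call to $\widehat\Phi_\star$ on the joint label--data register, then $\widehat P_{\Pi_{k_P}}$. The last step is efficient given $k_P$, and any workspace is returned clean.
\item It outputs $\mathcal A$'s bit.
\end{itemize}
This simulation is exact in both worlds, is QPT, and uses at most $q$ phase queries, so the gap equals $\Adv_g^{\mathsf{qprf}}(\mathcal B_\Phi;\lambda)$. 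Exact simulation relies on the phase oracle in Eq.~\eqref{eq:coherent-phase-oracle} acting in place on $\ket{x,z}$. That is the form assumed for $\Adv_g^{\mathsf{qprf}}$; it also follows from standard XOR access via one query with an ancilla in $\ket{-}$.

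\emph{Gap $\mathsf{H}_1$ to $\mathsf{H}_2$.} Build $\mathcal B_P$ analogously, with oracle $\widehat P_\star$, where $\star$ is either $\Pi_{k_P}$ or $\bm\pi$. It samples $C$ and must apply $\widehat\Phi_f$ for a uniformly random $f$ on an exponential domain, which it cannot do efficiently. I expect this to be the main technical point. The remedy: $\mathcal A$ makes at most $q$ queries, so $\widehat\Phi_f$ is applied at most $q$ times, and Zhandry's lemma says a $2q$-wise independent function is perfectly indistinguishable from a random one under $q$ quantum queries. So $\mathcal B_P$ samples $f$ from an efficient $2q$-wise independent family, for instance a random polynomial of degree $2q-1$ over $\mathrm{GF}(2^{d+n})$, truncated to one bit. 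The output distribution is then identical to $\mathsf{H}_1$ and $\mathsf{H}_2$ respectively. Two checks are needed:
\begin{itemize}
\item The $2q$-wise argument must apply to the joint sequence of oracle uses, which interleave a fixed unitary independent of $f$. It does, because the polynomial method bounds the degree in $f$ of the final density matrix by $2q$, whatever the $f$-independent operations between calls.
\item Nonuniformity and efficiency are preserved.
\end{itemize}
$\mathcal B_P$ makes at most $q$ forward in-place queries, so this gap is $\Adv_\Pi^{\ipqTPRP}(\mathcal B_P;\lambda)$. Alternatively, one could keep $f$ keyed by a qPRF inside $\mathcal B_P$ and charge an extra qPRF term, but the $2q$-wise substitution avoids that and matches the stated bound.

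\emph{Gap $\mathsf{H}_2$ to $\mathsf{H}_3$.} Apply Theorem~\ref{thm:pfc-haar-closeness} directly. Purify $\mathcal A$ into isometries as in Eq.~\eqref{eq:coherent-adversary-state-main}. The acceptance probabilities differ by at most $\td(\rho_{\mathrm{PFC}}^{\mathcal A},\rho_{\mathrm{Haar}}^{\mathcal A})\le 2q(q-1)/(N+1)$, valid since $q<N$. Summing the three gaps gives Eq.~\eqref{eq:coherent-pfc-security-bound}.
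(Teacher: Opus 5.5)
Your proposal is correct and follows the paper's proof essentially step for step. It uses the same four hybrids, the same phase-oracle reduction $\mathcal B_\Phi$ (including the phase-kickback remark), and the same substitution of an efficient $2q$-wise independent phase function in $\mathcal B_P$, justified by the degree-$2q$ polynomial argument so that no extra qPRF term appears; it then concludes with Theorem~\ref{thm:pfc-haar-closeness} for the final statistical step.
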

The computational proof uses two hybrids. First, $g_{k_\Phi}$ is replaced by a uniformly random Boolean function through the phase-oracle challenge of $\mathcal B_\Phi$. Second, the keyed permutation is replaced by independent uniformly random permutations through the forward in-place tweakable-permutation challenge of $\mathcal B_P$. An efficient $2q$-wise independent phase function makes the second reduction efficient while preserving its acceptance probabilities exactly. The resulting ideal experiment is Eq.~\eqref{eq:controlled-pfc-ensemble}, to which Theorem~\ref{thm:pfc-haar-closeness} applies. Appendix~\ref{app:coherent-computational} gives the reductions and explains why the phase-oracle advantage is bounded by standard qPRF security using one function-oracle query per phase query.

\begin{corollary}[Coherent-label adaptive PRFU]
\label{cor:coherent-existence}
Suppose that $g_{k_{\Phi}}:\mathcal{X}_{\lambda}\times\mathcal{Z}_{\lambda}\rightarrow\{0,1\}$ is a quantum-secure pseudorandom function, $\Pi_{k_{P}}:\mathcal{X}_{\lambda}\times\mathcal{Z}_{\lambda}\rightarrow\mathcal{Z}_{\lambda}$ is an in-place tweakable quantum secure pseudorandom permutation as defined in Definition~\ref{def:forward-inplace-tprp}, and $n(\lambda)=\omega(\log\lambda)$. Then the family in Eq.~\eqref{eq:coherent-pfc-construction} is a coherent-label adaptive forward PRFU according to Definition~\ref{def:coherent-prfu}.
\end{corollary}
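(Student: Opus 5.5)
The corollary should follow directly from Theorem~\ref{thm:coherent-pfc-construction}. The plan is to check its hypotheses, then show that each term on the right-hand side of Eq.~\eqref{eq:coherent-pfc-security-bound} is negligible.

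First, fix an arbitrary nonuniform QPT adversary $\mathcal A$ for the coherent-label experiment of Definition~\ref{def:coherent-prfu}. Since $\mathcal A$ is efficient, it makes at most $q=q(\lambda)\le p(\lambda)$ oracle calls for some polynomial $p$. Because $n(\lambda)=\omega(\log\lambda)$, we have $N=2^{n(\lambda)}$ superpolynomial, so $q<N$ for all sufficiently large $\lambda$. This meets the hypothesis of Theorem~\ref{thm:coherent-pfc-construction}; finitely many small $\lambda$ are irrelevant for negligibility.

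Next, we must check that the construction in Eq.~\eqref{eq:coherent-pfc-construction} is an efficient keyed family in the sense of Sec.~\ref{sec:definitions}. The key $k=(k_\Phi,k_P,\operatorname{desc}(C))$ has polynomial length and is efficiently samplable, because the $2$-design is an efficiently samplable exact design with efficient circuits. The oracle $W_k^{\mathsf{coh}}$ is implemented by applying $C$ on $A$, then the phase oracle $\widehat\Phi_{g_{k_\Phi}}$ (computing $g$ into an ancilla, applying $Z$, uncomputing), then the efficient in-place unitary $\widehat P_{\Pi_{k_P}}$. All workspace is returned clean, so the implemented map is exactly the block-diagonal unitary, as the oracle in Eq.~\eqref{eq:coherent-oracle-access} requires.

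Then apply Theorem~\ref{thm:coherent-pfc-construction} to obtain QPT distinguishers $\mathcal B_\Phi$ and $\mathcal B_P$, each making at most $q$ queries.
\begin{itemize}
\item $\Adv_g^{\mathsf{qprf}}(\mathcal B_\Phi;\lambda)$ is negligible by qPRF security. Here I rely on the remark in the text that a phase query can be simulated with one function-oracle query, so phase-oracle distinguishing reduces to standard qPRF distinguishing.
\item $\Adv_\Pi^{\ipqTPRP}(\mathcal B_P;\lambda)$ is negligible by Definition~\ref{def:forward-inplace-tprp}.
\item $2q(q-1)/(N+1)\le 2p(\lambda)^2/2^{n(\lambda)}$. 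Since $n(\lambda)=\omega(\log\lambda)$, for every $c$ we eventually have $2^{n(\lambda)}\ge\lambda^{c+3}$, so this term is negligible.
\end{itemize}
A sum of three negligible functions is negligible, which gives Eq.~\eqref{eq:coherent-adaptive-security}.

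The only step needing care is the nonuniformity and efficiency bookkeeping: the reductions $\mathcal B_\Phi$ and $\mathcal B_P$ from the theorem must really be QPT, so that the primitive assumptions apply. The delicate point is $\mathcal B_P$, which must simulate a uniformly random Boolean phase function efficiently. The theorem handles this with a $2q$-wise independent function, and I would simply cite that. With this taken from Theorem~\ref{thm:coherent-pfc-construction}, the corollary is a routine asymptotic consequence.
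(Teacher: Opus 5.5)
Your proposal is correct and takes essentially the same route as the paper: fix a polynomial query bound, use $n(\lambda)=\omega(\log\lambda)$ to get $q<N$ eventually and a negligible statistical term, then use the phase-kickback reduction to standard qPRF security and Definition~\ref{def:forward-inplace-tprp} for the two computational terms of Theorem~\ref{thm:coherent-pfc-construction}. Your extra efficiency check of $W_k^{\mathsf{coh}}$ is harmless, and the only slip is cosmetic: the exponent in your negligibility estimate should grow with $\deg p$ (e.g.\ $\lambda^{c+2\deg p+1}$ rather than $\lambda^{c+3}$).
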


\section{Pseudorandom channels and combs}\label{sec:processes}

We now turn a function-like unitary into a random-looking irreversible evolution and then into a random-looking process with memory. Because there is no concept of canonical Haar measure for channels and combs, we will define each reference ensemble in a natural and operational way.

\subsection{Channels from Stinespring dilation}
\label{subsec:stinespring-channels}

We first obtain pseudorandom channels by applying a PRFU inside a Stinespring dilation and discarding the environment.
Fix an input system $A$, an output system $B$, and environment systems $E_{\mathrm{in}}$ and $E_{\mathrm{out}}$ such that the input and output have matching dimension, $\dim(AE_{\mathrm{in}})=\dim(BE_{\mathrm{out}})$. For a unitary $U:\mathcal{H}_{A}\otimes \mathcal{H}_{E_{\mathrm{in}}}\rightarrow \mathcal{H}_{B}\otimes\mathcal{H}_{E_\mathrm{out}}$, define the channel
\begin{equation}\label{eq:Haar-Stinespring}
    \Phi_{U}(\rho_{A})=\tr_{E_{\mathrm{out}}}[U(\rho_{A}\otimes\ketbra{0}{0}_{E_{\mathrm{in}}})U^{\dagger}].
\end{equation}
Sampling $U$ from Haar measure induces a distribution on channels $\Phi_U$, which we call the Haar–-Stinespring channel ensemble for the specified system and environment dimensions.

We can also define the function-like quantum channel in a way similar to the function-like Haar random family. Let \(H=(H_x)_{x\in\mathcal X_\lambda}\) be a function-like Haar family, with independent unitaries sampled once and fixed throughout the experiment. The corresponding function-like Haar--Stinespring family is \(\{\Phi_{H_x}\}_{x\in\mathcal X_\lambda}\). Each query with label \(x\) initializes a fresh environment \(E_{\mathrm{in}}\) in \(\lvert0\rangle\), applies \(H_x\), and discards \(E_{\mathrm{out}}\).

For the efficient counterpart, insert PRFU $G_{k}(x)$ in Eq.~\eqref{eq:Haar-Stinespring} for each label $x$ to form the ensemble $\{\Phi_{G_{k}(x)}\}_{x\in\mathcal{X}_{\lambda}}$. We can then lift the PRFU to pseudorandom function-like channels (PRFC). Here the adversary $\mathcal{A}$ may adaptively query the resulting channel, keep arbitrary references and repeat labels adaptively, while not being able to control the input environment $E_{\mathrm{in}}$ nor access the discarded environment $E_\mathrm{out}$. The distinguishing advantage \(\operatorname{Adv}_{\mathrm{PRFC}}(\mathcal A;\lambda)\) is defined as the absolute difference between $\mathcal{A}$'s acceptance probabilities when interacting with $\Phi_{G_k(x)}$, for one sampled key \(k\), and with $\Phi_{H_x}$.

\begin{theorem}[PRFU-to-channel lifting]
\label{thm:channel-lifting}
If $\{G_{k}(x)\}$ is a classical-label forward PRFU, then $\{\Phi_{G_{k}(x)}\}$ is computationally indistinguishable from function-like Haar--Stinespring channels with the same dimensions.  For every QPT adaptive labeled-channel distinguisher $\mathcal A$ making $Q$ calls, there is a PRFU distinguisher $\mathcal B$ making $Q$ calls such that
\begin{equation}
   \operatorname{Adv}_{\mathrm{PRFC}}(\mathcal A;\lambda)=
   \Adv_G^{\mathsf{cl},\rightarrow}(\mathcal B;\lambda).
   \label{eq:channel-lifting-bound}
\end{equation}
\end{theorem}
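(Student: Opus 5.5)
The plan is a direct black-box simulation: any labeled-channel distinguisher $\mathcal A$ is turned into a PRFU distinguisher $\mathcal B$ that runs $\mathcal A$ and implements each channel call with one call to its own unitary oracle. The simulation will be identical in the keyed and ideal worlds, so acceptance probabilities match and the advantages are equal.

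\textbf{Construction of $\mathcal B$.} $\mathcal B$ runs $\mathcal A$ step by step and keeps all of $\mathcal A$'s memory and reference registers. Suppose $\mathcal A$ issues a channel query with classical label $x$ on its data register $A$, possibly entangled with its reference $R$. Then $\mathcal B$ does three things. It prepares a fresh ancilla $E_{\mathrm{in}}$ in $\ket{0}$, using polynomially many qubits since the environment dimension is fixed and polynomial. It calls its classical-label oracle with label $x$ on the register $AE_{\mathrm{in}}$, which is the PRFU data register of the matching dimension, with $R$ as reference. It then discards $E_{\mathrm{out}}$, or equivalently places it in a private register that is never touched again, and hands $B$ back to $\mathcal A$. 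At the end, $\mathcal B$ outputs whatever bit $\mathcal A$ outputs. Each channel call uses exactly one oracle call, so $\mathcal B$ makes $Q$ calls. Its running time is that of $\mathcal A$ plus polynomial overhead, so $\mathcal B$ is QPT. Nonuniform advice passes through unchanged.

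\textbf{Exactness of the simulation.} When the oracle is $\mathcal O_{G_k}$, each simulated call applies $(\Phi_{G_k(x)}\otimes\mathrm{id}_R)$ to the current joint state, by Eq.~\eqref{eq:Haar-Stinespring} and Eq.~\eqref{eq:classical-oracle-access}. When the oracle is $\mathcal O_{\bm H}$, it applies $\Phi_{H_x}\otimes\mathrm{id}_R$. The label-to-unitary map is persistent in both worlds, since the same $k$ or the same $\bm H$ is fixed once and reused across all calls. Hence the simulation reproduces the PRFC experiment with the same $k$, and the function-like Haar--Stinespring experiment with the same $\bm H$, respectively. This includes repeated labels and arbitrary adaptive, entangling interleavings.

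Averaging over $k$ and over $\bm H$ gives
\[
\Pr_k[\mathcal B^{\mathcal O_{G_k}}=1]=\Pr_k[\mathcal A^{\Phi_{G_k(\cdot)}}=1]
\]
and the analogous identity for $\bm H$. Therefore $\operatorname{Adv}_{\mathrm{PRFC}}(\mathcal A;\lambda)=\Adv_G^{\mathsf{cl},\rightarrow}(\mathcal B;\lambda)$. Negligibility of the PRFC advantage then follows from Definition~\ref{def:classical-prfu}.

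\textbf{Main point to check.} The delicate step is ensuring that discarding $E_{\mathrm{out}}$ inside $\mathcal B$ matches the channel model, in which the adversary never touches the environment. Because $\mathcal B$ never acts on the discarded registers afterwards, tracing them out commutes with all later operations. So the reduced state seen by $\mathcal A$ is exactly the one produced by the channel oracle, even though $\mathcal B$'s global state is purified.

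I must also ensure that the PRFU is instantiated on $n(\lambda)=\log\dim(AE_{\mathrm{in}})$ qubits. The ideal world must use Haar unitaries on that same combined space, which is exactly how the Haar--Stinespring ensemble is defined.
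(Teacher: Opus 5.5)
Your proposal is correct and follows the paper's proof: $\mathcal B$ prepares a fresh $E_{\mathrm{in}}$ in $\ket{0}$, makes one classical-label forward call under label $x$ on $AE_{\mathrm{in}}$, discards $E_{\mathrm{out}}$, returns $B$, and outputs $\mathcal A$'s bit. The simulation is therefore exact in both worlds, with the same $k$ or $\bm H$ reused across repeated labels and a fresh environment on every call, so the two advantages coincide.
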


The reduction initializes $E_{\mathrm{in}}$ in the $\ket{0}$ state, forwards $AE_{\mathrm{in}}$ to its unitary oracle and traces out the $E_{\mathrm{out}}$ system. The simulation is then exact in both PRFC and function-like Haar-Stinespring experiments. The result is followed by a data-processing argument: hiding part of a pseudorandom unitary cannot make it easier to distinguish. Appendix~\ref{app:process-liftings} gives the full reduction.

\subsection{Sequential processes and their testers}
\label{subsec:comb-formalism}

A channel describes a single input-output transformation. To describe several successive interactions, we allow the process to retain a private quantum memory between rounds. This memory can carry information from earlier inputs into later outputs, producing temporal correlations. Such a sequential process is called a quantum comb~\cite{ChiribellaEtAl08,ChiribellaEtAl09,GutoskiWatrous07,Gutoski12,PollockEtAl18}. As illustrated in Fig.~\ref{fig:comb}, an $R$-round comb realization consists of channels
\begin{equation}
\mathcal C_i:X_iM_{i-1}\longrightarrow\mathsf Y_iM_i,\qquad i\in[R],
\label{eq:comb-teeth}
\end{equation}
where $X,Y,M$ denote quantum registers. At round $i$, the process receives $X_i$, applies $\mathcal C_i$ jointly to this input and its private memory $M_{i-1}$, and returns $Y_i$ while retaining $M_i$. The initial memory is prepared in a fixed state $\sigma_{M_0}$, and the terminal memory $M_R$ is discarded. Each map $\mathcal C_i$ is called a \emph{tooth} of the comb.

A \emph{tester} describes the experimenter's side of this interaction, shown in green in Fig.~\ref{fig:comb}. Its initial operation $\mathcal T_0$ prepares an initial state in register $X_1Z_0$. For $i<R$, its operation $\mathcal T_i$ maps $Y_iZ_{i-1}$ to $X_{i+1}Z_i$, and $\mathcal T_R$ measures $Y_RZ_{R-1}$ to produce an outcome. The workspace $Z_i$ allows the tester to retain quantum information and adapt later inputs to earlier outputs, including through intermediate measurements and feedback. The tester never accesses the comb's private memory $M_i$. 

\begin{figure*}[htbp]
    \centering
    \resizebox{0.6\textwidth}{!}{\begin{tikzpicture}[x=1cm,y=1cm,font=\small,
 wire/.style={-{Stealth[length=1.5mm,width=1.1mm]},line width=.65pt},
 box/.style={draw,line width=.7pt,minimum width=.72cm,minimum height=1.75cm},
 cb/.style={box,draw=comb,fill=comb!18},
 tb/.style={box,draw=test,fill=test!16},
 lab/.style={inner sep=2pt}]
\fill[comb!7,rounded corners=4pt] (-.55,.57) rectangle (9.1,2.03);
\fill[test!7,rounded corners=4pt] (-.55,-2.03) rectangle (9.1,-.57);
\node[anchor=west,text=comb] at (-.35,1.78) {Comb $\mathcal C$};
\node[anchor=east,text=comb,font=\footnotesize] at (8.9,1.78) {Private memory $M_i$};
\node[anchor=west,text=test] at (-.35,-1.78) {Tester $\mathcal T$};
\node[anchor=east,text=test,font=\footnotesize] at (8.9,-1.78) {Workspace $Z_i$};
\foreach \i/\x in {1/1.45,2/4.35,3/7.25}{
 \node[cb] (c\i) at (\x,.45) {$\mathcal C_\i$};
}
\foreach \i/\x in {0/0,1/2.9,2/5.8,3/8.7}{
 \node[tb] (t\i) at (\x,-.45) {$\mathcal T_\i$};
}
\node at (0,1) {$\sigma_{M_0}$};
\draw[wire] (.4,1) -- node[above,lab] {$M_0$} (1.09,1);
\draw[wire] (1.81,1) -- node[above,lab] {$M_1$} (3.99,1);
\draw[wire] (4.71,1) -- node[above,lab] {$M_2$} (6.89,1);
\draw[wire] (7.61,1) -- node[above,lab] {$M_3$} (8.7,1);
\node[font=\scriptsize] at (8.7,.72) {discard};
\draw[wire] (.36,-1) -- node[below,lab] {$Z_0$} (2.54,-1);
\draw[wire] (3.26,-1) -- node[below,lab] {$Z_1$} (5.44,-1);
\draw[wire] (6.16,-1) -- node[below,lab] {$Z_2$} (8.34,-1);
\foreach \a/\b/\lbl in {.36/1.09/X_1,1.81/2.54/Y_1,3.26/3.99/X_2,4.71/5.44/Y_2,6.16/6.89/X_3,7.61/8.34/Y_3}{
 \draw[wire] (\a,0) -- node[above,lab] {$\lbl$} (\b,0);
}
\draw[wire] (9.06,0) -- (9.7,0) node[right] {Output};
\end{tikzpicture}}
    \caption{A three-round quantum comb (purple) interacting with a tester (green). The comb receives $X_i$, returns $Y_i$, and retains private memory $M_i$ between rounds. The tester interleaves its operations with the comb's teeth, carrying its own workspace $Z_i$ from one operation to the next. The comb starts with memory state $\sigma_{M_0}$ and discards its terminal memory. The tester produces the final outcome.}
    \label{fig:comb}
\end{figure*}


The input, output, and memory registers, together with the initial memory state, specify the shape of a sequential realization:
\begin{equation}\label{eq:unitary-process-shape}
\Sigma=\bigl(R,\{X_i,Y_i\}_{i=1}^{R},\{M_i\}_{i=0}^{R},\sigma_{M_0}\bigr).
\end{equation}
Here \(R\) is the number of rounds, \(X_i\) and \(Y_i\) are the communicated input and output registers, and \(M_i\) is the private memory retained after round \(i\). Each execution begins with \(M_0\) in the state \(\sigma_{M_0}\), independently of the tester, and ends by discarding \(M_R\). The shape specifies the register spaces and initialization, while the tooth maps \(\mathcal C_i\) determine the dynamics. 

To obtain a random reference ensemble from Haar measure, we consider realizations in which each tooth is a unitary interaction between the communicated input and the private memory. The following definition fixes a shape that supports unitary teeth and samples those interactions independently from Haar measure.

\begin{definition}[Sequential-Haar random comb]
\label{def:sequential-haar-comb}
Fix a public process shape
\begin{equation}\label{eq:ensemble-Haar-shape}
    \Sigma=\bigl(R,\{X_i,Y_i\}_{i=1}^{R},\{M_i\}_{i=0}^{R},\ketbra{0}{0}_{M_0}\bigr),
\end{equation}
such that for all $i\in[R]$, the register dimensions satisfy 
\begin{equation}\label{eq:process-Haar-shape-condition}
    \dim(\mathcal H_{X_i}\otimes\mathcal H_{M_{i-1}}) = \dim(\mathcal H_{Y_i}\otimes\mathcal H_{M_i}) = D. 
\end{equation}
Sample independent unitaries $\bm H=(H_1,\ldots,H_R)$ with each $H_i\leftarrow\Haar(\mathrm{U}(D))$ sampled from Haar random independently. At tooth $i$, apply $H_i:X_iM_{i-1}\rightarrow Y_iM_i$ as the channel. The resulting random comb is denoted $\mathcal H_{\bm H}^{\Sigma}$ and its distribution is called the sequential-Haar reference for the specified shape $\Sigma$.
\end{definition}

For cryptographic security, we allow an adversary to interact repeatedly with the same comb. Each execution, called a session, follows the sequence illustrated in Fig.~\ref{fig:comb} and starts with a fresh private memory in $\ketbra{0}{0}_{M_0}$. The adversary may maintain several active sessions and interleave their rounds. It may also retain a quantum workspace entangled with the inputs and outputs of different sessions.

We specify this interaction through a classical control interface. The command \(\mathsf{Open}\) starts a new session and returns a classical identifier \(s\). For a session whose next round is \(i\), the command \(\mathsf{Step}(s,X_i)\) supplies the input register \(X_i\), applies \(\mathcal C_i\) to \(X_iM_{s,i-1}\), and returns \(Y_i\) while retaining the updated private memory \(M_{s,i}\). The memory is discarded when the session completes round \(R\) or is terminated early by \(\mathsf{Abort}(s)\). The conventional $R$-round tester above is the special case that opens and completes one session.

\begin{definition}[Pseudorandom comb]
\label{def:prcomb}
An efficiently generated keyed family $\{\mathcal C_k^{\Sigma}\}_k$ is a pseudorandom comb for shape $\Sigma$ if, for every nonuniform QPT comb distinguisher $\mathcal T$, interacting through the reusable interface with polynomially many total tooth invocations and producing a final bit, the distinguishing advantage is negligible,
\begin{align}
\Adv_{\PRComb}^{\Sigma}(\mathcal T;\lambda)
&:=\Bigl|\Pr_{k}\!\left[\mathcal T[\mathcal C_k^{\Sigma}]=1\right]\nonumber\\[-1mm]
&\hspace{9mm}-\Pr_{\bm H}\!\left[\mathcal T[\mathcal H_{\bm H}^{\Sigma}]=1\right]\Bigr|
\leq\negl(\lambda).
\label{eq:prcomb-advantage}
\end{align}
\end{definition}
This is the process analogue of ordinary PRU security. Repeated access to one keyed comb must be computationally indistinguishable from repeated access to one sampled sequential-Haar comb.

\subsection{From PRFU to pseudorandom comb}
\label{subsec:single-prcomb}

We now use distinct PRFU labels for the successive teeth of a comb to obtain a pseudorandom comb.
Let $G_k:\mathcal X_\lambda\to\mathrm U(\mathcal H_A)$ be a classical-label forward PRFU, and fix a process shape $\Sigma$ satisfying the shape conditions Eq.~\eqref{eq:process-Haar-shape-condition} of the sequential-Haar ensemble. We take the common dimension in that condition to be $D=\dim\mathcal H_A$ and fix efficient identifications $X_iM_{i-1}\cong A\cong Y_iM_i$. Choose efficiently computable distinct labels $\ell_1,\ldots,\ell_R\in\mathcal X_\lambda$, and construct $\mathcal C_k^\Sigma=(G_{k}(\ell_1),\cdots,G_{k}(\ell_{R}))$ by applying  $G_k(\ell_i)$ at the $i$-th tooth, which leads to the following security guarantee.

\begin{theorem}[PRFU-to-PRComb lifting]
\label{thm:prfu-prcomb-lifting}
Let \(\{G_{k}(x)\}_{x\in\mathcal X_{\lambda}}\) be a classical-label forward PRFU, and let $\ell_1,\ldots,\ell_R$ be the distinct, efficiently computable labels in $\mathcal{X}_{\lambda}$. Then $\mathcal C_k^{\Sigma}=(G_{k}(\ell_1),\cdots,G_{k}(\ell_{R}))$ forms a pseudorandom comb. More precisely, for every QPT comb distinguisher $\mathcal T$ making at most $q$ total tooth invocations, there is a classical-label PRFU distinguisher $\mathcal B$ making at most $q$ oracle calls such that
\begin{equation}
\Adv_{\PRComb}^{\Sigma}(\mathcal T;\lambda)=\Adv_G^{\mathsf{cl},\rightarrow}(\mathcal B;\lambda).
\label{eq:single-prcomb-lifting}
\end{equation}
\end{theorem}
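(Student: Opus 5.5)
The plan is an exact simulation reduction, following the pattern of Theorem~\ref{thm:channel-lifting}. Given a comb distinguisher $\mathcal T$, we build a classical-label PRFU distinguisher $\mathcal B$ that runs $\mathcal T$ internally and emulates the reusable $\mathsf{Open}/\mathsf{Step}/\mathsf{Abort}$ interface using only its oracle $\mathcal O$. The oracle is either $\mathcal O_{G_k}$ or $\mathcal O_{\bm H}$. $\mathcal B$ outputs whatever bit $\mathcal T$ outputs. The aim is to show that $\mathcal B$'s simulation reproduces $\mathcal T$'s view exactly in both worlds. Then the two acceptance probabilities coincide term by term, which gives the equality in Eq.~\eqref{eq:single-prcomb-lifting}, not merely an inequality.

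The simulation works as follows. $\mathcal B$ keeps a classical table mapping session identifiers to round counters, together with quantum memory registers $M_{s,i}$ held privately in its own workspace and never exposed to $\mathcal T$.
\begin{itemize}
\item On $\mathsf{Open}$, $\mathcal B$ allocates a fresh $M_{s,0}$ in $\ket{0}$, sets the counter to $1$, and returns $s$.
\item On $\mathsf{Step}(s,X_i)$, $\mathcal B$ applies the fixed efficient identification $X_iM_{s,i-1}\cong A$, queries its oracle with classical label $\ell_i$ on register $A$ (with $\mathcal T$'s remaining registers playing the role of the reference $R$), and applies $A\cong Y_iM_{s,i}$. It then hands $Y_i$ to $\mathcal T$ and stores $M_{s,i}$.
\item On $\mathsf{Abort}(s)$ or after round $R$, $\mathcal B$ traces out (discards) the memory.
\end{itemize}
Each tooth invocation uses exactly one oracle call, so $\mathcal B$ makes at most $q$ calls. $\mathcal B$ is QPT because the labels $\ell_i$ are efficiently computable, the identifications are efficient, and the bookkeeping is polynomial.

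Correctness rests on two observations. In the real world, tooth $i$ of every session is implemented by $G_k(\ell_i)$ for the single key $k$, which is exactly $\mathcal C_k^\Sigma$. In the ideal world, $\mathcal O_{\bm H}$ applies $H_{\ell_i}$ persistently across all calls with label $\ell_i$. The labels $\ell_1,\dots,\ell_R$ are distinct, so $(H_{\ell_1},\dots,H_{\ell_R})$ are i.i.d.\ Haar on $\mathrm U(D)$ with $D=\dim\mathcal H_A$. This is precisely the distribution of $\bm H=(H_1,\dots,H_R)$ in Definition~\ref{def:sequential-haar-comb}. The same $H_i$ is reused across sessions, matching the reference comb being a single sampled object queried repeatedly. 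Consequently the joint state of $\mathcal T$'s workspace after every interface call is identical in the simulated and genuine experiments, for every fixed $k$ and every fixed $\bm H$ respectively, and averaging preserves this.

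I expect the main obstacle to be the bookkeeping for concurrent, interleaved sessions and early aborts. It must be checked that holding the memories inside $\mathcal B$ and discarding them on abort or completion yields the same reduced state on $\mathcal T$'s registers as the genuine interface. This follows because $\mathcal T$ never acts on $M$ and discarding a register commutes with all subsequent operations on other registers. A second point to verify is that the distinctness of the $\ell_i$ is what makes independence across teeth hold; without it the ideal world would correlate teeth. Once these are verified, the advantages are equal, and negligibility follows from the assumed classical-label PRFU security.
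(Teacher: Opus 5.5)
Your proposal is correct and takes essentially the same approach as the paper. The paper likewise builds an exact simulator of the $\mathsf{Open}/\mathsf{Step}/\mathsf{Abort}$ interface (Lemma~\ref{lem:exact-concurrent-simulation}) that keeps fresh private memories per session and spends one classical-label call under $\ell_i$ per tooth invocation. It then uses distinctness of the $\ell_i$ to identify $(H_{\ell_1},\ldots,H_{\ell_R})$, reused across sessions, with the sequential-Haar tuple of Definition~\ref{def:sequential-haar-comb}, which gives the stated equality of advantages.
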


The reduction simulates the reusable interface, maintaining a separate private memory for each active session and forwarding every invocation of tooth \(i\) to its PRFU oracle under label \(\ell_i\). The simulations are exact and use one oracle call per tooth invocation, proving the equality.

Distinct labels ensure that different teeth emulate independent Haar unitaries. Reusing the same unitary at every tooth does not generally suffice: even for Haar-random \(U\), the two-round memoryless process \((U,U)\) is distinguishable from \((H_1,H_2)\) with independent Haar unitaries. A tester can compare the two normalized Choi states using a swap test, which accepts with probability \(1\) in the former case and \(\tfrac12(1+D^{-2})\) on average in the latter. 

The same construction extends to a labeled family of combs under one key. Let \(x\in\mathcal L_\lambda\) label a process, and choose an efficiently computable injective map \(\ell:\mathcal L_\lambda\times[R]\to\mathcal X_\lambda\). For each \(x\in\mathcal L_\lambda\), define
\begin{equation}\label{eq:PRF-comb}
    \mathcal C_{k,x}^{\Sigma} := \bigl( G_k(\ell(x,1)),\ldots,G_k(\ell(x,R)) \bigr),
\end{equation}
where the tuple specifies the successive tooth unitaries of a comb with shape $\Sigma$. The ideal family consists of independent sequential-Haar combs
\begin{equation}
    \mathcal H_x^\Sigma := (H_{x,1},\ldots,H_{x,R}),
\end{equation}
with the unitaries $H_{x,i}$ sampled independently from Haar measure for all pairs $(x,i)$ and fixed throughout the experiment.

To access this family, the adversary calls \(\mathsf{Open}(x)\) to start a new execution of the comb labeled \(x\), with fresh private memory and a returned session identifier \(s\). Subsequent calls use this identifier: \(\mathsf{Step}(s,X_i)\) supplies the next input to that execution and returns \(Y_i\), while \(\mathsf{Abort}(s)\) terminates it and discards its memory.

The adversary may choose process labels adaptively, interleave sessions while preserving their internal round order, and retain an entangled workspace across them. Process labels and session choices are classical, while inputs, outputs, and workspace remain quantum. We require the entire labeled family to be indistinguishable from independent sequential-Haar combs under this access model.

\begin{definition}[Function-like pseudorandom comb]
\label{def:function-like-prcomb}
An efficient labeled family $\{\mathcal C_{k,x}^{\Sigma}\}_{k,x}$ is a function-like pseudorandom comb if, for every nonuniform QPT concurrent process distinguisher $\mathcal T$ making polynomially many total tooth invocations and producing a final bit, the distinguishing advantage is negligible,
\begin{align}
&\Adv_{\PRComb}^{\mathsf{fl},\Sigma}(\mathcal T;\lambda)\nonumber:=\Bigl|\Pr_k\!\left[\mathcal T[\{\mathcal C_{k,x}^{\Sigma}\}_x]=1\right]\nonumber\\[-1mm]
&\qquad\qquad\qquad-\Pr_{\bm H}\!\left[\mathcal T[\{\mathcal H_x^{\Sigma}\}_x]=1\right]\Bigr|
\leq\negl(\lambda).
\label{eq:function-like-prcomb-advantage}
\end{align}
\end{definition}

The security of the efficient function-like comb family in Eq.~\eqref{eq:PRF-comb} is established in the following theorem.

\begin{theorem}[Concurrent function-like comb lifting]
\label{thm:prfu-function-like-prcomb-lifting}
Let $\{G_{k}(x)\}_{x\in\X_\lambda}$ be a classical-label forward PRFU and let $\ell:\mathcal L_\lambda\times[R]\to\mathcal X_\lambda$ be an injective and efficient map. For every nonuniform QPT concurrent process distinguisher $\mathcal T$ making at most $q$ total tooth invocations against Eq.~\eqref{eq:PRF-comb}, there is a classical-label PRFU distinguisher $\mathcal B$ making at most $q$ calls such that
\begin{equation}
\begin{split}
\Adv_{\PRComb}^{\mathsf{fl},\Sigma}(\mathcal T;\lambda)
&=\Adv_G^{\mathsf{cl},\rightarrow}(\mathcal B;\lambda).
\end{split}
\label{eq:function-like-prcomb-lifting}
\end{equation}
Consequently, the construction in Eq.~\eqref{eq:PRF-comb} forms a function-like pseudorandom comb.
\end{theorem}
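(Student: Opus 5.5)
The proof is an exact simulation argument, in the same style as the proof of Theorem~\ref{thm:prfu-prcomb-lifting}. Given a concurrent process distinguisher $\mathcal T$, we build a classical-label PRFU distinguisher $\mathcal B$ that runs $\mathcal T$ internally and plays the reusable interface for it, using its own oracle $\mathcal O$. This oracle is either $\mathcal O_{G_k}$ or $\mathcal O_{\bm H}$.

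$\mathcal B$ keeps a classical session table. Each entry records a session identifier $s$, its process label $x$ and its next round index $i$. Alongside the table, $\mathcal B$ holds a quantum memory register $M_{s,i-1}$ for each active session.

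The three commands are handled as follows.
\begin{itemize}
\item On $\mathsf{Open}(x)$, $\mathcal B$ creates a fresh identifier $s$, prepares $M_{s,0}$ in $\ketbra{0}{0}$ and sets $i=1$.
\item On $\mathsf{Step}(s,X_i)$, $\mathcal B$ computes the label $\ell(x,i)$. It then submits the register $X_iM_{s,i-1}$, identified with $A$ via the fixed efficient isomorphism, to its oracle under that label. It splits the output $A\cong Y_iM_{s,i}$, returns $Y_i$, stores $M_{s,i}$ and increments $i$. When $i$ exceeds $R$, it discards the memory.
\item On $\mathsf{Abort}(s)$, $\mathcal B$ traces out the session's memory.
\end{itemize}
When $\mathcal T$ halts, $\mathcal B$ outputs $\mathcal T$'s bit.

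Each tooth invocation costs exactly one oracle call, so $\mathcal B$ makes at most $q$ calls. Its remaining overhead is a polynomial table, polynomially many memory registers, and evaluations of the efficient map $\ell$. Hence $\mathcal B$ is QPT and nonuniform whenever $\mathcal T$ is.

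The remaining work is to check that the simulation is exact in both worlds. In the real world, every call with label $\ell(x,i)$ applies $G_k(\ell(x,i))$, which is precisely tooth $i$ of $\mathcal C_{k,x}^{\Sigma}$ in Eq.~\eqref{eq:PRF-comb}. Every session of process $x$ therefore sees the same persistent teeth, as the interface requires. In the ideal world, the oracle applies $H_{\ell(x,i)}$.

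The main point is that the family $(H_{\ell(x,i)})_{(x,i)}$ has exactly the distribution of the ideal comb family $(H_{x,i})_{(x,i)}$. Here injectivity of $\ell$ is essential. Distinct pairs $(x,i)$ receive distinct labels, so by Definition~\ref{def:haar-family} the corresponding unitaries are i.i.d. Haar and are fixed across all sessions and all interleavings. Labels of $\mathcal X_\lambda$ outside the image of $\ell$ are never queried, so they do not affect the joint distribution.

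Since the two families have the same joint distribution and $\mathcal B$'s operations are identical in both worlds, the joint state of $\mathcal T$'s workspace, all session memories and the returned outputs coincides at every step. This holds under arbitrary adaptive interleaving and entanglement across sessions. Two details need care in this check. First, each session's memory starts fresh in $\ketbra{0}{0}_{M_0}$ even when sessions share a process label. Second, memories discarded on $\mathsf{Abort}$ or at completion are traced out identically in both worlds.

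Exactness gives $\Pr_k[\mathcal B^{\mathcal O_{G_k}}=1]=\Pr_k[\mathcal T[\{\mathcal C_{k,x}^{\Sigma}\}_x]=1]$ and the analogous identity in the ideal world. Subtracting yields the equality in Eq.~\eqref{eq:function-like-prcomb-lifting}. Because $G$ is a classical-label forward PRFU, the right-hand side is negligible for every QPT $\mathcal B$ making polynomially many queries, and the family in Eq.~\eqref{eq:PRF-comb} is therefore a function-like pseudorandom comb.
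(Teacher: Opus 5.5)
Your proposal is correct and follows essentially the same route as the paper: an exact session-table simulation (the paper's Lemma~\ref{lem:exact-concurrent-simulation}) that makes one classical-label oracle call under $\ell(x,i)$ per tooth and uses injectivity of $\ell$ to turn the ideal PRFU table into independent, persistent Haar teeth indexed by $(x,i)$, so the two advantages coincide. The only extra detail in the paper, a fixed error response for invalid commands in both experiments, does not affect your argument.
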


The reduction keeps a separate private memory register for every active session and forwards each requested tooth to its PRFU oracle under the classical label $\ell(x,i)$. Injectivity turns the ideal PRFU table into independent Haar teeth indexed by $(x,i)$, so the simulation is exact and incurs neither a hybrid over rounds nor a loss depending on the number of possible process labels. As a special case, restricting the concurrent game to one fixed process label recovers the reusable PRComb game of Definition~\ref{def:prcomb}.

\section{Applications}
\label{sec:applications}

We now apply PRFUs to tasks that require multiple reproducible random-looking operations under a single secret key. Public labels specify the context of each operation, allowing distinct labels to select unitaries that jointly emulate independent Haar samples. We first combine this capability with unitary gluing to construct PRUs whose register widths can be chosen after key generation, and then consider nonce-resolved quantum authentication, coherently masked QRAM queries, and random multi-time dynamics. Throughout this section, $\mathrm{enc}$ denotes an efficiently computable injective encoding of the indicated public fields into $\mathcal{X}_{\lambda}$.

\subsection{One-key PRUs for variable register widths}
\label{subsec:variable-width-pru}

A standard PRU acts on a register of a fixed width. We seek a single secret key that supports pseudorandom unitaries at several register widths, allowing the width to be chosen adaptively after key generation. The desired guarantee is joint: even an adversary querying several widths should be unable to distinguish the resulting family from independent Haar-random unitaries, one fixed unitary at each supported width.

PRFUs provide this capability when combined with unitary gluing~\cite{SchusterHaferkampHuang2025,MaHuang25}. Gluing composes independent Haar unitaries on overlapping local registers into a global unitary that approximates Haar randomness under repeated forward queries. For each supported width, we use distinct PRFU labels to generate the local gates in the gluing construction. PRFU security ensures that these gates jointly emulate independent Haar unitaries. Gluing then yields a family of global pseudorandom unitaries at the supported widths.

Let \(G_k(x)\) be a classical-label forward PRFU acting on \(2w\) qubits. We construct, under the same key \(k\), a family of unitaries \(V_{k,m}\) acting on \(n_m=2mw\) qubits for \(m\in[M]\), where \(w=w(\lambda)\) and \(M=M(\lambda)\) are polynomially bounded positive integers. For each target width \(n_m\), partition the register into consecutive \(w\)-qubit blocks \(A_1,\ldots,A_{2m}\). Label each local gate by its target width, layer \(a\), and position \(j\):
\begin{equation}
\ell_{m,a,j}:=\operatorname{enc}(\mathsf{glue},n_m,a,j),
\label{eq:gluing-labels}
\end{equation}
where $a=0$ with $j\in[m]$ denotes the first layer and $a=1$ with $j\in[m-1]$ the second. Define the two-layer unitary:
\begin{align}
    V_{k,m}^{(0)}&:=\prod_{j=1}^{m}G_k(\ell_{m,0,j})^{A_{2j-1}A_{2j}},\nonumber\\
    V_{k,m}^{(1)}&:=\prod_{j=1}^{m-1}G_k(\ell_{m,1,j})^{A_{2j}A_{2j+1}}.
    \label{eq:variable-width-brickwork}
\end{align}
Here, the superscripts $(0)$ and $(1)$ on $V_{k,m}$ denote the layer indices, whereas the superscripts on $G_k$ (e.g., $A_{2j-1}A_{2j}$) denote the gate supports, acting non-trivially on the specified blocks and as the identity elsewhere. The first layer acts on disjoint pairs of blocks, and the second connects neighboring pairs, giving the overlaps required by gluing. Composing these two layers gives the circuit 
\begin{equation}\label{eq:two-layer}
    V_{k,m}:=V^{(1)}_{k,m}V^{(0)}_{k,m}.
\end{equation}
For a family $\mathbf U=\{U_m\}_{m\in[M]}$, the oracle $\mathcal O_{\mathbf U}$ accepts a classical width index $m$ and applies $U_m$ to the supplied $n_m$-qubit register. The family is fixed throughout the interaction, and only forward queries are exposed. The following theorem establishes joint security across the supported widths. 

\begin{theorem}[One-key PRUs for variable register widths]
\label{thm:variable-width-pru}
Let $G$ be a classical-label forward PRFU on $2w$ qubits, where $w=\omega(\log{\lambda})$. Let $M(\lambda)$ be a polynomially bounded integer and set $\mathbf V_k=\{V_{k,m}\}_{m\in[M]}$ with $V_{k,m}$ defined in Eq.~\eqref{eq:two-layer}. Let $\mathbf H=\{H_m\}_{m\in[M]}$ consist of independently sampled Haar unitaries on $2mw$ qubits. For every QPT oracle algorithm \(\mathcal A\),
\begin{equation}
    \left| \Pr_{k\leftarrow\mathsf{Gen}(1^\lambda)} \!\left[\mathcal A^{\mathcal O_{\mathbf V_k}}(1^\lambda)=1\right] - \Pr_{\mathbf H} \!\left[\mathcal A^{\mathcal O_{\mathbf H}}(1^\lambda)=1\right] \right| \leq \operatorname{negl}(\lambda).
\end{equation}
\end{theorem}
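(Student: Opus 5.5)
The proof is a two-step hybrid: first replace the PRFU by the function-like Haar family, then apply the gluing theorem width by width. Introduce the intermediate family $\mathbf V_{\bm H}=\{V_{\bm H,m}\}_{m\in[M]}$, obtained from Eqs.~\eqref{eq:variable-width-brickwork}--\eqref{eq:two-layer} by substituting $H_{\ell_{m,a,j}}$ for $G_k(\ell_{m,a,j})$. Here $\bm H=(H_x)_{x\in\X_\lambda}$ is a function-like Haar family on $2w$ qubits. By the triangle inequality, the advantage of $\mathcal A$ is at most
\[
\bigl|\Pr[\mathcal A^{\mathcal O_{\mathbf V_k}}=1]-\Pr[\mathcal A^{\mathcal O_{\mathbf V_{\bm H}}}=1]\bigr|
+\bigl|\Pr[\mathcal A^{\mathcal O_{\mathbf V_{\bm H}}}=1]-\Pr[\mathcal A^{\mathcal O_{\mathbf H}}=1]\bigr|.
\]
I bound each term separately.

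\textbf{First term.} Build a classical-label PRFU distinguisher $\mathcal B$ that runs $\mathcal A$. When $\mathcal A$ submits width index $m$ with an $n_m$-qubit register, $\mathcal B$ executes the circuit of Eq.~\eqref{eq:two-layer}. For each gate it calls its own oracle on the classical label $\ell_{m,a,j}$, acting on the appropriate pair of blocks. Each $\mathcal A$-query therefore costs $2m-1\le 2M$ oracle calls, so $\mathcal B$ is QPT whenever $\mathcal A$ is. The simulation is exact in both worlds: with oracle $\mathcal O_{G_k}$ it reproduces $\mathcal O_{\mathbf V_k}$, and with $\mathcal O_{\bm H}$ it reproduces $\mathcal O_{\mathbf V_{\bm H}}$. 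The first term thus equals $\Adv_G^{\mathsf{cl},\rightarrow}(\mathcal B;\lambda)$, which is negligible by Definition~\ref{def:classical-prfu}.

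\textbf{Second term.} This is purely information-theoretic. Because $\operatorname{enc}$ is injective and includes $n_m$, the label sets $\{\ell_{m,a,j}\}$ for distinct $m$ are pairwise disjoint, and within a fixed $m$ all labels are distinct. Hence the $M$ glued unitaries $V_{\bm H,m}$ are mutually independent, and each is a two-layer brickwork of i.i.d.\ Haar gates on $2w$ qubits with $w$-qubit overlaps. This is exactly the gluing ensemble of~\cite{SchusterHaferkampHuang2025,MaHuang25}. Their result gives an error of $O(t^2 m/2^{w})$ in distinguishing the ensemble from Haar on $n_m$ qubits under $t$ adaptive forward queries.

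To handle queries interleaved across widths, I run a hybrid over $m=1,\ldots,M$, replacing $V_{\bm H,m}$ by an independent $H_m$ one width at a time. In the $m$-th hybrid, the unbounded distinguisher samples all other widths' unitaries itself, independently of the challenge. It then forwards width-$m$ queries to its single-unitary challenge, which is allowed as side processes in the sense of Definition~\ref{def:proxy}. With $q$ total queries, this gives a total bound of $\sum_m O(q^2 m/2^{w})=O(q^2M^2/2^{w})$. This is negligible because $w=\omega(\log\lambda)$ while $q$ and $M$ are polynomial.

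The main obstacle is the second step. It requires the gluing theorem in exactly this form: two layers, block size $w$, forward adaptive queries, and an error that stays negligible for $2m$ blocks. If the cited bound needs larger local patches, or holds only for a different layer geometry, the fix is to change the PRFU width or the layout in the construction and re-run the same reduction. The first step then goes through verbatim. Everything else is exact simulation and bookkeeping of labels.
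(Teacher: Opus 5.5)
Your proof is correct and matches the paper's argument. The paper likewise uses an exact classical-label PRFU simulation with $2m-1$ challenge calls per width-$m$ query, then a width-by-width statistical hybrid that applies the adaptive forward gluing bound $\delta_m(t)=18m\,t(t-1)/2^w$ of Ma and Huang, treating the other widths as independent side processes. The only difference is bookkeeping: the paper orders its hybrids by first occurrence of a width (at most $q$ transitions, total $18Mq^2(q-1)/2^w$), while you sweep the fixed index $m\in[M]$ (at most $M$ transitions, total $O(q^2M^2/2^w)$); both are negligible for $w=\omega(\log\lambda)$.
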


The guarantee follows by combining joint PRFU security with unitary gluing. PRFU security first allows all local gates in Eq.~\eqref{eq:variable-width-brickwork} to be replaced jointly by independent Haar unitaries. Since different widths use disjoint label sets, the resulting circuits are independent across widths. The adaptive forward-query gluing bound~\cite{MaHuang25} then replaces each encountered circuit by one persistent Haar unitary on its full register. For $q$ total queries, the accumulated statistical error from gluing is at most $18Mq^2(q-1)/2^w$, which is negligible assuming $w=\omega(\log{\lambda})$. Appendix~\ref{app:variable-width-gluing} gives the full reduction and quantitative bounds.

A variant secure under both forward and inverse queries follows from a strong classical-label PRFU and inverse-secure gluing~\cite{AnanthBostanciGulatiLin2025}. See Appendix~\ref{app:strong-variable-width-gluing}.

\subsection{Nonce-resolved quantum authentication}
\label{subsec:nonce-auth}

We consider quantum authentication with a public nonce: a classical label accompanying each ciphertext that selects the operation used to encode and verify it. We combine a strong classical-label PRFU with the design layers of the pseudorandom quantum authentication scheme (PQAS) of Ref.~\cite{HaugBansalMokKohBharti2025}, using one secret key for all nonces. In addition to the inherited authentication guarantee, we establish rejection of ciphertexts submitted under previously unused nonces.

Let \(M,T,B\) be message, tag, and auxiliary registers of \(m,\ell,b\) qubits, respectively, and let \(G_{k_G}(x)\) be a strong classical-label PRFU on their combined \(m+\ell+b\) qubits. Independently sample efficiently implementable unitaries $V^{(4)}_{a_4}$ and $V^{(2)}_{a_2}$ from an approximate unitary $4$-design and an exact unitary $2$-design, respectively, on the same register, with parameters chosen as in Ref.~\cite{HaugBansalMokKohBharti2025}. The seeds $a_4,a_2$ are independent of $k_G$, sampled once, and retained as part of the master key $k=(k_G,a_4,a_2)$. For a nonce $\nu$, define
\begin{equation}
    U_{k,\nu}=G_{k_G}(\mathrm{enc}(\mathbf{auth},\nu))V^{(4)}_{a_4}V^{(2)}_{a_2}.
\label{eq:nonce-auth-unitary}
\end{equation}
Here $\mathrm{enc}$ is an efficient injective encoding into $\mathcal{X}_{\lambda}$, and the same two design unitaries are reused across nonces. The $2$-design supplies the second-moment identity controlling the mean authentication error, while the approximate $4$-design controls fluctuations across keys, as required by the authentication analysis of Ref.~\cite{HaugBansalMokKohBharti2025}.

We append an $\ell$-qubit tag initialized to $\ket{0^{\ell}}_{T}$ and an auxiliary register in the maximally mixed state $\tau_B:=I_B/2^b$ to the message, then apply $U_{k,\nu}$, forming the encoded state
\begin{equation}
    \mathsf{Enc}_{k,\nu}(\rho_M):=U_{k,\nu}\bigl(\rho_M\otimes\ketbra{0^\ell}{0^\ell}_T\otimes\tau_B\bigr)U_{k,\nu}^\dagger.
\label{eq:nonce-authentication}
\end{equation}
To verify, apply \(U_{k,\nu}^{\dagger}\), accept if the tag is all zero, and return the message register upon acceptance. Honest ciphertexts are accepted exactly. The security reduction therefore uses strong PRFU security to simulate verification through inverse challenge calls.

\begin{proposition}[Fresh-nonce rejection]
\label{prop:fresh-nonce-rejection}
Suppose \(G\) is a strong classical-label PRFU and $U_{k,\nu}$ is defined by Eq.~\eqref{eq:nonce-auth-unitary}. For every QPT adversary making adaptive encryption and verification queries, there exists a QPT strong classical-label PRFU distinguisher $\mathcal B$ such that the probability that the adversary's final ciphertext is accepted under a nonce unused in all preceding queries satisfies
\begin{equation}
    p_{\mathsf{fresh}}\leq 2^{-\ell}+\Adv_G^{\mathsf{cl},\pm}(\mathcal B;\lambda).
\label{eq:fresh-nonce-rejection}
\end{equation}
\end{proposition}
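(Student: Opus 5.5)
We reduce to the ideal experiment by a single hybrid and then bound the fresh-nonce acceptance probability information-theoretically. Fix an adversary $\mathcal A$ that makes adaptive encryption queries $\mathsf{Enc}_{k,\nu}$ and verification queries under chosen nonces, and finally outputs a nonce $\nu^\ast$ and a ciphertext register. Let $\mathsf{Win}$ be the event that $\nu^\ast$ was never used in any preceding encryption or verification query and that the final verification accepts.

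\textbf{Reduction.} The distinguisher $\mathcal B$ receives forward and inverse classical-label oracle access to either $G_{k_G}$ or the ideal family $\bm H$. It samples $a_4,a_2$ itself and simulates the game. For an encryption under $\nu$, it prepares $\rho_M\otimes\ketbra{0^\ell}{0^\ell}_T\otimes\tau_B$, applies $V^{(4)}_{a_4}V^{(2)}_{a_2}$, and calls the forward oracle on label $\mathrm{enc}(\mathbf{auth},\nu)$. For a verification, it calls the inverse oracle on the same label, applies $(V^{(4)}_{a_4}V^{(2)}_{a_2})^\dagger$, and measures the tag in the computational basis. Finally $\mathcal B$ runs the verification of the final ciphertext and outputs $1$ iff $\mathsf{Win}$ occurs. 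Freshness is a classical predicate on the label transcript, so $\mathcal B$ can check it. Each game query costs one oracle call, and the simulation is exact in both worlds. Hence
\[
p_{\mathsf{fresh}}\le \Pr[\mathsf{Win}\text{ in ideal world}]+\Adv_G^{\mathsf{cl},\pm}(\mathcal B;\lambda).
\]

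\textbf{Ideal-world bound.} In the ideal world, the unitary for the fresh label is $H_{\nu^\ast}$, a Haar unitary independent of everything the adversary has seen: all earlier interactions touched only $H_\nu$ with $\nu\ne\nu^\ast$, and these are independent of $H_{\nu^\ast}$. Because $\mathrm{enc}$ is injective, distinct nonces give distinct labels. Conditioned on the full transcript and on the submitted ciphertext state $\sigma$ (which may be entangled with $\mathcal A$'s memory; tracing that memory out does not change the acceptance probability), the acceptance probability is
\[
\mathbb E_{H}\,\tr\!\left[(\ketbra{0^\ell}{0^\ell}_T\otimes I_{MB})\,W^\dagger H^\dagger\sigma H W\right],\qquad W=V^{(4)}_{a_4}V^{(2)}_{a_2}.
\]
Here $HW$ is again Haar distributed by left/right invariance. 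The first moment $\mathbb E_H[H^\dagger\sigma H]=I/D$ therefore gives exactly $2^{-\ell}$, independently of the design layers and of $\sigma$. Averaging over transcripts yields the $2^{-\ell}$ term.

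\textbf{Main obstacle.} The point that needs care is making the independence argument rigorous for adaptively chosen $\nu^\ast$, where $\mathcal A$ selects the nonce as a function of its quantum history. I would handle it by lazy sampling of the ideal family. Every $H_\nu$ is sampled at its first use, so at the moment of the final verification $H_{\nu^\ast}$ is sampled fresh, independently of all registers. The conditional first-moment calculation above then applies pointwise for each classical value of $\nu^\ast$.
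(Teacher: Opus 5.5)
Your proposal is correct and follows essentially the same route as the paper. The reduction $\mathcal B$ samples the design seeds locally, uses forward challenge calls for encryption, inverse calls for verification, and one further call for the final check. In the ideal world, lazy sampling makes the fresh-label Haar unitary independent of the submitted ciphertext--workspace state, so the first-moment twirl gives acceptance probability exactly $2^{-\ell}$, whatever the design layers are. The only cosmetic difference is that your $\mathcal B$ outputs one on the joint event of freshness and acceptance. The paper instead takes freshness of $\nu^\star$ as a hypothesis on the adversary and obtains the equality $|p_{\mathsf{fresh}}-2^{-\ell}|=\Adv_G^{\mathsf{cl},\pm}(\mathcal B;\lambda)$.
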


To see this, replace the PRFU family by independent Haar unitaries. Conditional on the design seeds, right multiplication by $V^{(4)}_{a_4}V^{(2)}_{a_2}$ preserves their joint Haar distribution. The unitary at the final fresh nonce is therefore independent of the preceding interaction, and Haar averaging gives acceptance probability $2^{-\ell}$. Appendix~\ref{app:nonce-auth-proof} gives the reduction. For each fixed nonce, Eq.~\eqref{eq:nonce-auth-unitary} is an instance of PQAS, so its authentication security follows directly from Ref.~\cite{HaugBansalMokKohBharti2025} under that work's attack model and parameter conditions. Proposition~\ref{prop:fresh-nonce-rejection} adds the fresh-nonce guarantee under adaptive encryption and verification queries.

\subsection{Coherently masked QRAM queries}
\label{subsec:coherent-qram}

Quantum random-access memory (QRAM) allows a classical database to be queried in a coherent superposition of addresses~\cite{GiovannettiLloydMaccone08}. For a database \(D:\mathcal X_\lambda\to\{0,1\}^m\), its lookup unitary acts as
\begin{equation}\label{eq:QRAM}
    Q_D\lvert x\rangle_L\lvert b\rangle_B =\lvert x\rangle_L\lvert b\oplus D(x)\rangle_B,
\end{equation}
with the action on address superpositions determined by linearity. We use a coherent-label PRFU to mask the retrieved data under one secret key: after lookup, an address-dependent unitary acts on the bus without measuring the address. The resulting interface preserves coherent queries while hiding the database contents from an observer with forward access.

Let $\{G_k(x)\}_{x\in\mathcal{X}_{\lambda}}$ be a coherent-label forward PRFU on $m$ qubits. After each lookup, apply the unitary selected by the address, giving the masked interface
\begin{align}
    \widetilde{Q}_{k,D}:=\left(\sum_{x\in\mathcal{X}_{\lambda}}\ketbra{x}{x}_{L}\otimes G_{k}(x)\right)Q_{D}.
\label{eq:masked-qram}
\end{align}
Only forward access to \(\widetilde Q_{k,D}\) is exposed. A key holder can undo the mask to recover ordinary QRAM lookup.

\begin{proposition}[Content hiding for coherently masked QRAM]
\label{prop:qram-content-hiding}
Suppose \(G\) is a coherent-label forward PRFU. For any two efficiently implementable databases \(D_0,D_1\) of the same dimensions, chosen independently of \(k\), and every QPT oracle algorithm \(\mathcal A\),
\begin{equation}\label{eq:qram-content-hiding}
    \left| \Pr_k\!\left[\mathcal A^{\mathcal O_{\widetilde Q_{k,D_0}}}(1^\lambda)=1\right] - \Pr_k\!\left[\mathcal A^{\mathcal O_{\widetilde Q_{k,D_1}}}(1^\lambda)=1\right] \right| \leq \operatorname{negl}(\lambda),
\end{equation}
where the oracles provide forward access with arbitrary coherent address queries.
\end{proposition}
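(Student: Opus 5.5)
The plan is a three-hybrid argument that passes through the ideal coherent-label Haar family. For $b\in\{0,1\}$, let $\mathsf{Exp}_b^{\mathrm{real}}$ be the experiment in which $\mathcal A$ has forward access to $\widetilde Q_{k,D_b}=W_kQ_{D_b}$. Let $\mathsf{Exp}_b^{\mathrm{ideal}}$ be the same experiment with $W_k$ replaced by $W_{\mathrm{Haar}}=\sum_x\ketbra{x}{x}_L\otimes H_x$, where $\bm H$ is a function-like Haar family. The chain is
\[
\mathsf{Exp}_0^{\mathrm{real}}\approx_c\mathsf{Exp}_0^{\mathrm{ideal}}\equiv\mathsf{Exp}_1^{\mathrm{ideal}}\approx_c\mathsf{Exp}_1^{\mathrm{real}}.
\]
The left-hand side of Eq.~\eqref{eq:qram-content-hiding} is then at most the sum of two coherent-label PRFU advantages, which is negligible by Definition~\ref{def:coherent-prfu}.

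For the computational steps, fix $b$ and build a coherent-label PRFU distinguisher $\mathcal B_b$. It runs $\mathcal A$, and whenever $\mathcal A$ submits registers $LBR$ it does three things:
\begin{enumerate}[label=(\roman*)]
\item apply $Q_{D_b}$ itself, which is efficient because $D_b$ is efficiently implementable and independent of $k$;
\item forward $LB$ (with $B$ playing the role of the data register $A$) to its oracle $\mathcal O_{\ket{G_k}}$ or $\mathcal O_{\ket{\bm H}}$;
\item return the result to $\mathcal A$.
\end{enumerate}
Because the oracle applies $W_k$ or $W_{\mathrm{Haar}}$ coherently on $L$, this simulates $\mathsf{Exp}_b^{\mathrm{real}}$ or $\mathsf{Exp}_b^{\mathrm{ideal}}$ exactly, with one oracle call per query of $\mathcal A$. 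Two checks are needed: the address register must range over $\X_\lambda$, so that $\mathcal H_L$ matches, and $D_b$ must not depend on $k$. Otherwise $\mathcal B_b$ could not apply $Q_{D_b}$ without the key.

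The main step is the exact identity $\mathsf{Exp}_0^{\mathrm{ideal}}\equiv\mathsf{Exp}_1^{\mathrm{ideal}}$, i.e., that the distribution of the oracle $W_{\mathrm{Haar}}Q_{D}$ does not depend on $D$. Write
\[
W_{\mathrm{Haar}}Q_D=\sum_x\ketbra{x}{x}_L\otimes H_xX^{D(x)},
\]
where $X^{s}$ denotes the Pauli-$X$ string that XORs $s$ onto the bus. For each fixed $x$, right-invariance of the Haar measure gives that $H_xX^{D(x)}$ is Haar distributed. Since the $H_x$ are independent across labels and $X^{D(x)}$ is a fixed unitary, the whole tuple $(H_xX^{D(x)})_x$ is again a function-like Haar family. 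Hence the random controlled unitary $W_{\mathrm{Haar}}Q_D$ has the same law for $D_0$ and $D_1$, and the same persistent unitary is used across all $q$ adaptive queries. The two averaged output states therefore coincide for every adversary, including unbounded ones. Combining the three steps gives the bound $\Adv_G^{\mathsf{coh},\rightarrow}(\mathcal B_0;\lambda)+\Adv_G^{\mathsf{coh},\rightarrow}(\mathcal B_1;\lambda)$.

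The main obstacle is stating the invariance step carefully at the level of the entire adaptive interaction rather than a single query. The argument has to use that the map $\bm H\mapsto(H_xX^{D(x)})_x$ is a measure-preserving bijection on the product Haar measure. Only then does the distribution over whole oracle sequences match, not merely the single-query marginals. Because $D$ is fixed before the game and independent of everything else, no issue arises with adaptivity.
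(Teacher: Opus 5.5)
Your proposal is correct and follows the paper's proof essentially step for step. The paper also builds distinguishers $\mathcal B_j$ that apply $Q_{D_j}$ locally and then make one coherent challenge call on the address and bus registers. It uses right invariance of the Haar measure to show that $\sum_x\ketbra{x}{x}_L\otimes H_xX^{D_j(x)}$ has the same joint law for both databases, as a persistent controlled unitary and hence for the entire adaptive interaction. The triangle inequality then gives the bound $\Adv_G^{\mathsf{coh},\rightarrow}(\mathcal B_0;\lambda)+\Adv_G^{\mathsf{coh},\rightarrow}(\mathcal B_1;\lambda)$.
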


Coherent-label PRFU security allows the masks to be replaced jointly by independent Haar unitaries $H_x$. At each address, the resulting bus operation is $H_xX^{D(x)}$, where $X^{D(x)}$ denotes addition of $D(x)$ modulo two. Haar invariance implies that these operations remain independent Haar unitaries, so the ideal interface is independent of the database. Appendix~\ref{app:qram-proof} gives the reduction. 

\subsection{Efficient emulation of random multi-time dynamics}
\label{subsec:random-multitime-emulation}

Random multi-time experiments involve a sequence of transformations, with interventions between steps and possibly an inaccessible environment that retains memory. Independently sampled Haar unitaries provide a reference model for these transformations, but generic Haar unitaries require exponentially large circuits in the number of qubits. The process construction above gives an efficient computational substitute: one secret key specifies reproducible experiments whose responses to efficient adaptive interventions emulate those of the sequential-Haar ensemble.

Let $G_k(x)$ be a classical-label forward PRFU on $a(\lambda)$ qubits, and fix a public process shape $\Sigma$ with polynomially many steps $R_{\max}=R_{\max}(\lambda)$, satisfying Eq.~\eqref{eq:process-Haar-shape-condition}. Each step acts on $a(\lambda)$ qubits, with the input, output, and private-memory registers specified by $\Sigma$. For an experimental instance labeled $x\in\mathcal L_\lambda$, define
\begin{align}
\ell_{\mathsf{dyn}}(x,i)&:=\operatorname{enc}(\mathsf{dynamics},\Sigma,x,i),\nonumber\\
\mathcal C_{k,x}^{\Sigma}&:=\bigl(G_k(\ell_{\mathsf{dyn}}(x,i))\bigr)_{i=1}^{R_{\max}}.
\label{eq:random-dynamics-emulator}
\end{align}
Each execution starts with fresh private memory and uses the $\mathsf{Open}(x)$, $\mathsf{Step}(s,X_i)$, and $\mathsf{Abort}(s)$ interface defined above. Reusing $x$ reproduces the same sequence of unitaries. Since the labels do not depend on the stopping time, an experiment may be stopped after any step without changing its preceding dynamics.

\begin{corollary}[Efficient emulation of sequential-Haar dynamics]
\label{cor:random-multitime-emulation}
For every QPT concurrent process distinguisher $\mathcal T$, the family in Eq.~\eqref{eq:random-dynamics-emulator} satisfies
\begin{equation}
\Adv_{\PRComb}^{\mathsf{fl},\Sigma}(\mathcal T;\lambda)
\leq\negl(\lambda),
\label{eq:random-multitime-emulation-bound}
\end{equation}
including when experimental labels and stopping times are chosen adaptively. Each step requires one PRFU evaluation, and all instances share one polynomial-length secret key.
\end{corollary}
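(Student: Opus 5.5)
The plan is to reduce Corollary~\ref{cor:random-multitime-emulation} directly to Theorem~\ref{thm:prfu-function-like-prcomb-lifting}. The family in Eq.~\eqref{eq:random-dynamics-emulator} is exactly the construction of Eq.~\eqref{eq:PRF-comb}, with $R=R_{\max}$ and label map $\ell(x,i):=\ell_{\mathsf{dyn}}(x,i)=\operatorname{enc}(\mathsf{dynamics},\Sigma,x,i)$. Since $\operatorname{enc}$ is an efficiently computable injective encoding and the fields $\mathsf{dynamics}$ and $\Sigma$ are fixed, the map $(x,i)\mapsto\ell_{\mathsf{dyn}}(x,i)$ is injective and efficient on $\mathcal L_\lambda\times[R_{\max}]$. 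The shape $\Sigma$ satisfies Eq.~\eqref{eq:process-Haar-shape-condition} with $D=2^{a(\lambda)}=\dim\mathcal H_A$, so all hypotheses of the theorem are met.

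For a QPT concurrent distinguisher $\mathcal T$ making $q=\poly(\lambda)$ total tooth invocations, Theorem~\ref{thm:prfu-function-like-prcomb-lifting} gives a classical-label PRFU distinguisher $\mathcal B$ with at most $q$ calls and
\[
\Adv_{\PRComb}^{\mathsf{fl},\Sigma}(\mathcal T;\lambda)=\Adv_G^{\mathsf{cl},\rightarrow}(\mathcal B;\lambda).
\]
The right-hand side is negligible by Definition~\ref{def:classical-prfu}. For instance, it is bounded by Theorem~\ref{thm:classical-lifting} when $G$ is instantiated as in Corollary~\ref{cor:classical-existence}, which needs $a(\lambda)=\omega(\log\lambda)$. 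This yields Eq.~\eqref{eq:random-multitime-emulation-bound}.

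The step needing most care is confirming that adaptive experimental labels and adaptive stopping times lie within the access model of Definition~\ref{def:function-like-prcomb}.

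\begin{itemize}
\item \textbf{Adaptive labels.} These are allowed explicitly, since $\mathsf{Open}(x)$ takes classically, adaptively chosen process labels.
\item \textbf{Stopping early.} A session stopped after step $i<R_{\max}$ is modeled by $\mathsf{Abort}(s)$, which discards the private memory. The ideal experiment treats early stopping identically, by discarding $M_i$.
\item \textbf{No dependence of labels on stopping time.} The labels $\ell_{\mathsf{dyn}}(x,i)$ are independent of the stopping time. So the teeth applied before stopping are the same unitaries $G_k(\ell_{\mathsf{dyn}}(x,j))$ for $j\le i$, and their ideal counterparts are the same $H_{x,j}$.
\end{itemize}

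It follows that the reduction in Theorem~\ref{thm:prfu-function-like-prcomb-lifting} simulates both experiments exactly, with no extra loss. A distinguisher that stops sessions at data-dependent times is therefore just a particular concurrent distinguisher.

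For the efficiency claims, each $\mathsf{Step}$ makes exactly one call to $G_k$ on the efficiently computed label $\ell_{\mathsf{dyn}}(x,i)$. All instances use the single key $k\leftarrow\Gen(1^\lambda)$, which has polynomial length.
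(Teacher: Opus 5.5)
Your proposal is correct and matches the paper's proof: both apply the exact concurrent simulator of Theorem~\ref{thm:prfu-function-like-prcomb-lifting} with $\ell=\ell_{\mathsf{dyn}}$, use injectivity of $\operatorname{enc}$ to obtain independent Haar teeth indexed by $(x,i)$ and reused across sessions, and treat adaptive stopping through $\mathsf{Abort}$, relying on the fact that the step labels do not depend on the stopping time. Your remark about instantiating $G$ via Corollary~\ref{cor:classical-existence} is an optional aside, because negligibility already follows from the assumption that $G$ is a classical-label forward PRFU.
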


The corollary follows directly from Theorem~\ref{thm:prfu-function-like-prcomb-lifting}: injectivity of $\ell_{\mathsf{dyn}}$ turns the ideal PRFU family into independent Haar unitaries indexed by $(x,i)$, and early termination is already allowed by the process interface. Thus the guarantee covers intermediate measurements, feedback, and interleaved executions with quantum workspace.

\section{Discussion and outlook}
\label{sec:discussion}

In this work, we introduce PRFUs as a framework for generating jointly pseudorandom quantum operations indexed by public labels under a single secret key. We establish a generic lifting from PRUs and post-quantum PRFs for classical-label access, and a separate construction for coherent-label access, whose security is proved through an indexed path-recording analysis that accounts for interference across labels. These guarantees extend to quantum channels and combs with inaccessible memory, allowing efficient adaptive interactions with entire families of processes. PRFUs thus connect unitary pseudorandomness with the construction of reproducible quantum dynamics across multiple contexts and time steps.

The relationship between PRU and PRFU remains a fundamental question. In the state setting, the construction of PRFS generators from PRS generators with logarithmic-length inputs provides a precedent for obtaining function-like pseudorandomness from a single-instance primitive~\cite{AnanthQianYuen22,AnanthGulatiQianYuen22}. However, the relationship between PRUs and PRFUs is not that clear. In the classical-label setting, our lifting converts a PRU into a PRFU using a post-quantum PRF to derive label-dependent keys. For coherent labels, however, we provide a specific construction rather than a general transformation from PRUs. This leaves open whether a comparable lifting is possible when queries may involve superpositions of labels. Establishing such a lifting, or identifying a separation, would clarify the additional cryptographic requirements imposed by coherent access to an indexed unitary family.

Our coherent-label construction also raises a concrete question about its underlying permutation primitive. While we formulate the required tweakable in-place qPRP security as an explicit assumption, an efficient construction achieving this guarantee from standard assumptions remains to be established. The challenge is to support a family of permutations whose joint pseudorandomness survives queries in superposition over both tweaks and inputs. Constructing such a primitive would instantiate the permutation component of our PRFU construction and clarify which techniques for quantum-secure permutations extend to this function-like setting.

\par\medskip
\begin{acknowledgments}
We thank Emil Khabiboulline and Zhenning Liu for helpful discussions. Z.X.~acknowledges funding and support from QuICS Lanczos Graduate Fellowship. K.B.~was supported by a Hartree Fellowship from the Joint Center for Quantum Information and Computer Science (QuICS) at the University of Maryland, College Park. D.E.K.~is supported by the National Research Foundation, Singapore through the National Quantum Office, hosted in A*STAR, under the Advanced Quantum Algorithms and Solutions (AQAS) Funding Initiative (S25Q9DA001). W.K.M. acknowledges funding provided by the Institute for Quantum Information and Matter, an NSF Physics Frontiers Center (NSF Grant PHY-2317110). A.V.G.~ acknowledges support from the U.S.~Department of Energy, Office of Science, Accelerated Research in Quantum Computing, Fundamental Algorithmic Research toward Quantum Utility (FAR-Qu). A.V.G.~was also supported in part by NSF QLCI (award No.~OMA-2120757), NSF STAQ program, NQVL:QSTD:Design:FTL, DoE ASCR Quantum Testbed Pathfinder program (award No.~DE-SC0024220), ONR MURI, AFOSR MURI, and ARL (W911NF-24-2-0107). A.V.G.~also acknowledges support from the U.S.~Department of Energy, Office of Science, National Quantum Information Science Research Centers, Quantum Systems Accelerator (award No.~DE-SCL0000121). The original ideas and proofs were developed by the authors. Generative artificial intelligence tools were subsequently used to assist with simplification of the proofs, clarification of their presentation, language editing, manuscript organization, and figure preparation.
\end{acknowledgments}


\bibliography{references}

\clearpage

\onecolumngrid
\appendix

\section{Security proof for the classical-label construction}
\label{app:classical}

We prove the lifting bound in Theorem~\ref{thm:classical-lifting}, construct the Haar proxies used in its corollaries, and derive the quantitative guarantee for the PFC instantiation. Throughout, $G_k(x)=V_{F_k(x)}$ is the construction in Eq.~\eqref{eq:classical-construction}, and the oracle interfaces and distinguishing advantages are those of Eqs.~\eqref{eq:classical-oracle-access}--\eqref{eq:true-classical-oracle-access} and Eqs.~\eqref{eq:classical-construction-advantage}--\eqref{eq:inner-pru-advantage}. Fix a nonuniform QPT adversary $\mathcal A$ making at most $q=q(\lambda)$ calls, and let $\mathcal D_{q,\eta}$ be a forward Haar proxy as in Definition~\ref{def:proxy}. All probabilities include the sampling of the oracle, the adversary's randomness, and its measurement outcomes. At each call, the reference register $R$ may include its entire retained quantum workspace.

\subsection{Replacing the outer pseudorandom function}

Let $\mathsf{Real}$ be the interaction with $\mathcal O_{G_k}$ for $k\leftarrow\Gen_F(1^\lambda)$. In the experiment $\mathsf{RF}$, replace $F_k$ by a uniformly random function $f:\{0,1\}^{d(\lambda)}\to\{0,1\}^{s(\lambda)}$, so that each call acts as
\begin{equation}
\mathcal O_f(x,\rho_{AR})
=(V_{f(x)}\otimes I_R)\rho_{AR}(V_{f(x)}^\dagger\otimes I_R).
\label{eq:random-function-oracle}
\end{equation}
Write $p_{\mathsf X}$ for the probability that $\mathcal A$ outputs one in experiment $\mathsf X$.

\begin{lemma}[PRF replacement]\label{lem:app-prf-step}
There is a QPT distinguisher $\mathcal B_F$, making at most $q$ classical function queries, such that
\begin{equation}
|p_{\mathsf{Real}}-p_{\mathsf{RF}}|
=\Adv_F^{\mathsf{prf}}(\mathcal B_F;\lambda).
\label{eq:app-prf-step}
\end{equation}
\end{lemma}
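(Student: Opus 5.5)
The plan is a direct black-box simulation: $\mathcal B_F$ runs $\mathcal A$ internally and answers each oracle call with the help of its own function oracle $\mathcal O\in\{F_k,f\}$.

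\emph{The simulation.} When $\mathcal A$ issues a call with classical label $x$ and data--reference state $\rho_{AR}$, the distinguisher $\mathcal B_F$ proceeds as follows.
\begin{enumerate}[label=(\roman*)]
\item It queries $\mathcal O$ on the classical input $x$ and obtains the classical string $a=\mathcal O(x)$.
\item Using the deterministic efficient circuit map $a\mapsto V_a$, it applies $V_a$ to register $A$ and returns the registers to $\mathcal A$.
\end{enumerate}
At the end, $\mathcal B_F$ outputs whatever bit $\mathcal A$ outputs. Each oracle call of $\mathcal A$ costs exactly one classical function query, so $\mathcal B_F$ makes at most $q$ queries. Its running time is that of $\mathcal A$ plus $q$ evaluations of the efficient $V$-circuits, hence QPT. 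Nonuniform advice of $\mathcal A$ is passed through unchanged.

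\emph{Exactness of the simulation.} When $\mathcal O=F_k$, every call applies $V_{F_k(x)}\otimes I_R$, matching Eq.~\eqref{eq:constructed-classical-oracle}. When $\mathcal O=f$, every call applies $V_{f(x)}\otimes I_R$, matching Eq.~\eqref{eq:random-function-oracle}. In both cases the same function is used for the entire interaction, so a repeated label reproduces the same unitary, exactly as in $\mathsf{Real}$ and $\mathsf{RF}$.

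The adversary's internal operations and retained quantum workspace act identically in both worlds. Therefore the joint distribution of $\mathcal A$'s view, including all intermediate measurement outcomes, coincides with that in $\mathsf{Real}$ (respectively $\mathsf{RF}$). This gives $\Pr[\mathcal B_F^{F_k}=1]=p_{\mathsf{Real}}$ and $\Pr[\mathcal B_F^{f}=1]=p_{\mathsf{RF}}$. Taking the difference and comparing with Eq.~\eqref{eq:pqprf-advantage} yields the equality \eqref{eq:app-prf-step}.

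\emph{The only point that needs checking.} I expect essentially no obstacle here. The one thing to verify is that $\mathcal B_F$ never needs superposition access to the function: the label wire is classical, so the PRF is queried only on classical inputs, and post-quantum security against classical adaptive queries suffices. Adaptivity is likewise harmless, since $\mathcal B_F$ may choose each query based on its simulated quantum interaction.

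It is also worth stressing that $\mathcal B_F$ needs no lazy table of its own: the function oracle itself supplies consistency across repeated labels. The lazy table only enters in the later hybrids, where the random function must be simulated efficiently.
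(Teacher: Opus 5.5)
Your proposal is correct and matches the paper's proof essentially verbatim: $\mathcal B_F$ runs $\mathcal A$, answers each labeled call by querying its function oracle classically at $x$ and applying $V_a$, and outputs $\mathcal A$'s bit, so the two oracle branches reproduce $\mathsf{Real}$ and $\mathsf{RF}$ exactly. Your remarks that only classical (adaptively chosen) function queries are needed and that no lazy table is required at this stage are also consistent with the paper's argument.
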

\begin{proof}
The distinguisher $\mathcal B_F$ runs $\mathcal A$ and, on a query with label $x$, evaluates its function oracle at $x$ to obtain a seed $a$. It implements $V_a$ on the supplied data register and eventually returns $\mathcal A$'s output bit. The two function-oracle experiments reproduce $\mathsf{Real}$ and $\mathsf{RF}$ exactly. All function queries are classical, although the choice of each label may depend on the preceding quantum interaction.
\end{proof}

Because the labels are classical, $\mathsf{RF}$ admits an exact lazy-sampling implementation: assign an independent uniform seed $a_x$ when a label $x$ first occurs and reuse it on every subsequent occurrence. Seeds are sampled with replacement. In particular, the simulation does not condition on the absence of seed collisions.

\subsection{Replacing local PRUs in order of first occurrence}

Assign rank $r(x)=1,2,\ldots$ to labels in the order in which they first occur. This order is determined separately within each experiment; it need not agree across different hybrids. At most $q$ ranks are reached. For $j\in\{0,\ldots,q\}$, define $\mathsf P_j$ by assigning to each new label $x$ a persistent unitary
\begin{equation}
W_x^{(j)}=
\begin{cases}
D_x, & r(x)\leq j,\quad D_x\leftarrow\mathcal D_{q,\eta},\\
V_{a_x}, & r(x)>j,\quad a_x\leftarrow\{0,1\}^{s(\lambda)}.
\end{cases}
\label{eq:pj-oracle}
\end{equation}
The draws for different labels are independent, and a query at $x$ applies $W_x^{(j)}$ to the data register. Thus
\begin{equation}
p_{\mathsf{RF}}=p_{\mathsf P_0},
\label{eq:rf-p0}
\end{equation}
and $\mathsf P_q$ uses an independent proxy for every queried label.

To analyze the $j$th transition, introduce $\mathsf Q_j$, which assigns
\begin{equation}
\widehat W_x^{(j)}=
\begin{cases}
D_x, & r(x)<j,\quad D_x\leftarrow\mathcal D_{q,\eta},\\
H_j, & r(x)=j,\quad H_j\leftarrow\Haar(\mathrm U(\mathcal H_A)),\\
V_{a_x}, & r(x)>j,\quad a_x\leftarrow\{0,1\}^{s(\lambda)}.
\end{cases}
\label{eq:qj-blocks}
\end{equation}
All choices are again independent and persistent.

\begin{lemma}[First-occurrence hybrid]\label{lem:first-occurrence-hybrid}
There are QPT single-instance PRU distinguishers $\mathcal B_1,\ldots,\mathcal B_q$, each making at most $q$ forward challenge calls, such that
\begin{equation}
|p_{\mathsf P_0}-p_{\mathsf P_q}|
\leq\sum_{j=1}^{q}\Adv_V^{\mathsf{pru},\rightarrow}(\mathcal B_j;\lambda)+q\eta.
\label{eq:app-all-pru-transitions}
\end{equation}
\end{lemma}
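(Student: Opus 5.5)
We telescope through the intermediate hybrids $\mathsf Q_j$. For each $j\in[q]$, the triangle inequality gives
\begin{equation*}
|p_{\mathsf P_{j-1}}-p_{\mathsf P_j}|\leq|p_{\mathsf P_{j-1}}-p_{\mathsf Q_j}|+|p_{\mathsf Q_j}-p_{\mathsf P_j}|.
\end{equation*}
Summing over $j$ and applying the triangle inequality once more bounds $|p_{\mathsf P_0}-p_{\mathsf P_q}|$. It then suffices to establish two facts. First, the computational step $|p_{\mathsf P_{j-1}}-p_{\mathsf Q_j}|=\Adv_V^{\mathsf{pru},\rightarrow}(\mathcal B_j;\lambda)$. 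Second, the statistical step $|p_{\mathsf Q_j}-p_{\mathsf P_j}|\leq\eta$. The three experiments $\mathsf P_{j-1}$, $\mathsf Q_j$, and $\mathsf P_j$ differ only in the unitary attached to the rank-$j$ label: it is $V_{a}$ in $\mathsf P_{j-1}$, $H_j$ in $\mathsf Q_j$, and $D$ in $\mathsf P_j$. All other ranks are treated identically in the three experiments.

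For the computational step, $\mathcal B_j$ receives a single-instance challenge oracle $\mathcal O_U$ with $U\in\{V_a,H\}$. It runs $\mathcal A$ and maintains a lazy table that records the first-occurrence rank of each classical label. When a label of rank $<j$ first appears, $\mathcal B_j$ samples $D_x\leftarrow\mathcal D_{q,\eta}$. When a label of rank $>j$ first appears, it samples a fresh seed $a_x$. Later queries to such labels reuse the stored unitary, which $\mathcal B_j$ implements itself. Every query to the rank-$j$ label is forwarded to the challenge oracle, with the data register $A$ passed through and $\mathcal A$'s reference and memory kept alongside. If the challenge is $V_a$ with uniform $a$, this reproduces $\mathsf P_{j-1}$ exactly; if it is Haar, it reproduces $\mathsf Q_j$. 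Three points need checking. (i) $\mathcal B_j$ is QPT. The proxies are efficiently samplable and implementable by Proposition~\ref{prop:efficient-haar-proxies}, or simply by hypothesis on $\mathcal D_{q,\eta}$. The inner PRU is efficient, and the table has at most $q$ entries. (ii) The rank-$j$ label is determined adaptively during the run, but this causes no difficulty. The challenger fixes $U$ in advance and independently of everything else, and the assignment of ranks is a deterministic function of the classical transcript, so the law of the simulation matches the hybrid. (iii) At most $q$ challenge calls are made. If rank $j$ is never reached, the challenge is never used and the two hybrids coincide on that branch, which is consistent with the equality.

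For the statistical step, we build an (unbounded) distinguisher $\mathcal A'$ for Definition~\ref{def:proxy}, making at most $q$ forward calls to one persistent challenge oracle $\mathcal O_U$, with $U\in\{D,H\}$. The construction is identical to $\mathcal B_j$, except that labels of rank $<j$ receive fresh proxies and labels of rank $>j$ receive fresh PRU seeds. These are exactly the ``side processes sampled independently of the challenge'' permitted in the remark after Definition~\ref{def:proxy}. The rank-$j$ label is again routed to the challenge. The simulation is exact for $\mathsf P_j$ when $U=D$ and for $\mathsf Q_j$ when $U=H$, so $|p_{\mathsf Q_j}-p_{\mathsf P_j}|\leq\eta$.

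Summing over $j\in[q]$ gives Eq.~\eqref{eq:app-all-pru-transitions}. The main obstacle is the careful justification that a challenge oracle attached to an adaptively determined label (the $j$th new one) exactly realizes the hybrid's distribution. This requires the independence of all per-label draws and the fact that ranks are defined within each experiment from classical labels only; we would argue this by lazy sampling along the transcript. A secondary subtlety is that the proxy definition must tolerate the independently sampled side unitaries used by $\mathcal A'$, which is precisely what the remark after Definition~\ref{def:proxy} grants.
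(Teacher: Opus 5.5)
Your proposal is correct and matches the paper's proof essentially step for step: it uses the same intermediate hybrid $\mathsf Q_j$, the same reduction $\mathcal B_j$ that routes the rank-$j$ label to the challenge (proxies below rank $j$, fresh seeds above), and the same appeal to Definition~\ref{def:proxy} with independently sampled side processes to get $|p_{\mathsf Q_j}-p_{\mathsf P_j}|\leq\eta$. The only point the paper states explicitly that you leave implicit is that seed collisions between the hidden challenge seed and locally sampled seeds are reproduced automatically. This holds because all seeds are independent and uniform and $a\mapsto V_a$ is deterministic, so no separate collision term arises.
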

\begin{proof}
Fix $j$. The distinguisher $\mathcal B_j$ runs $\mathcal A$, samples proxy circuits for ranks below $j$, and samples independent inner-PRU seeds for ranks above $j$. It directs every occurrence of the label of rank $j$ to its single persistent challenge oracle. All other calls are implemented locally using the stored circuits. The PRU branch is exactly $\mathsf P_{j-1}$, and the Haar branch is exactly $\mathsf Q_j$, giving
\begin{equation}
|p_{\mathsf P_{j-1}}-p_{\mathsf Q_j}|
=\Adv_V^{\mathsf{pru},\rightarrow}(\mathcal B_j;\lambda).
\label{eq:app-pru-transition}
\end{equation}
The reduction is efficient because both the proxy circuits and the inner PRUs are efficient. It makes no challenge calls until rank $j$ occurs and at most $q$ calls in total.

The experiments $\mathsf Q_j$ and $\mathsf P_j$ differ only in the unitary assigned to rank $j$. To compare them, consider an oracle distinguisher that simulates the adversary, the label table, and all other independently sampled unitaries, while using its challenge at that rank. Definition~\ref{def:proxy} applies to this complete adaptive interaction and yields
\begin{equation}
|p_{\mathsf Q_j}-p_{\mathsf P_j}|\leq\eta.
\label{eq:app-proxy-transition}
\end{equation}
If rank $j$ is never reached, the two interactions coincide. Combining the two comparisons gives
\begin{equation}
|p_{\mathsf P_{j-1}}-p_{\mathsf P_j}|
\leq\Adv_V^{\mathsf{pru},\rightarrow}(\mathcal B_j;\lambda)+\eta.
\label{eq:app-one-label-bound}
\end{equation}
Summing over $j$ proves the lemma.

The PRU branch also reproduces seed collisions exactly. The hidden challenge seed and all locally sampled seeds are independent uniform strings; whenever two seeds coincide, the deterministic map $a\mapsto V_a$ assigns the same unitary, without requiring the reduction to recognize that coincidence. Hence no separate collision term is needed.
\end{proof}

\subsection{Replacing the proxies by Haar unitaries}

For $\ell\in\{0,\ldots,q\}$, let $\mathsf S_\ell$ assign independent Haar unitaries to the first $\ell$ ranks and independent proxies to the remaining ranks:
\begin{equation}
\widetilde W_x^{(\ell)}=
\begin{cases}
H_x, & r(x)\leq\ell,\quad H_x\leftarrow\Haar(\mathrm U(\mathcal H_A)),\\
D_x, & r(x)>\ell,\quad D_x\leftarrow\mathcal D_{q,\eta}.
\end{cases}
\label{eq:sell-blocks}
\end{equation}
These assignments are lazy and persistent, so $\mathsf S_0=\mathsf P_q$.

\begin{lemma}[Proxy replacement]\label{lem:proxy-to-haar}
The endpoint experiments satisfy
\begin{equation}
|p_{\mathsf P_q}-p_{\mathsf S_q}|\leq q\eta.
\label{eq:app-final-proxy-bound}
\end{equation}
\end{lemma}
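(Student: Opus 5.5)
The proof is a telescoping hybrid over $\ell=1,\ldots,q$ through the experiments $\mathsf S_0=\mathsf P_q,\mathsf S_1,\ldots,\mathsf S_q$ of Eq.~\eqref{eq:sell-blocks}. The goal is to show
\begin{equation*}
|p_{\mathsf S_{\ell-1}}-p_{\mathsf S_\ell}|\leq\eta
\end{equation*}
for each $\ell$. Summing these $q$ bounds by the triangle inequality then gives Eq.~\eqref{eq:app-final-proxy-bound}. No computational assumption is needed, because Definition~\ref{def:proxy} holds against unbounded distinguishers. The simulator in each step therefore need not be efficient; in particular, it may sample genuine Haar unitaries for the lower ranks.

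For a fixed $\ell$, the experiments $\mathsf S_{\ell-1}$ and $\mathsf S_\ell$ differ only in the unitary persistently assigned to the label of rank $\ell$: a proxy $D\leftarrow\mathcal D_{q,\eta}$ in the first case and $H\leftarrow\Haar(\mathrm U(\mathcal H_A))$ in the second. I would build a single oracle distinguisher $\mathcal C_\ell$ with one persistent challenge oracle $\mathcal O_U$. It runs $\mathcal A$, maintains the lazy first-occurrence table, and handles each query as follows:
\begin{itemize}
\item labels of rank below $\ell$ get freshly sampled independent Haar unitaries, applied locally;
\item labels of rank above $\ell$ get freshly sampled independent proxy circuits, applied locally;
\item every occurrence of the rank-$\ell$ label is routed to $\mathcal O_U$.
\end{itemize}
All side unitaries are sampled independently of the challenge. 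The rank assignment depends only on the classical labels $\mathcal A$ emits, which is part of $\mathcal C_\ell$'s adaptive strategy. This is exactly the kind of interaction the remark after Definition~\ref{def:proxy} permits: side processes independent of the challenge, with later operations correlated across all systems. With $U=D$ the simulation reproduces $\mathsf S_{\ell-1}$ exactly, and with $U=H$ it reproduces $\mathsf S_\ell$.

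Since $\mathcal A$ makes at most $q$ calls in total, $\mathcal C_\ell$ makes at most $q$ forward challenge calls. If rank $\ell$ is never reached, it makes none, and the two experiments coincide on that branch. Definition~\ref{def:proxy} then gives $|p_{\mathsf S_{\ell-1}}-p_{\mathsf S_\ell}|\leq\eta$.

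The only point needing care, which I expect to be the main obstacle, is that the rank-$\ell$ label is defined dynamically, and its identity can depend on earlier quantum measurement outcomes. The challenge must still be a single fixed persistent unitary sampled before the interaction. This is fine because the challenger samples $U$ once at the start. The distinguisher simply decides online which later calls to forward, and that routing decision is a classical function of its own transcript. So the proxy guarantee for adaptive distinguishers applies verbatim.
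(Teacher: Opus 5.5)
Your proposal is correct and matches the paper's proof. The paper likewise telescopes through $\mathsf S_0=\mathsf P_q,\mathsf S_1,\ldots,\mathsf S_q$. Each step is bounded by $\eta$ via a possibly unbounded simulator that samples independent Haar unitaries for lower ranks and proxies for higher ranks, and routes the rank-$\ell$ label to a single persistent proxy-or-Haar challenge using at most $q$ calls. Your remark that the online, transcript-dependent routing is permitted by Definition~\ref{def:proxy} makes explicit a point the paper leaves implicit.
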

\begin{proof}
For each $\ell$, simulate all of $\mathcal A$'s interaction except the unitary at rank $\ell$, which is supplied by a proxy-or-Haar challenge. The other unitaries are sampled independently of that challenge. This simulator may be computationally unbounded because it uses exact Haar unitaries at earlier ranks, as permitted by Definition~\ref{def:proxy}. It makes at most $q$ challenge calls, so $|p_{\mathsf S_{\ell-1}}-p_{\mathsf S_\ell}|\leq\eta$. The triangle inequality gives the claimed bound.
\end{proof}

In $\mathsf S_q$, every distinct label receives an independent Haar unitary, reused on all later calls. Deferred sampling at first occurrence therefore produces exactly the indexed Haar oracle of Eq.~\eqref{eq:true-classical-oracle-access}:
\begin{equation}
p_{\mathsf S_q}=\Pr_{\bm H}\!\left[\mathcal A^{\mathcal O_{\bm H}}(1^\lambda)=1\right].
\end{equation}

\begin{proof}[Proof of Theorem~\ref{thm:classical-lifting}]
Combining Lemmas~\ref{lem:app-prf-step}, \ref{lem:first-occurrence-hybrid}, and~\ref{lem:proxy-to-haar} gives
\begin{align}
\Adv_G^{\mathsf{cl},\rightarrow}(\mathcal A;\lambda)
&=|p_{\mathsf{Real}}-p_{\mathsf S_q}|\nonumber\\
&\leq |p_{\mathsf{Real}}-p_{\mathsf{RF}}|+|p_{\mathsf P_0}-p_{\mathsf P_q}|+|p_{\mathsf P_q}-p_{\mathsf S_q}|\nonumber\\
&\leq\Adv_F^{\mathsf{prf}}(\mathcal B_F;\lambda)
+\sum_{j=1}^{q}\Adv_V^{\mathsf{pru},\rightarrow}(\mathcal B_j;\lambda)+2q\eta.
\label{eq:app-complete-classical-bound}
\end{align}
The query bounds follow from the preceding simulations. More generally, if $Q$ bounds the total number of calls and $\ell$ bounds the number of distinct labels on every execution, a $(Q,\eta)$ proxy suffices and the proof uses only $\ell$ transitions, with total proxy error $2\ell\eta$.
\end{proof}

\subsection{Strong access and inverse queries}
\label{app:classical-strong}

The strong interface additionally permits
\begin{equation}
\mathcal O_{G_k}^{-1}(x,\rho_{AR})
=(G_k(x)^\dagger\otimes I_R)\rho_{AR}(G_k(x)\otimes I_R),
\label{eq:inverse-classical-oracle}
\end{equation}
with the analogous interface for $\bm H$. Both the label and the direction of each call are chosen classically and adaptively. A bidirectional $(q,\eta)$ Haar proxy satisfies Definition~\ref{def:proxy} for at most $q$ total calls to the sampled unitary and its inverse.

If $V$ is a strong PRU and such a proxy is available, the same hybrids apply. The PRF reduction reverses the known circuit for $V_a$ when required, each $\mathcal B_j$ uses the corresponding direction of its challenge, and every proxy supplies the persistent pair $(D_x,D_x^\dagger)$. Thus
\begin{equation}
\Adv_G^{\mathsf{cl},\pm}(\mathcal A;\lambda)
\leq\Adv_F^{\mathsf{prf}}(\mathcal B_F;\lambda)
+\sum_{j=1}^{q}\Adv_V^{\mathsf{pru},\pm}(\mathcal B_j;\lambda)+2q\eta.
\label{eq:app-strong-classical-bound}
\end{equation}
Here $q$ counts both directions together. Below we show that the amplified-design construction also supplies the required bidirectional proxies.

\subsection{Relative-error designs and efficient Haar proxies}
\label{app:relative-design-proxy}

Let $N=2^n$, let $\ket{\Phi_N}=N^{-1/2}\sum_{z=0}^{N-1}\ket{z}\ket{z}$, and use the normalized Choi operator $J(U)=(U\otimes I)\ketbra{\Phi_N}{\Phi_N}(U^\dagger\otimes I)$. For an ensemble $\mu$, define
\begin{equation}
\Omega_\mu^{(t)}:=\mathbb E_{U\leftarrow\mu}\!\left[J(U)^{\otimes t}\right].
\label{eq:choi-moment}
\end{equation}
These moments describe repeated uses of one sampled unitary, rather than fresh sampling at each call.

\begin{lemma}[Relative designs give forward proxies]\label{lem:relative-design-proxy}
An efficiently samplable circuit ensemble $\mathcal D$ satisfying
\begin{equation}
(1-\delta)\Omega_{\Haar}^{(q)}\preceq\Omega_{\mathcal D}^{(q)}\preceq(1+\delta)\Omega_{\Haar}^{(q)}
\label{eq:relative-choi-order}
\end{equation}
is a forward $(q,\delta)$ Haar proxy.
\end{lemma}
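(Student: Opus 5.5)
The plan is to write the adversary's final state as a linear function of the $q$-fold moment of the sampled unitary, and then push the operator inequality through that linear map. By purification and deferred measurement, a $q$-query adaptive distinguisher making forward calls to one persistent $U$ is a sequence of isometries $\mathcal A_1,\dots,\mathcal A_q$ interleaved with $U$ acting on $A$. The standard teleportation/gate-injection identity expresses the unnormalized output as a fixed linear map applied to $J(U)^{\otimes q}$. Concretely, there is a completely positive map $\Lambda_{\mathcal A}$, independent of $U$, such that
\begin{equation}
N^{2q}\ldots
\end{equation}
and this identity will be stated as
\begin{equation}
\rho^{\mathcal A}_U = \Lambda_{\mathcal A}\bigl(J(U)^{\otimes q}\bigr).
\end{equation}

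To obtain $\Lambda_{\mathcal A}$, each oracle use is replaced by postselecting the data register together with one half of $\ket{\Phi_N}$ onto $\ket{\Phi_N}$. The postselection factor $N^{2q}$ is absorbed into $\Lambda_{\mathcal A}$, which makes $\Lambda_{\mathcal A}$ a CP map composed of projections and isometries. Because the identity is exact for every fixed $U$, averaging gives $\rho^{\mathcal A}_\mu=\Lambda_{\mathcal A}(\Omega^{(q)}_\mu)$ for $\mu\in\{\mathcal D,\Haar\}$. Any side processes sampled independently of the challenge are averaged first and folded into the adversary's isometries. An adversary making fewer than $q$ calls is padded with dummy calls.

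Next I apply positivity. Since $\Lambda_{\mathcal A}$ is CP, the hypothesis gives $(1-\delta)\rho_{\Haar}\preceq\rho_{\mathcal D}\preceq(1+\delta)\rho_{\Haar}$. For the accepting POVM element $0\preceq E\preceq I$, this yields $|\tr E\rho_{\mathcal D}-\tr E\rho_{\Haar}|\le\delta\,\tr E\rho_{\Haar}\le\delta$. The advantage is therefore at most $\delta$, which is exactly the definition of a forward $(q,\delta)$ Haar proxy. Efficiency of sampling and implementation is part of the hypothesis on $\mathcal D$, so nothing further is needed there.

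The main obstacle is the first step: verifying carefully that the sequential, adaptive interaction really is a CP image of the moment operator $\mathbb E[J(U)^{\otimes q}]$ with one persistent $U$. The concern is that adaptivity or the ordering of queries might seem to require fresh samples per call or a non-CP "link" product. I will follow the MPSY24/Choi-state argument: process the $q$ postselected teleportations in order, check that the Choi states of different rounds sit in disjoint tensor factors, and check that the adversary's isometries only act on non-Choi registers between rounds. Together these make the map manifestly CP and linear in the tensor power.
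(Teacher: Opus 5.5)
Your proposal is correct and is essentially the paper's argument. The paper pads the interaction to $q$ calls and writes the acceptance probability as $p_\mu=\operatorname{Tr}[T_q\Omega_\mu^{(q)}]$ for a positive semidefinite tester operator $T_q$, which is your $\Lambda_{\mathcal A}^{\dagger}(E)$ (you justify it more explicitly via postselected gate teleportation with the $N^{2q}$ factor absorbed); it then applies the relative order and $p_{\Haar}\le 1$ to conclude $|p_{\mathcal D}-p_{\Haar}|\le\delta$, and your truncated display ``$N^{2q}\ldots$'' should simply be dropped in favor of $\rho^{\mathcal A}_U=\Lambda_{\mathcal A}(J(U)^{\otimes q})$.
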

\begin{proof}
Fix an adaptive oracle distinguisher, including any side processes sampled independently of the challenge. If it terminates before $q$ calls, store its output and pad the interaction with calls on fresh registers whose outputs are discarded. Its acceptance probability is a positive linear functional of the joint Choi operator of the $q$ oracle slots: there is a positive semidefinite operator $T_q$ such that
\begin{equation}
p_\mu=\operatorname{Tr}\!\left[T_q\Omega_\mu^{(q)}\right].
\end{equation}
Here $T_q$ is the operator obtained by contracting the distinguisher's input preparation, intermediate operations, and final accepting measurement; the normalization factor associated with normalized Choi operators is absorbed into $T_q$. This representation includes arbitrary quantum memory between calls. Applying the positive functional $X\mapsto\operatorname{Tr}(T_qX)$ to Eq.~\eqref{eq:relative-choi-order} gives $(1-\delta)p_{\Haar}\leq p_{\mathcal D}\leq(1+\delta)p_{\Haar}$. Since $p_{\Haar}\leq1$, the difference is at most $\delta$.
\end{proof}

\subsubsection{An explicit amplified-design ensemble}

We use the efficiently derandomized PFC designs of Metger, Poremba, Sinha, and Yuen~\cite{MPSY24}. For an integer $t\geq1$, let $\mathfrak F_{n,2t}$ be an efficiently samplable $2t$-wise independent family of Boolean functions on $\{0,1\}^n$, and let $\mathfrak P_{n,t}$ be an efficiently samplable distribution of approximately $t$-wise independent permutations, with approximation error
\begin{equation}
\delta_{\mathsf P}\leq c_{\mathsf P}\frac{t}{N^{4t+1/2}}
\label{eq:mpsy-permutation-error}
\end{equation}
for a sufficiently small universal constant $c_{\mathsf P}>0$. Use the efficient circuit realizations in Ref.~\cite{MPSY24}, which implement the permutation unitaries as well as the phase operators. Sample the three factors independently and set
\begin{equation}
W=P_\pi\Phi_h C,\qquad
h\leftarrow\mathfrak F_{n,2t},\quad
\pi\leftarrow\mathfrak P_{n,t},\quad
C\leftarrow\operatorname{Cliff}(n).
\label{eq:mpsy-base-design}
\end{equation}
The operators $P_\pi$ and $\Phi_h$ are defined in Eq.~\eqref{eq:pfc-operators}. Denote this ensemble by $\nu_{n,t}$. Each sample has circuit size polynomial in $n$ and $t$.

\begin{proof}[Proof of Proposition~\ref{prop:efficient-haar-proxies}]
Theorem~4.4 and Lemma~4.6 of Ref.~\cite{MPSY24} give a universal constant $C_0\geq1$ such that the $t$-fold twirling channel of $\nu_{n,t}$ differs from the Haar twirling channel in diamond norm by at most
\begin{equation}
\alpha_{n,t}:=C_0t\,2^{-n/2}.
\label{eq:mpsy-base-error}
\end{equation}
For independent $W_1,\ldots,W_m\leftarrow\nu_{n,q}$, set
\begin{equation}
D:=W_m\cdots W_1.
\label{eq:mpsy-proxy-unitary}
\end{equation}
The amplification lemma in Ref.~\cite{MPSY24}, Lemma~4.11, bounds its diamond error by $\alpha_{n,q}^m$. The diamond-to-relative-error conversion in Lemma~2.7 of that work then gives
\begin{equation}
\varepsilon_{n,q,m}\leq N^{2q}\alpha_{n,q}^m
=2^{2nq-m\left(n/2-\log_2(C_0q)\right)}.
\label{eq:mpsy-relative-error}
\end{equation}
Under the hypothesis $C_0q\leq2^{n/4}$, choose
\begin{equation}
m:=\max\!\left\{1,\left\lceil
\frac{2nq+\log_2(2/\eta)}{n/2-\log_2(C_0q)}
\right\rceil\right\}.
\label{eq:mpsy-proxy-repetitions}
\end{equation}
Then $\varepsilon_{n,q,m}\leq\eta/2$ and
\begin{equation}
m\leq1+8q+\frac{4}{n}\log_2(2/\eta).
\label{eq:mpsy-proxy-repetitions-simple}
\end{equation}

For clarity, the relative-error convention of Ref.~\cite{MPSY24} places the sampled ensemble on the outside of the comparison. Writing $\mathcal D$ for the distribution of $D$, its Choi form is
\begin{equation}
(1-\varepsilon_{n,q,m})\Omega_{\mathcal D}^{(q)}
\preceq\Omega_{\Haar}^{(q)}
\preceq(1+\varepsilon_{n,q,m})\Omega_{\mathcal D}^{(q)}.
\label{eq:mpsy-relative-convention}
\end{equation}
Equivalently,
\begin{equation}
\frac{\Omega_{\Haar}^{(q)}}{1+\varepsilon_{n,q,m}}
\preceq\Omega_{\mathcal D}^{(q)}
\preceq\frac{\Omega_{\Haar}^{(q)}}{1-\varepsilon_{n,q,m}}.
\label{eq:mpsy-relative-inverted}
\end{equation}
Thus Eq.~\eqref{eq:relative-choi-order} holds with
\begin{equation}
\delta_{n,q,m}:=\frac{\varepsilon_{n,q,m}}{1-\varepsilon_{n,q,m}}\leq\eta.
\label{eq:mpsy-convention-conversion}
\end{equation}
Lemma~\ref{lem:relative-design-proxy} proves the proxy guarantee. The circuit consists of $m$ independent samples from $\nu_{n,q}$, so Eq.~\eqref{eq:mpsy-proxy-repetitions-simple} gives sampling time and circuit size polynomial in $n,q,\log(1/\eta)$.
\end{proof}

\subsubsection{Bidirectional proxies}
\label{app:bidirectional-proxy}

To justify the strong extension, a sufficiently accurate diamond-error design also controls adaptive interactions that mix forward and inverse calls. We give the conversion explicitly because the forward relative-order argument above does not directly apply to inverse slots.

Let $\mathcal M_\mu^{(q)}(X)=\mathbb E_{U\leftarrow\mu}[U^{\otimes q}X(U^\dagger)^{\otimes q}]$, and suppose
\begin{equation}
\bigl\|\mathcal M_{\mathcal D}^{(q)}-\mathcal M_{\Haar}^{(q)}\bigr\|_\diamond\leq\varepsilon.
\label{eq:app-bidirectional-design-error}
\end{equation}
Evaluating this difference on a normalized maximally entangled state gives $\|\Omega_{\mathcal D}^{(q)}-\Omega_{\Haar}^{(q)}\|_1\leq\varepsilon$. For the swap $S$ between the two registers of one Choi operator,
\begin{equation}
J(U^\dagger)=S J(U)^{\mathsf T}S^\dagger,
\label{eq:app-inverse-choi}
\end{equation}
where $\mathsf T$ denotes full transpose in the computational basis.

Pad a bidirectional interaction to $q$ calls as above, and partition its accepting event according to the classical direction string $s\in\{+,-\}^q$. Each accepting branch, kept unnormalized, defines a positive operator $T_s$ on the $q$ Choi slots. Moreover, $\|T_s\|_\infty\leq\operatorname{Tr}T_s\leq N^{2q}$: replacing every slot by the completely depolarizing channel, whose normalized Choi operator is $I/N^2$, makes $N^{-2q}\operatorname{Tr}T_s$ a probability. The averaged Choi operator for direction string $s$ is obtained from $\Omega_\mu^{(q)}$ by transposing the entire Choi slots corresponding to inverse calls and swapping their input and output registers. Denote this linear map by $\mathcal R_s$. Transposition on any collection of slots and register swaps preserve the Hilbert--Schmidt norm. Since the full Choi space has dimension $N^{2q}$,
\begin{equation}
\|\mathcal R_s(X)\|_1
\leq N^q\|\mathcal R_s(X)\|_2
=N^q\|X\|_2
\leq N^q\|X\|_1.
\end{equation}
Applying these bounds to $X=\Omega_{\mathcal D}^{(q)}-\Omega_{\Haar}^{(q)}$ and summing over the at most $2^q$ direction strings yields
\begin{equation}
|p_{\mathcal D}-p_{\Haar}|
\leq 2^q N^{3q}\varepsilon.
\label{eq:app-bidirectional-proxy-bound}
\end{equation}
The branch decomposition allows each direction to depend on all previous observations; it imposes no restriction on the intervening quantum operations or retained workspace.

For the ensemble in Eq.~\eqref{eq:mpsy-proxy-unitary}, take
\begin{equation}
m:=\max\!\left\{1,\left\lceil
\frac{3nq+q+\log_2(1/\eta)}{n/2-\log_2(C_0q)}
\right\rceil\right\}.
\label{eq:app-bidirectional-repetitions}
\end{equation}
Its diamond error is at most $\alpha_{n,q}^m\leq\eta/(2^qN^{3q})$, so Eq.~\eqref{eq:app-bidirectional-proxy-bound} proves the bidirectional $(q,\eta)$ guarantee. The denominator remains at least $n/4$, and hence $m=O(q+\log(1/\eta)/n)$. Both the sampled circuit and its inverse are efficient. Bidirectional proxies therefore have the same polynomial dependence on $n,q,\log(1/\eta)$ as forward proxies.

\subsection{Consequences for the generic construction}

\begin{proof}[Proof of Corollary~\ref{cor:classical-existence}]
Fix a nonuniform QPT adversary and a polynomial $q(\lambda)$ bounding its calls on every execution. Since $n(\lambda)=\omega(\log\lambda)$, the condition $C_0q(\lambda)\leq2^{n(\lambda)/4}$ holds for all sufficiently large $\lambda$. Proposition~\ref{prop:efficient-haar-proxies} with $\eta=2^{-\lambda}$ provides an efficient proxy for which $2q\eta$ is negligible. The PRF term in Eq.~\eqref{eq:classical-bound} is negligible by assumption. The sum of PRU advantages is also negligible: otherwise, a nonuniform distinguisher could hardwire an index $j=j(\lambda)$ maximizing that advantage and achieve at least $1/q(\lambda)$ of the sum, contradicting single-instance PRU security. The circuits for all these reductions have a common polynomial size bound. This proves the corollary.
\end{proof}

\begin{proof}[Proof of Corollary~\ref{cor:classical-strong}]
Choose the efficient bidirectional proxy above with $\eta=2^{-\lambda}$ and $q(\lambda)$ bounding all forward and inverse calls. Equation~\eqref{eq:app-strong-classical-bound}, post-quantum PRF security, and strong PRU security give the result. The polynomial sum is handled by the same nonuniform maximizing-index argument.
\end{proof}

In both proofs, the proxy is chosen after fixing the adversary's polynomial query bound. Its parameters affect only the reduction: the construction $G_k(x)=V_{F_k(x)}$ and its key-generation algorithm remain independent of $q$ and $\eta$.

\subsection{Security of the PFC instantiation}
\label{app:classical-pfc}

We now prove Corollary~\ref{cor:classical-pfc} using the inner family in Eq.~\eqref{eq:pfc-inner-pru}. First assume that the sampler produces an exact unitary $2$-design. For an inner distinguisher $\mathcal B$ making at most $q$ forward calls, consider three experiments: $\mathsf H_0$ uses the keyed PFC unitary, $\mathsf H_1$ replaces the keyed Boolean function by a uniformly random function $h$, and $\mathsf H_2$ also replaces the keyed permutation by a uniformly random permutation $\pi$. Every sampled object is held fixed throughout the interaction.

The transition from $\mathsf H_0$ to $\mathsf H_1$ is a quantum-PRF reduction. It samples the permutation key and $2$-design element locally and implements each phase operation from one query to its Boolean-function challenge by phase kickback. Thus the reduction $\mathcal B_\Phi$ makes at most $q$ challenge queries.

For the permutation transition, the reduction must implement the random phase function efficiently. A $2q$-wise independent Boolean function suffices exactly: after fixing all independent side randomness, the acceptance probability of a computation with at most $q$ phase queries is a polynomial of degree at most $2q$ in the signs $(-1)^{h(z)}$. Its expectation therefore agrees under a uniform function and under a $2q$-wise independent function. This remains true for adaptive computations by deferring intermediate measurements. The permutation reduction may consequently sample the latter function and implement its phase oracle locally.

With the strong quantum-PRP interface used in Section~\ref{subsec:classical-pfc}, an in-place permutation can be implemented by computing $\pi(z)$ into a clean register, swapping the two registers, and uncomputing the old input using $\pi^{-1}$. Each local permutation therefore costs one forward and one inverse function-oracle query. The resulting distinguisher $\mathcal B_{\mathsf P}$ makes at most $2q$ total challenge queries. Combining the two computational transitions with the adaptive PFC bound of Ref.~\cite{MaHuang25} gives
\begin{align}
\Adv_{V^{\PFC}}^{\mathsf{pru},\rightarrow}(\mathcal B;\lambda)
&\leq\Adv_\pi^{\mathsf{sqprp}}(\mathcal B_{\mathsf P};\lambda)
+\Adv_h^{\mathsf{qprf}}(\mathcal B_\Phi;\lambda)
+\frac{4q(q-1)}{2^{n(\lambda)}+1}.
\label{eq:pfc-inner-bound}
\end{align}
Here the first advantage uses the ordinary single-instance strong quantum-PRP experiment, and the second uses the single-instance quantum-PRF experiment, both with coherent data queries. The PFC bound applies directly for $q\leq2^{n(\lambda)}$; beyond that range the displayed statistical term already exceeds one, so the inequality remains valid.

Applying Eq.~\eqref{eq:pfc-inner-bound} to the inner distinguishers in Theorem~\ref{thm:classical-lifting} yields
\begin{align}
\Adv_{G^{\PFC}}^{\mathsf{cl},\rightarrow}(\mathcal A;\lambda)
&\leq\Adv_F^{\mathsf{prf}}(\mathcal B_F;\lambda)
+\sum_{j=1}^{q}\Adv_\pi^{\mathsf{sqprp}}(\mathcal B_{\mathsf P,j};\lambda)\nonumber\\
&\quad+\sum_{j=1}^{q}\Adv_h^{\mathsf{qprf}}(\mathcal B_{\Phi,j};\lambda)
+\frac{4q^2(q-1)}{2^{n(\lambda)}+1}+2q\eta.
\label{eq:pfc-classical-bound}
\end{align}
If the implemented $2$-design sampler has total-variation error $\varepsilon_{\mathsf C}$, add at most $q\varepsilon_{\mathsf C}$ to this bound, once for each inner hybrid. Taking $\eta=2^{-\lambda}$, the proxy term and the PFC term are negligible when $n(\lambda)=\omega(\log\lambda)$. The primitive-security assumptions and sampler-accuracy assumption in Corollary~\ref{cor:classical-pfc} control the remaining terms, proving that corollary. Label indexing is provided by the outer PRF; the inner reductions require only single-instance function and permutation security.

\clearpage

\section{Security proof for the coherent-label construction}
\label{app:coherent}

We prove Theorems~\ref{thm:pfc-haar-closeness} and~\ref{thm:coherent-pfc-construction}, followed by Corollary~\ref{cor:coherent-existence}. Our proof uses a label-indexed version of the forward path-recording method of Ref.~\cite{MaHuang25}. Ref.~\cite[Sec.~6.2]{FoxmanEtAl26} studies the same ideal family under the name \emph{unitary Haar cipher} and gives a recording description in terms of label--input--output triples. Our label-indexed path-recording analysis was developed independently, before Ref.~\cite{FoxmanEtAl26} appeared on arXiv, and is presented here to give a self-contained proof of the indexed PFC comparison of Eq.~\eqref{eq:controlled-pfc-ensemble} and the independent Haar family of Eq.~\eqref{eq:classical-controled-unitary} are compared with the same recorder. Their difference is controlled by the probability of repeating an input at the same label. The analysis preserves interference across labels and avoids any dependence on the size of the label domain. We then replace the random phase and permutation layers by the keyed primitives from Section~\ref{subsec:coherent-primitives}.

Fix the security parameter $\lambda$. As in Section~\ref{subsec:coherent-primitives}, the label set is $\X_\lambda$, the data-basis domain is $\mathcal Z_\lambda=\{0,1\}^{n(\lambda)}$, and $N=|\mathcal Z_\lambda|$. The zero-query case is immediate, so assume $1\leq q<N$. By purification and deferred measurement, a $q$-query adversary can be represented by isometries
\[
\mathcal A_j:\mathcal H_{LA}\otimes\mathcal H_{B_{j-1}}
\longrightarrow \mathcal H_{LA}\otimes\mathcal H_{B_j},
\qquad j=1,\ldots,q,
\]
where $B_j$ contains all private workspace, reference systems, and purifying registers at round $j$. The initial state $\ket{\psi_0}_{LAB_0}$ is normalized and independent of the oracle. Consistently with Eq.~\eqref{eq:coherent-adversary-state-main}, the state immediately after the final call is
\begin{equation}
\ket{\Psi_q^W}
:=(W_{LA}\mathcal A_q)\cdots(W_{LA}\mathcal A_1)
\ket{\psi_0}_{LAB_0},
\label{eq:coh-adversary-state}
\end{equation}
with identities on the current workspace implicit. The same sampled oracle is used at every call, as in Eqs.~\eqref{eq:coherent-oracle-access} and~\eqref{eq:coherent-haar-oracle-access}. The adversarial isometries are independent of the hidden oracle choices; adaptation is represented through the workspace. Any final channel or measurement is covered by contractivity of trace distance. Fewer than $q$ queries can be handled by retaining the decision and using dummy registers for the remaining calls. Only forward oracle access is supplied.

\subsection{Indexed records and the path-recording isometry}
\label{app:indexed-path-recording}

An indexed record stores the labels, inputs, and outputs of $t$ queries as a multiset
\begin{align}
R&=\{\!\{a_1,\ldots,a_t\}\!\},\nonumber\\[-1mm]
a_i&=(x_i,z_i,y_i)\in\X_\lambda\times\mathcal Z_\lambda\times\mathcal Z_\lambda.
\label{eq:coh-indexed-record}
\end{align}
Let $R_x$ be the multiset of pairs $(z,y)$ with label $x$, and set $r_x:=|R_x|$. Write $\Dom(R_x)$ and $\Img(R_x)$ for its sets of input and output coordinates. A record is \emph{output-injective} if the outputs in each $R_x$ are distinct, and \emph{good} if its inputs are also distinct. Thus a good record specifies a partial bijection at each label. Reusing a basis index at a different label is allowed.

One record slot has Hilbert space $\mathcal H_{\mathrm{rec}}:=\mathbb C^{\X_\lambda}\otimes\mathbb C^N\otimes\mathbb C^N$, whose three factors store the label, input, and output. For the fixed query bound $q$, take
\[
\mathcal H_E:=\bigoplus_{t=0}^{q}\operatorname{Sym}^t(\mathcal H_{\mathrm{rec}}).
\]
The length-zero sector is spanned by $\ket{\varnothing}$. Choose any ordering $(a_1,\ldots,a_t)$ of the entries of $R$, and let $m_a$ be the multiplicity of the triple $a$. The corresponding normalized symmetric state is
\begin{equation}
\ket{R}:=\frac{1}{\sqrt{t!\prod_a m_a!}}\sum_{\sigma\in\mathsf S_t}\ket{a_{\sigma(1)},\ldots,a_{\sigma(t)}}.
\label{eq:coh-relation-state}
\end{equation}
Here $\mathsf S_t$ permutes the $t$ slots. These states form an orthonormal basis indexed by multisets, independent of the chosen ordering. For $t<q$, define the recording map on output-injective records by
\begin{align}
V\ket{x,z}_{LA}\ket{R}_{E}
&:=\frac{1}{\sqrt{N-r_x}}\sum_{y\notin\Img(R_x)}\ket{x,y}_{LA}\nonumber\\[-1mm]
&\hspace{22mm}\otimes\ket{R\uplus\{(x,z,y)\}}_{E},
\label{eq:coh-path-oracle}
\end{align}
where $\uplus$ denotes multiset union. Only outputs already used at label $x$ are excluded, leaving $N-r_x$ choices. The query label is unchanged and the record register is hidden from the adversary.

\begin{lemma}[Indexed path isometry]
\label{lem:coh-path-isometry}
The map in Eq.~\eqref{eq:coh-path-oracle} is an isometry from each output-injective length-$t$ sector to the output-injective length-$(t+1)$ sector, for $0\leq t<q<N$.
\end{lemma}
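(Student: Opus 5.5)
We verify the isometry property by checking that $V$ maps the orthonormal basis $\{\ket{x,z}_{LA}\ket{R}_E\}$ of the output-injective length-$t$ sector (with $R$ ranging over output-injective multisets of size $t$) to an orthonormal family in the length-$(t+1)$ sector. First, well-definedness and the target sector: for each $y\notin\Img(R_x)$, the record $R\uplus\{(x,z,y)\}$ has distinct outputs at label $x$ and leaves the other labels unchanged. It is therefore output-injective of length $t+1$. Since $r_x\leq t<q<N$, the index set has $N-r_x\geq 1$ elements and the normalization is finite. Each image vector has norm one, because it is a uniform superposition of $N-r_x$ distinct orthonormal vectors $\ket{x,y}\ket{R'}$. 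These vectors are distinct already in the $A$ coordinate $y$.

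The main step is orthogonality of images of distinct basis inputs $(x,z,R)\neq(x',z',R')$. Any cross term $\bra{x,y}\bra{R\uplus\{(x,z,y)\}}\,\ket{x',y'}\ket{R'\uplus\{(x',z',y')\}}$ is nonzero only if $x=x'$, $y=y'$, and
\[R\uplus\{(x,z,y)\}=R'\uplus\{(x,z',y)\}\]
as multisets. We then argue that this forces $z=z'$ and $R=R'$. In the combined record $S$, the label-$x$ outputs are distinct by output injectivity, so $S$ contains exactly one entry with label $x$ and output $y$. From the left side this entry is $(x,z,y)$, because $y\notin\Img(R_x)$. From the right side it is $(x,z',y)$, because $y\notin\Img(R'_x)$. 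Hence $z=z'$, and removing this unique entry from $S$ gives $R=R'$.

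Thus distinct basis inputs have orthogonal images, and $V$ is an isometry on each sector. The obstacle to watch is exactly the step above. It uses output injectivity of the \emph{result} rather than only of the inputs, so that the newly appended triple can be identified inside the merged multiset. This is why injectivity is imposed separately for each label while inputs may repeat, and why no dependence on $|\X_\lambda|$ appears.
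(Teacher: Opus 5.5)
Your proof is correct and follows essentially the same route as the paper. In both, each image has unit norm; equality of the query registers forces $x=x'$ and a common new output $y$; and the unique entry with label $x$ and output $y$ in the enlarged output-injective record yields $z=z'$ and $R=R'$. You additionally check that each image lies in the output-injective length-$(t+1)$ sector and that $N-r_x\geq 1$, details the paper leaves implicit.
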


\begin{proof}
Every basis state is mapped to a unit vector. Suppose that an output term arising from $\ket{x,z,R}$ overlaps one arising from $\ket{x',z',R'}$. Equality of the query registers gives $x=x'$ and the same newly chosen output $y$. Because $y$ occurs in neither $\Img(R_x)$ nor $\Img(R'_x)$ before the call, the appended triple is the unique triple with label $x$ and output $y$ in each enlarged record. Equality of the enlarged records then implies $z=z'$ and $R=R'$. Hence images of distinct basis states are orthogonal.
\end{proof}

Starting from $\ket{\varnothing}$, repeated applications of $V$ remain in the output-injective sectors. This recorder is an information-theoretic comparison object; it need not have an efficient implementation.

\subsection{Right-unitary invariance and the collision bound}

Let $\bm D=(D_x)_{x\in\X_\lambda}$ be any family of data-register unitaries and define the corresponding controlled unitary
\[
\widehat D:=\sum_{x\in\X_\lambda}\ketbra{x}{x}_L\otimes D_x.
\]
The path-recording experiment produces the state
\begin{equation}
\ket{\Psi_q^{V\widehat D}}:=(V\widehat D\mathcal A_q)\cdots(V\widehat D\mathcal A_1)\ket{\psi_0}_{LAB_0}\ket{\varnothing}_{E}.
\label{eq:coh-path-state-D}
\end{equation}
We abbreviate the case $\widehat D=I_{LA}$ by $\ket{\Psi_q^V}$. On one record slot define
\[
\begin{aligned}
T(\bm D)&:=\sum_{x\in\X_\lambda}\ketbra{x}{x}\otimes D_x\otimes I_N,\\
\Gamma_t(\bm D)&:=T(\bm D)^{\otimes t}\big|_{\operatorname{Sym}^t(\mathcal H_{\mathrm{rec}})}.
\end{aligned}
\]
The tensor-power action preserves the symmetric subspace and is unitary there. It also preserves output injectivity, because it changes only recorded input coordinates, not labels or outputs. Set $\Gamma_0(\bm D)=I$.

\begin{lemma}[Invariance under label-controlled input unitaries]
\label{lem:coh-right-invariance}
For every family $\bm D$ of data-register unitaries,
\begin{equation}
\ket{\Psi_q^{V\widehat D}}=\Gamma_q(\bm D)_{E}\ket{\Psi_q^{V}}.
\label{eq:coh-right-invariance}
\end{equation}
Consequently, the reduced state on the adversary's registers is independent of $\bm D$:
\[
\tr_E\ketbra{\Psi_q^{V\widehat D}}{\Psi_q^{V\widehat D}}
=\tr_E\ketbra{\Psi_q^V}{\Psi_q^V}.
\]
\end{lemma}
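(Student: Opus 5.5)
The plan is to prove the intertwining identity
\[
V\widehat D=\bigl(I_{LA}\otimes\Gamma_{t+1}(\bm D)\bigr)\,V\,\bigl(I_{LA}\otimes\Gamma_t(\bm D)^\dagger\bigr)
\]
on the output-injective length-$t$ sector, for each $0\le t<q$. The lemma then follows by induction over the $q$ rounds.

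\textbf{Step 1: the one-step identity.} Write $D_x=\sum_{z,z'}(D_x)_{z'z}\ket{z'}\!\bra{z}$. Expanding on a basis state $\ket{x,z}_{LA}\ket{R}_E$ gives
\[
V\widehat D\ket{x,z}\ket{R}=\sum_{z'}(D_x)_{z'z}\frac{1}{\sqrt{N-r_x}}\sum_{y\notin\Img(R_x)}\ket{x,y}\ket{R\uplus\{(x,z',y)\}}.
\]
The right-hand side of the intertwining identity is handled by viewing $V$ as symmetrization of an appended slot. Up to normalization, the slot-level version of $V$ attaches the unsymmetrized state $\ket{x,z,y}$ to the record and then symmetrizes.

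$T(\bm D)$ acts on the new slot exactly as $D_x$ acts on its input coordinate, and it commutes with symmetrization. The remaining issue is whether the normalization factors of Eq.~\eqref{eq:coh-relation-state} interfere.

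\textbf{Step 2: avoiding the multiplicity factors.} I would not track those factors directly. Instead I would use the embedding $\iota_t:\operatorname{Sym}^t\to(\mathcal H_{\rm rec})^{\otimes t}$ and write $V$ as
\[
V=\sqrt{t+1}\;\Pi_{\rm sym}\bigl(\mathrm{SWAP}\text{-free append}\bigr)\circ\Bigl(\sum_x\ketbra{x}{x}\otimes\tfrac{1}{\sqrt{N-r_x}}\sum_{y\notin\Img(R_x)}\ket{y}_A\ket{x,\cdot,y}_{\rm new}\Bigr).
\]
In this form the input $z$ is moved into the new slot's input coordinate, and a fresh output $y$ is placed in $A$.

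The label-dependent quantities $r_x$ and $\Img(R_x)$ depend only on label and output coordinates, and $T(\bm D)^{\otimes t}$ leaves those coordinates untouched. So $\Gamma_t(\bm D)$ commutes with the controls. This is the main obstacle: one must check that the projector onto outputs avoiding $\Img(R_x)$ is diagonal in a basis on which $T$ acts trivially. It is, since $T$ acts only on the middle tensor factor.

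Once this is checked, moving $z$ into the new slot and then applying $D_x$ there is the same as applying $D_x$ to $A$ before recording. This gives
\[
V\widehat D=\bigl(I\otimes\Gamma_{t+1}\bigr)\,V\,\bigl(I\otimes\Gamma_t^\dagger\bigr).
\]

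\textbf{Step 3: induction.} The adversary isometries $\mathcal A_j$ act on $LAB$ and commute with $\Gamma_t$, which acts on $E$. Starting from $\Gamma_0=I$ on $\ket{\varnothing}$, each round turns $\Gamma_{t}$ into $\Gamma_{t+1}$, which telescopes to Eq.~\eqref{eq:coh-right-invariance}.

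The statement about reduced states follows because $\Gamma_q(\bm D)_E$ is unitary on the symmetric sector. It therefore disappears under $\tr_E$.
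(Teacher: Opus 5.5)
Your proposal is correct and follows essentially the same route as the paper: the one-step intertwining identity $V\widehat D\,(I_{LA}\otimes\Gamma_t(\bm D))=(I_{LA}\otimes\Gamma_{t+1}(\bm D))\,V$ on the output-injective sector, then induction using that each $\mathcal A_j$ acts trivially on $E$, then unitarity of $\Gamma_q(\bm D)$ to remove it under $\tr_E$. Your form of the identity is equivalent because $\Gamma_t$ is unitary on that sector, and your $\sqrt{t+1}\,\Pi_{\rm sym}$ representation of $V$ just repackages the paper's remark that distinct label--output pairs rule out multiplicity factors; you should say explicitly that this normalization is exact only because the appended triple $(x,z,y)$ with $y\notin\Img(R_x)$ is new, i.e., by output-injectivity.
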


\begin{proof}
On an output-injective length-$t$ record, direct expansion of Eq.~\eqref{eq:coh-path-oracle} gives
\begin{equation}
V\widehat D_{LA}\bigl(I_{LA}\otimes\Gamma_t(\bm D)_{E}\bigr)=\bigl(I_{LA}\otimes\Gamma_{t+1}(\bm D)_{E}\bigr)V.
\label{eq:coh-intertwining}
\end{equation}
Both sides apply $D_x$ to the newly recorded input and $D_{x_i}$ to each preceding input with label $x_i$. Labels and outputs are unchanged, so the available outputs and the factors $N-r_x$ agree. Distinct label--output pairs ensure that no additional multiplicity factors arise. Since every $\mathcal A_j$ acts trivially on $E$, induction gives Eq.~\eqref{eq:coh-right-invariance}. Tracing out $E$ proves the reduced-state identity.
\end{proof}

Let $\Pi_{\mathrm{out}}^{(q)}$ and $\Pi_{\mathrm{good}}^{(q)}$ denote the projectors onto the output-injective and good length-$q$ record sectors. Embed these sectors in $\mathcal H_{\mathrm{rec}}^{\otimes q}$, with slot $j$ decomposed as $L_jZ_jY_j$ for its label, input, and output. On this ambient tensor product, define the projector
\begin{equation}
Q_{jk}:=\sum_{x\in\X_\lambda}\ketbra{x,x}{x,x}_{L_jL_k}\otimes\sum_{z\in\mathcal Z_\lambda}\ketbra{z,z}{z,z}_{Z_jZ_k}.
\label{eq:coh-equality-projector}
\end{equation}
Identities on all unlisted factors, including the recorded outputs, are implicit. An individual $Q_{jk}$ need not preserve the symmetric subspace, but the sum over all pairs does. On the output-injective symmetric sector, every nongood record has equal label--input pairs in at least two positions. Thus, as an inequality of quadratic forms on this sector,
\begin{equation}
\Pi_{\mathrm{out}}^{(q)}-\Pi_{\mathrm{good}}^{(q)}\preceq\sum_{1\leq j<k\leq q}Q_{jk}.
\label{eq:coh-operator-union}
\end{equation}

Now sample $\bm C=(C_x)_{x\in\X_\lambda}$ independently of the adversary's initial state and operations, with each marginal $C_x$ an exact unitary $2$-design. Correlations between different $C_x$ are allowed. The second-moment identity on two data-register copies is
\begin{align}
&\mathbb E_{C_x}\!\left[(C_x^\dagger)^{\otimes2}\left(\sum_z\ketbra{z,z}{z,z}\right)C_x^{\otimes2}\right]\nonumber\\[-1mm]
&\hspace{18mm}=\frac{2}{N+1}\Pi_{\mathrm{sym}}^{N,2}\preceq\frac{2}{N+1}I,
\label{eq:coh-two-design-twirl}
\end{align}
where $\Pi_{\mathrm{sym}}^{N,2}$ projects onto $\operatorname{Sym}^2(\mathbb C^N)$. Define $\widehat C:=\sum_x\ketbra{x}{x}_L\otimes C_x$. Since $Q_{jk}$ already projects onto equal labels, only one marginal $C_x$ enters each summand, and Eq.~\eqref{eq:coh-two-design-twirl} implies
\begin{align}
&\mathbb E_{\bm C}\!\left[(\widehat C_j^\dagger\otimes\widehat C_k^\dagger)Q_{jk}(\widehat C_j\otimes\widehat C_k)\right]\nonumber\\[-1mm]
&\hspace{43mm}\preceq\frac{2}{N+1}I.
\label{eq:coh-block-twirl}
\end{align}
In Eq.~\eqref{eq:coh-block-twirl}, $\widehat C_j$ acts on the recorded label and input of slot $j$. The state $\ket{\Psi_q^V}$ is independent of $\bm C$ and is output-injective. Moving $\widehat C$ to the record using Lemma~\ref{lem:coh-right-invariance}, and applying Eqs.~\eqref{eq:coh-operator-union} and~\eqref{eq:coh-block-twirl}, therefore gives
\begin{align}
1-\mathbb E_{\bm C}\!\left[\bra{\Psi_q^{V\widehat C}}\Pi_{\mathrm{good}}^{(q)}\ket{\Psi_q^{V\widehat C}}\right]
&\leq\binom q2\frac{2}{N+1}\nonumber\\[-1mm]
&=:\beta_q=\frac{q(q-1)}{N+1}.
\label{eq:coh-bad-weight}
\end{align}
The estimate uses only the second moment at one label at a time. It therefore applies both to the shared choice $C_x=C$ in Eq.~\eqref{eq:controlled-pfc-ensemble} and to independent Haar choices, with no factor depending on $|\X_\lambda|$.

\subsection{Exact comparison on good records}

We next purify the randomness in the permutation and phase layers of Eq.~\eqref{eq:controlled-pfc-ensemble}. Let the hidden registers $P$ and $F$ contain a permutation $\pi_x\in\mathsf{Perm}(\mathcal Z_\lambda)$ for every label $x$ and a Boolean function $f:\X_\lambda\times\mathcal Z_\lambda\to\{0,1\}$, respectively. Initialize them in the uniform superposition
\begin{equation}
\ket{\Omega}_{PF}:=\frac{1}{\sqrt{(N!)^{|\X_\lambda|}2^{N|\X_\lambda|}}}\sum_{\bm\pi,f}\ket{\bm\pi,f}_{PF},
\label{eq:coh-pf-initial}
\end{equation}
and define the purified permutation--phase oracle by
\begin{align}
&\mathsf{pfO}\bigl(\ket{x,z}_{LA}\otimes\ket{\bm\pi,f}_{PF}\bigr)\nonumber\\[-1mm]
&\hspace{8mm}:=(-1)^{f(x,z)}\ket{x,\pi_x(z)}_{LA}\otimes\ket{\bm\pi,f}_{PF}.
\label{eq:coh-pf-oracle}
\end{align}
Here $\bm\pi$ ranges over all permutation families and $f$ over all Boolean functions on $\X_\lambda\times\mathcal Z_\lambda$. Tracing out the hidden registers $PF$ recovers exactly the experiment with a persistent random function and independent persistent random permutations.

For a length-$t$ record $R$, let $\delta_{\bm\pi,R}=1$ if $\pi_x(z)=y$ for every occurrence of $(x,z,y)$ in $R$, and let $\delta_{\bm\pi,R}=0$ otherwise. The unnormalized permutation--phase state associated with $R$ is
\begin{align}
\ket{\widetilde{\mathsf{pf}}_R}
&:=\frac{1}{\sqrt{(N!)^{|\X_\lambda|}}}\sum_{\bm\pi}\delta_{\bm\pi,R}\ket{\bm\pi}\nonumber\\[-1mm]
&\quad\otimes\frac{1}{\sqrt{2^{N|\X_\lambda|}}}\sum_f
(-1)^{\sum_{i=1}^{t}f(x_i,z_i)}\ket{f}.
\label{eq:coh-algebraic-pf-vector}
\end{align}
The exponent counts every occurrence in $R$. Incompatible permutation constraints give the zero vector; repeated queries to the same input contribute with their multiplicities.

For a good record $R$, define the normalized permutation--phase relation state
\begin{align}
\ket{\mathsf{pf}_R}
&:=\frac{1}{\sqrt{\prod_x(N-r_x)!}}\sum_{\bm\pi}\delta_{\bm\pi,R}\ket{\bm\pi}\nonumber\\[-1mm]
&\quad\otimes\frac{1}{\sqrt{2^{N|\X_\lambda|}}}\sum_f
(-1)^{\sum_{i=1}^{t}f(x_i,z_i)}\ket{f}.
\label{eq:coh-normalized-pf-vector}
\end{align}
For each label $x$, exactly $(N-r_x)!$ permutations extend the partial bijection $R_x$. Hence there are $\prod_x(N-r_x)!$ compatible families $\bm\pi$, and Eq.~\eqref{eq:coh-normalized-pf-vector} has unit norm. Moreover,
\begin{equation}
\ket{\widetilde{\mathsf{pf}}_R}=\alpha_R\ket{\mathsf{pf}_R},\qquad \alpha_R:=\prod_{x\in\X_\lambda}\sqrt{\frac{(N-r_x)!}{N!}}.
\label{eq:coh-pf-scaling}
\end{equation}

\begin{lemma}[Orthogonality of good relation states]
\label{lem:coh-pf-orthogonality}
The vectors $\{\ket{\mathsf{pf}_R}:R\text{ good}\}$ are orthonormal.
\end{lemma}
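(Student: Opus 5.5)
The inner product $\langle\mathsf{pf}_R|\mathsf{pf}_{R'}\rangle$ of two good records factorizes into a permutation factor and a phase factor. I will show that it vanishes unless $R=R'$, and normalization was already established above. From Eq.~\eqref{eq:coh-normalized-pf-vector} the inner product is
\[
\frac{\#\{\bm\pi:\delta_{\bm\pi,R}=\delta_{\bm\pi,R'}=1\}}{\sqrt{\prod_x(N-r_x)!\,(N-r'_x)!}}\cdot
\frac{1}{2^{N|\X_\lambda|}}\sum_f(-1)^{\sum_{(x,z)}\left(m_R(x,z)+m_{R'}(x,z)\right)f(x,z)},
\]
where $m_R(x,z)\in\{0,1\}$ counts occurrences of the label--input pair $(x,z)$ in $R$. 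Because $R$ is good, each pair occurs at most once, and the same holds for $R'$.

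\textbf{Phase factor.} The average over uniform $f$ factorizes over $(x,z)$. It equals $1$ if $m_R(x,z)+m_{R'}(x,z)$ is even for every pair, and $0$ otherwise. Since both multiplicities lie in $\{0,1\}$, a nonzero phase factor forces $m_R=m_{R'}$. Equivalently, for each label $x$, $\Dom(R_x)=\Dom(R'_x)$.

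\textbf{Permutation factor.} Suppose the domains agree at every label. A common compatible family $\bm\pi$ must satisfy $\pi_x(z)=y$ for $(z,y)\in R_x$ and $\pi_x(z)=y'$ for $(z,y')\in R'_x$. Since each input appears exactly once in each record at that label, a common $\bm\pi$ exists only if the outputs assigned to each $z$ coincide, which means $R_x=R'_x$ as partial bijections. Doing this for all labels gives $R=R'$. Hence for $R\neq R'$ either the phase factor or the permutation count is zero.

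The main subtlety is using goodness, that is, distinct inputs per label, so that the parities determine the domains exactly. Repeated inputs would cancel mod $2$ and break the argument, which is why the lemma is restricted to good records. Records of different lengths are also covered, because equal domains force equal lengths.
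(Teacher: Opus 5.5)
Your proof is correct and follows the paper's argument. The paper says in one line that the function factor is the Fourier vector indexed by the label--input domain, so that by goodness the domains must agree; you instead compute the parity average over $f$ explicitly. After that, in both versions, a common compatible permutation family exists only when the two partial bijections coincide.
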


\begin{proof}
The function factor in Eq.~\eqref{eq:coh-normalized-pf-vector} is the Fourier-basis vector indexed by the label--input domain $\{(x,z):(x,z,y)\in R\text{ for some }y\}$. Thus records with different domains have orthogonal function factors. If two good records have the same domain but define different partial maps, their permutation factors have disjoint support. The normalization was established above.
\end{proof}

The good length-$q$ records determine a projector on $PF$ and a map from $PF$ to the record register:
\begin{align}
\widetilde\Pi_{\mathrm{good}}^{(q)}&:=\sum_{\substack{|R|=q\\R\ \mathrm{good}}}\ketbra{\mathsf{pf}_R}{\mathsf{pf}_R},\nonumber\\
\mathsf{Comp}_q&:=\sum_{\substack{|R|=q\\R\ \mathrm{good}}}\ket{R}\!\bra{\mathsf{pf}_R}.
\label{eq:coh-compression-map}
\end{align}
Lemma~\ref{lem:coh-pf-orthogonality} shows that $\widetilde\Pi_{\mathrm{good}}^{(q)}$ is a projector and $\mathsf{Comp}_q$ is an isometry on its range.

\begin{lemma}[Good-sector selection]
\label{lem:coh-good-selection}
For every length-$q$ record $R$,
\begin{equation}
\widetilde\Pi_{\mathrm{good}}^{(q)}\ket{\widetilde{\mathsf{pf}}_R}=\begin{cases}\alpha_R\ket{\mathsf{pf}_R},&R\text{ good},\\0,&R\text{ not good}.\end{cases}
\label{eq:coh-good-selection}
\end{equation}
\end{lemma}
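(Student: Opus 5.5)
The plan is a case split on whether $R$ is good. Since $\widetilde\Pi_{\mathrm{good}}^{(q)}=\sum_{R'}\ketbra{\mathsf{pf}_{R'}}{\mathsf{pf}_{R'}}$ over good length-$q$ records $R'$, it suffices to compute the overlaps $\braket{\mathsf{pf}_{R'}}{\widetilde{\mathsf{pf}}_R}$ for every good $R'$ and an arbitrary length-$q$ record $R$.

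\emph{Good case.} If $R$ is good, Eq.~\eqref{eq:coh-pf-scaling} gives $\ket{\widetilde{\mathsf{pf}}_R}=\alpha_R\ket{\mathsf{pf}_R}$. Lemma~\ref{lem:coh-pf-orthogonality} then yields $\braket{\mathsf{pf}_{R'}}{\widetilde{\mathsf{pf}}_R}=\alpha_R\delta_{R,R'}$. Only the term $R'=R$ survives, which gives the first line of Eq.~\eqref{eq:coh-good-selection}.

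\emph{Nongood case.} Suppose $R$ is not good. Then some label $x$ has a repeated input $z$ in $R_x$, appearing in two triples $(x,z,y)$ and $(x,z,y')$.
\begin{itemize}
\item If $y\neq y'$, no permutation $\pi_x$ satisfies both constraints. Hence $\delta_{\bm\pi,R}=0$ for all $\bm\pi$, so $\ket{\widetilde{\mathsf{pf}}_R}=0$ and the claim is immediate.
\item Otherwise the constraints are consistent. In that case I would examine the function factor of $\ket{\widetilde{\mathsf{pf}}_R}$. Its phase is $(-1)^{\sum_i f(x_i,z_i)}=\prod_{(x,z)}(-1)^{m_{x,z}f(x,z)}$, where $m_{x,z}$ counts occurrences of the pair $(x,z)$. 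So it is the Fourier vector $\ket{\hat S}$ indexed by the set $S_R$ of label--input pairs with \emph{odd} multiplicity. The function factor of $\ket{\mathsf{pf}_{R'}}$ for good $R'$ is the Fourier vector indexed by $\Dom(R')$, a set of exactly $q$ distinct pairs, since each pair has multiplicity one.
\item Because $R$ has length $q$ and some pair has multiplicity at least $2$, the set $S_R$ has strictly fewer than $q$ elements. Hence $S_R\neq\Dom(R')$.
\item Distinct Fourier characters on $\X_\lambda\times\mathcal Z_\lambda$ are orthogonal, and the permutation and function registers factor as a tensor product. Therefore $\braket{\mathsf{pf}_{R'}}{\widetilde{\mathsf{pf}}_R}=0$ for every good $R'$, whatever the permutation overlap.
\end{itemize}
This gives the second line of Eq.~\eqref{eq:coh-good-selection}.

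The main point needing care is this parity argument. The phase factor depends only on multiplicities mod $2$, so a pair repeated twice cancels entirely rather than merely failing to match. The resulting counting inequality $|S_R|\le q-2<q=|\Dom(R')|$ is what guarantees orthogonality without any cancellation bookkeeping on the permutation side. Everything else follows directly from the definitions and Lemma~\ref{lem:coh-pf-orthogonality}.
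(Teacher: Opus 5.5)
\emph{Verdict: one case of ``not good'' is missing, so the proof has a genuine (though easily repaired) gap; the rest is correct and matches the paper.}

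\textbf{What matches.} Your good case and your handling of records with a repeated label--input pair follow the paper's own argument: the odd-multiplicity set $S(R)$, the count $|S(R)|<q$, and orthogonality of distinct Fourier vectors. In fact the parity argument covers every record with a repeated pair $(x,z)$, whether or not the two outputs agree, so your first sub-bullet is not needed.

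\textbf{The gap.} It is in the first sentence of your nongood case: ``not good'' does not imply a repeated input. By the paper's definition, a good record must be output-injective \emph{and} have distinct inputs at each label. So a record with all label--input pairs distinct, but with two inputs at one label sharing an output, is not good and has no repeated input. Your Fourier argument fails there. Every multiplicity $m_{x,z}$ is $0$ or $1$, so $S(R)$ is the full domain of size $q$. For example, with $q=2$, the non-good record $R=\{\!\{(x,z_1,y),(x,z_2,y)\}\!\}$ and the good record $R'=\{\!\{(x,z_1,y),(x,z_2,y')\}\!\}$, with $z_1\neq z_2$ and $y'\neq y$, have identical function factors. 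Orthogonality therefore cannot come from the function register.

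\textbf{The fix.} This case is closed on the permutation side, as in the first sentence of the paper's proof. Each $\pi_x$ is injective, so no family $\bm\pi$ satisfies both $\pi_x(z_1)=y$ and $\pi_x(z_2)=y$. Hence $\delta_{\bm\pi,R}=0$ for every $\bm\pi$, and $\ket{\widetilde{\mathsf{pf}}_R}=0$ outright. Add this case for records that are not output-injective and your proof is complete.

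The case is not vacuous. The lemma is stated for every length-$q$ record, and the compression identity relies on it when it expands the purified experiment over all computational-basis histories. Those histories include ones that assign the same output to two inputs at one label.
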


\begin{proof}
If a record assigns different outputs to one input, or the same output to different inputs at one label, no permutation extends it and Eq.~\eqref{eq:coh-algebraic-pf-vector} vanishes. Otherwise, removing repeated copies leaves a partial bijection at every label. For each pair $(x,z)$, let $m_{x,z}$ be its multiplicity in $R$ and define $S(R):=\{(x,z):m_{x,z}\text{ is odd}\}$. The function factor in Eq.~\eqref{eq:coh-algebraic-pf-vector} is the Fourier vector indexed by $S(R)$. If such a consistent record of length $q$ is not good, it contains a repeated triple and therefore $|S(R)|<q$. Every good length-$q$ record has a label--input domain of size $q$, so its function factor is orthogonal to that of $R$. This proves the second case of Eq.~\eqref{eq:coh-good-selection}; the first is Eq.~\eqref{eq:coh-pf-scaling}.
\end{proof}

For a controlled unitary $\widehat D=\sum_x\ketbra{x}{x}\otimes D_x$, define
\begin{equation}
\ket{\Psi_q^{\mathsf{pfO}\widehat D}}:=(\mathsf{pfO}\widehat D\mathcal A_q)\cdots(\mathsf{pfO}\widehat D\mathcal A_1)\ket{\psi_0}_{LAB_0}\ket{\Omega}_{PF}.
\label{eq:coh-pf-adversary-state}
\end{equation}
In what follows, identity operators on registers complementary to the displayed projectors and isometries are suppressed.

\begin{lemma}[Compression identity]
\label{lem:coh-compression}
For every controlled unitary $\widehat D=\sum_x\ketbra{x}{x}\otimes D_x$,
\begin{equation}
(I_{LAB_q}\otimes\mathsf{Comp}_q)\widetilde\Pi_{\mathrm{good}}^{(q)}
\ket{\Psi_q^{\mathsf{pfO}\widehat D}}
=\Pi_{\mathrm{good}}^{(q)}\ket{\Psi_q^{V\widehat D}}.
\label{eq:coh-compression-identity}
\end{equation}
\end{lemma}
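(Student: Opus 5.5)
The plan is to expand both sides as explicit sums over query paths and compare them term by term. Since $\widehat D$ and the adversary isometries $\mathcal A_j$ do not touch the hidden registers, I would expand the adversary's evolution in the computational basis of $LA$ immediately before each oracle call. A path is a sequence of query pairs $(x_1,z_1),\ldots,(x_q,z_q)$ together with the amplitudes produced by the interleaved operations $\widehat D\mathcal A_j$. Both $\mathsf{pfO}$ and $V$ act diagonally in the label and return $\ket{x_j,y_j}$ for some output $y_j$. So both final states are sums over label--input--output sequences $(x_j,z_j,y_j)_{j\le q}$ of a common adversary amplitude, tensored with a hidden-register vector. It therefore suffices to show that, for each fixed sequence, the hidden-register vectors agree after applying $\mathsf{Comp}_q\widetilde\Pi^{(q)}_{\mathrm{good}}$ on the left and $\Pi^{(q)}_{\mathrm{good}}$ on the right.

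On the left, I would first verify by induction on $t$ that applying $\mathsf{pfO}$ along a fixed sequence to $\ket{\Omega}_{PF}$ produces exactly $\ket{\widetilde{\mathsf{pf}}_R}$, with $R=\{\!\{(x_j,z_j,y_j)\}\!\}$. Each query multiplies in the phase $(-1)^{f(x_j,z_j)}$ and restricts the permutation family to those with $\pi_{x_j}(z_j)=y_j$, which is precisely Eq.~\eqref{eq:coh-algebraic-pf-vector}. Lemma~\ref{lem:coh-good-selection} then shows that $\widetilde\Pi^{(q)}_{\mathrm{good}}$ kills non-good records and leaves $\alpha_R\ket{\mathsf{pf}_R}$ on good ones. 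Applying $\mathsf{Comp}_q$ turns this into $\alpha_R\ket{R}$. Different orderings of the same multiset contribute to the same $\ket{R}$, and this has to be matched on the right-hand side.

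On the right, iterating Eq.~\eqref{eq:coh-path-oracle} along a sequence whose record stays good gives the amplitude $\prod_{j}(N-r^{(j-1)}_{x_j})^{-1/2}$ times the normalized symmetric state $\ket{R}$. Here $r^{(j-1)}_{x_j}$ counts earlier entries with label $x_j$. Grouping by label, the factors for label $x$ are $N,N-1,\ldots,N-r_x+1$, so the product equals $\prod_x\sqrt{(N-r_x)!/N!}=\alpha_R$, independent of the ordering. Because $R$ is good, its triples are distinct, so all $m_a=1$ and no extra multiplicity factor appears in $\ket{R}$. For the same reason, the output choice $y_j$ ranges over exactly the outputs allowed by the permutation constraint. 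Paths that are not output-injective are excluded by $V$ and give zero in $\mathsf{pfO}$. Paths that are output-injective but not good are removed by $\Pi^{(q)}_{\mathrm{good}}$, since $V$ only appends entries and the final record is then non-good. Thus both sides equal $\sum_{\text{paths}}(\text{adv. amplitude})\,\alpha_R\ket{R}$ restricted to good $R$.

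The main obstacle is bookkeeping the symmetrization. The left side produces $\ket{R}$ through $\mathsf{Comp}_q$, indexed by the multiset, while the right side builds $\ket{R}$ incrementally via $\ket{R\uplus\{a\}}$. I must check that $V$'s output is literally the normalized multiset state $\ket{R\uplus\{a\}}$ with no $\sqrt{t+1}$-type factors. This holds by the definition in Eq.~\eqref{eq:coh-path-oracle}, so sequence-wise identification is legitimate. The other point to confirm is that intermediate non-good records are harmless on both sides: once a record becomes non-good it remains non-good, so it is annihilated at the end in both experiments.
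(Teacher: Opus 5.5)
Your proposal is correct and is essentially the paper's own proof. You expand both experiments over ordered label--input--output histories with a common adversarial amplitude, identify the purified hidden vector with $\ket{\widetilde{\mathsf{pf}}_R}$, apply Lemma~\ref{lem:coh-good-selection}, and match the ordering-independent normalization $\alpha_R$ term by term, including interference among orderings of the same record.

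One small wording fix concerns your statement that non-output-injective paths "give zero in $\mathsf{pfO}$." A history containing a repeated identical triple is not output-injective, yet it gives a nonzero $\ket{\widetilde{\mathsf{pf}}_R}$ in the purified experiment. It is removed only by $\widetilde\Pi^{(q)}_{\mathrm{good}}$, through the parity (Fourier) argument of Lemma~\ref{lem:coh-good-selection}, not by $\mathsf{pfO}$ itself. Your earlier invocation of that lemma for all non-good records already covers this case, so the argument stands.
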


\begin{proof}
Expand both experiments over the computational-basis labels, inputs, and outputs at successive queries. Each ordered history determines a multiset $R$ and the same product of matrix elements of $\widehat D$ and the adversarial isometries. In the purified experiment, its hidden vector is Eq.~\eqref{eq:coh-algebraic-pf-vector}. Lemma~\ref{lem:coh-good-selection} removes precisely the nongood length-$q$ records. A history ending in a good record has good prefixes at every round, and the recorder assigns it the normalization
\[
\prod_{x\in\X_\lambda}\sqrt{\frac{(N-r_x)!}{N!}}=\alpha_R.
\]
Equation~\eqref{eq:coh-pf-scaling} gives exactly the same coefficient in the purified experiment, and $\mathsf{Comp}_q\ket{\mathsf{pf}_R}=\ket R$. The projected expansions therefore agree term by term. Summing their amplitudes proves Eq.~\eqref{eq:coh-compression-identity}, including interference between histories with the same record. Both the labels and the records remain coherent throughout.
\end{proof}

\subsection{Statistical comparison with the Haar experiment}

The projections above act only on hidden registers. The following identity converts their discarded probability directly into a trace-norm bound.

\begin{lemma}[Projection on a hidden environment]
\label{lem:coh-hidden-projection}
Let $\rho_{SE}$ be a normalized state, let $Q_E$ be a projector on the hidden register, and write $Q:=I_S\otimes Q_E$. Then
\begin{equation}
\left\|\tr_E(\rho)-\tr_E(Q\rho Q)\right\|_1=1-\operatorname{Tr}(Q\rho).
\label{eq:coh-hidden-projection}
\end{equation}
\end{lemma}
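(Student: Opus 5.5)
The identity follows from a direct decomposition of $\rho$ with respect to $Q$ and its complement $Q^\perp:=I_S\otimes(I_E-Q_E)$. Since $Q+Q^\perp=I$, we write
\[
\rho=Q\rho Q+Q\rho Q^\perp+Q^\perp\rho Q+Q^\perp\rho Q^\perp .
\]
Both projectors act trivially on $S$. By cyclicity of the partial trace over $E$ for operators of the form $I_S\otimes X_E$, we get
\[
\tr_E(Q\rho Q^\perp)=\tr_E\bigl(\rho\,(I_S\otimes Q_E^\perp Q_E)\bigr)=0,
\]
and similarly $\tr_E(Q^\perp\rho Q)=0$. The cross terms therefore vanish after tracing out the hidden register, and
\[
\tr_E(\rho)-\tr_E(Q\rho Q)=\tr_E(Q^\perp\rho Q^\perp).
\]
This is the step that uses the hypothesis that the projection acts only on $E$.

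Next, $Q^\perp\rho Q^\perp$ is positive semidefinite, and the partial trace is a positive map. Hence $\tr_E(Q^\perp\rho Q^\perp)\succeq0$, and its trace norm equals its trace:
\[
\operatorname{Tr}(Q^\perp\rho Q^\perp)=\operatorname{Tr}(Q^\perp\rho)=1-\operatorname{Tr}(Q\rho),
\]
where the last equality uses normalization of $\rho$. This proves Eq.~\eqref{eq:coh-hidden-projection}.

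There is no real obstacle. The only point needing care is the vanishing of the cross terms, which requires $Q$ to be supported on $E$ alone; if $Q$ acted nontrivially on $S$, one would instead get only the gentle-measurement inequality with a square root. The equality is what lets us apply it to the good-sector projections $\widetilde\Pi_{\mathrm{good}}^{(q)}$ and $\Pi_{\mathrm{good}}^{(q)}$. The discarded weight bounded in Eq.~\eqref{eq:coh-bad-weight} then converts linearly into trace distance, which we combine with Lemma~\ref{lem:coh-compression} to obtain Theorem~\ref{thm:pfc-haar-closeness}.
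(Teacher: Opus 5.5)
Your proposal is correct and matches the paper's proof: both use cyclicity of the partial trace over $E$ to kill the cross terms, leaving $\tr_E\bigl((I-Q)\rho(I-Q)\bigr)\succeq 0$, whose trace norm is its trace, $1-\operatorname{Tr}(Q\rho)$. Your explicit four-block decomposition only spells out the step the paper compresses into one line.
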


\begin{proof}
Cyclicity of the partial trace on $E$ eliminates the two cross terms, so $\tr_E(\rho)-\tr_E(Q\rho Q)=\tr_E((I-Q)\rho(I-Q))\succeq0$. The trace norm of this positive operator equals its trace, which is $1-\operatorname{Tr}(Q\rho)$.
\end{proof}

To compare both target ensembles with the same recorder, allow a possibly different input unitary at each label and define
\begin{equation}
W_{\PFC(\bm C)}:=\sum_{x\in\X_\lambda}\ketbra{x}{x}\otimes P_{\pi_x}\Phi_{f_x}C_x,
\label{eq:coh-general-indexed-pfc}
\end{equation}
Here $\bm C=(C_x)_{x\in\X_\lambda}$ has a specified joint distribution, $f$ is uniformly random on $\X_\lambda\times\mathcal Z_\lambda$, and the $\pi_x$ are independent uniformly random permutations; the function, permutation family, and $\bm C$ are sampled independently. Let $\rho_{\PFC(\bm C)}^{\mathcal A}$ be the adversary's final state averaged over these choices. The reference state on $LAB_q$ is
\[
\rho_V^{\mathcal A}:=\tr_E\ketbra{\Psi_q^V}{\Psi_q^V}.
\]

\begin{lemma}[Comparison with the indexed recorder]
\label{lem:coh-pfc-to-recorder}
If every marginal $C_x$ is an exact unitary $2$-design, with arbitrary correlations among labels, then
\begin{equation}
\left\|\rho_{\PFC(\bm C)}^{\mathcal A}-\rho_V^{\mathcal A}\right\|_1
\leq2\beta_q=\frac{2q(q-1)}{N+1}.
\label{eq:coh-pfc-to-recorder}
\end{equation}
\end{lemma}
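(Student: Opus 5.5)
The plan is to condition on $\bm C$, compare the purified permutation--phase experiment with the recorder for each fixed $\bm C$, and then average over $\bm C$ using the collision bound of Eq.~\eqref{eq:coh-bad-weight}. Fix $\bm C$ and write $\widehat C=\sum_x\ketbra{x}{x}\otimes C_x$. Since $W_{\PFC(\bm C)}=\widehat P_{\bm\pi}\widehat\Phi_f\widehat C$ and tracing $PF$ from the purified state recovers the average over $(f,\bm\pi)$, we have
\[
\mathbb E_{f,\bm\pi}\ketbra{\Psi_q^{W_{\PFC(\bm C)}}}{\Psi_q^{W_{\PFC(\bm C)}}}=\tr_{PF}\ketbra{\Psi_q^{\mathsf{pfO}\widehat C}}{\Psi_q^{\mathsf{pfO}\widehat C}}=:\rho_1(\bm C).
\]
Similarly, Lemma~\ref{lem:coh-right-invariance} gives $\rho_V^{\mathcal A}=\tr_E\ketbra{\Psi_q^{V\widehat C}}{\Psi_q^{V\widehat C}}=:\rho_2(\bm C)$ for every $\bm C$. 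It therefore suffices to bound $\mathbb E_{\bm C}\|\rho_1(\bm C)-\rho_2(\bm C)\|_1$ and then use convexity of the trace norm.

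For fixed $\bm C$, we insert the two projected states. Applying Lemma~\ref{lem:coh-hidden-projection} with $Q_E=\widetilde\Pi^{(q)}_{\mathrm{good}}$ on $PF$ gives
\[
\|\rho_1-\tr_{PF}(\widetilde\Pi\,\Psi^{\mathsf{pfO}}\,\widetilde\Pi)\|_1=1-\|\widetilde\Pi_{\mathrm{good}}\ket{\Psi_q^{\mathsf{pfO}\widehat C}}\|^2.
\]
Applying the same lemma with $\Pi^{(q)}_{\mathrm{good}}$ on $E$ gives
\[
\|\rho_2-\tr_E(\Pi\,\Psi^V\Pi)\|_1=1-\|\Pi_{\mathrm{good}}\ket{\Psi_q^{V\widehat C}}\|^2.
\]
By Lemma~\ref{lem:coh-compression}, the two projected states are related by the isometry $\mathsf{Comp}_q$ acting only on the hidden register. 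The isometry is defined on the range of $\widetilde\Pi_{\mathrm{good}}$, so the two partial traces coincide exactly and the corresponding norms are equal. The triangle inequality then yields
\[
\|\rho_1(\bm C)-\rho_2(\bm C)\|_1\le 2\bigl(1-\|\Pi^{(q)}_{\mathrm{good}}\ket{\Psi_q^{V\widehat C}}\|^2\bigr).
\]

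Averaging over $\bm C$ and invoking Eq.~\eqref{eq:coh-bad-weight} bounds the right-hand side by $2\beta_q$. That inequality was derived precisely under the hypothesis that each marginal $C_x$ is an exact $2$-design, with arbitrary correlations across labels, and that $\bm C$ is independent of the adversary. This gives Eq.~\eqref{eq:coh-pfc-to-recorder}.

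The main point to check is the exact equality of the two projected partial traces. The partial trace is invariant under an isometry on the traced register, but $\mathsf{Comp}_q$ is only isometric on the range of $\widetilde\Pi_{\mathrm{good}}$, and the projected purified state lies in that range by construction. Combined with the norm equality implied by Lemma~\ref{lem:coh-compression}, this makes the two bad-weight terms identical, which is where the factor $2$ comes from. A second point is that the adversary's registers $LAB_q$ are left untouched throughout. Every bound above is taken over the joint state on $LAB_q$ with hidden registers traced out, so no label-domain dependence enters.
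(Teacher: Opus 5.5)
Your proposal is correct and follows the paper's argument. Both proofs route the triangle inequality through the good-record projected state, identify that state across the two experiments via the compression identity and invariance of the partial trace under $\mathsf{Comp}_q$, and bound each leg by Lemma~\ref{lem:coh-hidden-projection} together with Eq.~\eqref{eq:coh-bad-weight}. The only cosmetic difference is that you apply the triangle inequality at each fixed $\bm C$ and then average by convexity, whereas the paper first averages to form $\sigma_{\mathrm{good}}$ and reads off both legs as $1-\operatorname{Tr}\sigma_{\mathrm{good}}$; both give the same $2\beta_q$.
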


\begin{proof}
Define the subnormalized state on $LAB_q$
\begin{equation}
\sigma_{\mathrm{good}}:=\mathbb E_{\bm C}\!\left[\tr_E\!\left(
\Pi_{\mathrm{good}}^{(q)}
\ketbra{\Psi_q^{V\widehat C}}{\Psi_q^{V\widehat C}}
\Pi_{\mathrm{good}}^{(q)}\right)\right].
\label{eq:coh-common-good-state}
\end{equation}
The compression identity, Eq.~\eqref{eq:coh-compression-identity}, implies that the same state is obtained by projecting $PF$ with $\widetilde\Pi_{\mathrm{good}}^{(q)}$ in the purified permutation--phase experiment and tracing out $PF$. An isometry acting only on the retained hidden subspace does not change this partial trace.

By right-unitary invariance, the unprojected recorder has reduced state $\rho_V^{\mathcal A}$ for every $\bm C$. Lemma~\ref{lem:coh-hidden-projection}, applied in each experiment and then averaged, therefore gives
\begin{align}
\rho_V^{\mathcal A}-\sigma_{\mathrm{good}}&\succeq0,
\qquad
\rho_{\PFC(\bm C)}^{\mathcal A}-\sigma_{\mathrm{good}}\succeq0,\nonumber\\
\left\|\rho_V^{\mathcal A}-\sigma_{\mathrm{good}}\right\|_1
&=\left\|\rho_{\PFC(\bm C)}^{\mathcal A}-\sigma_{\mathrm{good}}\right\|_1\nonumber\\
&=1-\operatorname{Tr}(\sigma_{\mathrm{good}})\leq\beta_q.
\label{eq:coh-projected-comparison}
\end{align}
The last inequality is Eq.~\eqref{eq:coh-bad-weight}. Taking the triangle inequality through $\sigma_{\mathrm{good}}$ proves Eq.~\eqref{eq:coh-pfc-to-recorder}.
\end{proof}

\begin{proof}[Proof of Theorem~\ref{thm:pfc-haar-closeness}]
First take $C_x=C$ for every label, where $C$ is the common exact unitary $2$-design element in Eq.~\eqref{eq:controlled-pfc-ensemble}. Then $W_{\PFC(\bm C)}=W_{\mathrm{PFC}}$, and Lemma~\ref{lem:coh-pfc-to-recorder} gives
\begin{equation}
\left\|\rho_{\mathrm{PFC}}^{\mathcal A}-\rho_V^{\mathcal A}\right\|_1
\leq2\beta_q.
\label{eq:coh-pfc-path-norm}
\end{equation}
Next take $C_x=H_x$, where the $H_x$ are independent Haar-random unitaries in $\mathrm U(N)$. Every marginal is an exact unitary $2$-design, so the same lemma applies. Conditional on any fixed $f$ and $\bm\pi$, left invariance of Haar measure gives
\[
P_{\pi_x}\Phi_{f_x}H_x\sim\Haar(\mathrm U(N))
\]
for every $x$, and independence across labels is preserved. Consequently, Eq.~\eqref{eq:coh-general-indexed-pfc} has exactly the function-like Haar distribution defined in Eq.~\eqref{eq:classical-controled-unitary}, and
\begin{equation}
\left\|\rho_{\mathrm{Haar}}^{\mathcal A}-\rho_V^{\mathcal A}\right\|_1
\leq2\beta_q.
\label{eq:coh-haar-path-norm}
\end{equation}
The triangle inequality therefore yields
\[
\left\|\rho_{\mathrm{PFC}}^{\mathcal A}-\rho_{\mathrm{Haar}}^{\mathcal A}\right\|_1
\leq4\beta_q.
\]
Using $\td(\rho,\sigma)=\frac12\|\rho-\sigma\|_1$ and $\beta_q=q(q-1)/(N+1)$ proves Eq.~\eqref{eq:pfc-haar-closeness-bound}.
\end{proof}

\subsection{Computational security and the phase-oracle convention}
\label{app:coherent-computational}

We first make explicit the phase-oracle convention used in Theorem~\ref{thm:coherent-pfc-construction}. For any Boolean function $h:\X_\lambda\times\mathcal Z_\lambda\to\{0,1\}$, let $\widehat\Phi_h$ be the phase oracle defined as in Eq.~\eqref{eq:coherent-phase-oracle}. The phase-oracle advantage of $\mathcal B_\Phi$ is
\begin{align}
\Adv_g^{\mathsf{qprf}}(\mathcal B_\Phi;\lambda)
&:=\Bigl|\Pr_{k_\Phi}\!\left[\mathcal B_\Phi^{\widehat\Phi_{g_{k_\Phi}}}(1^\lambda)=1\right]\nonumber\\[-1mm]
&\hspace{9mm}-\Pr_f\!\left[\mathcal B_\Phi^{\widehat\Phi_f}(1^\lambda)=1\right]\Bigr|,
\label{eq:app-coh-phase-advantage}
\end{align}
where $k_\Phi$ is sampled by the qPRF key generator and $f$ is uniformly random on $\X_\lambda\times\mathcal Z_\lambda$. Both challenge oracles are persistent and allow adaptive queries in superposition over labels and data. This is the advantage used in Theorem~\ref{thm:coherent-pfc-construction}; the phase-kickback simulation below bounds it by standard qPRF security.

Fix the $q$-query adversary $\mathcal A$ from Theorem~\ref{thm:coherent-pfc-construction}. Let $p_i$ denote its acceptance probability in the following four persistent-oracle experiments:
\begin{description}
\item[$\mathsf C_0$.] Use the real construction $W_k^{\mathsf{coh}}$ from Eq.~\eqref{eq:coherent-pfc-construction}.
\item[$\mathsf C_1$.] Replace $g_{k_\Phi}$ by a uniformly random function $f:\X_\lambda\times\mathcal Z_\lambda\to\{0,1\}$, leaving $\Pi_{k_P}$ and $C$ unchanged.
\item[$\mathsf C_2$.] In addition, replace $\Pi_{k_P}$ by a persistent family $\bm\pi=(\pi_x)_{x\in\X_\lambda}$ of independent uniformly random permutations. This is the ideal indexed PFC experiment of Eq.~\eqref{eq:controlled-pfc-ensemble}.
\item[$\mathsf C_3$.] Use the function-like Haar oracle $W_{\mathrm{Haar}}$ from Eq.~\eqref{eq:classical-controled-unitary}.
\end{description}

For the first transition, the phase-oracle distinguisher $\mathcal B_\Phi$ samples $k_P$ and $C$ independently and then runs $\mathcal A$. Whenever $\mathcal A$ makes a coherent-PRFU query, $\mathcal B_\Phi$ applies $I_L\otimes C$, invokes its phase-oracle challenge once on $LA$, and applies the locally implemented permutation $\widehat P_{\Pi_{k_P}}$. If the challenge is $\widehat\Phi_{g_{k_\Phi}}$, the simulation is exactly $\mathsf C_0$; if the challenge is $\widehat\Phi_f$ for uniformly random $f$, it is exactly $\mathsf C_1$. Consequently,
\begin{equation}
|p_0-p_1|=\Adv_g^{\mathsf{qprf}}(\mathcal B_\Phi;\lambda),
\label{eq:coh-qprf-hybrid}
\end{equation}
and $\mathcal B_\Phi$ makes one phase-oracle query per call made by $\mathcal A$, for a total of at most $q$ queries. The sampled key and design element have efficient implementations, so $\mathcal B_\Phi$ is nonuniform QPT whenever $\mathcal A$ is.

To relate this reduction to standard qPRF security, encode $(x,z)$ as the corresponding $(d(\lambda)+n(\lambda))$-bit input. For a Boolean function $h$, the XOR function oracle is
\begin{equation}
Q_h\ket{x,z,b}=\ket{x,z,b\oplus h(x,z)},\qquad b\in\{0,1\}.
\label{eq:app-coh-xor-oracle}
\end{equation}
Preparing an auxiliary answer qubit in $\ket{-}:=(\ket0-\ket1)/\sqrt2$ gives
\begin{equation}
Q_h\bigl(\ket{x,z}\ket{-}\bigr)
=(-1)^{h(x,z)}\ket{x,z}\ket{-}.
\label{eq:app-coh-phase-kickback}
\end{equation}
The answer qubit remains unchanged and unentangled. Hence a standard qPRF distinguisher simulates every phase query exactly using one XOR-oracle query. Uniform randomness on the full bit-string domain restricts to uniform randomness on $\X_\lambda\times\mathcal Z_\lambda$, so both challenge branches are preserved. The resulting distinguisher has the same advantage and makes at most $q$ queries.

In the second transition, the permutation distinguisher must implement the random phase layer efficiently. A $2q$-wise independent function suffices. Let $\mathcal K_{2q}$ be an efficiently samplable and reversibly evaluable family of such Boolean functions on $\X_\lambda\times\mathcal Z_\lambda$.

\begin{lemma}[Limited independence for the phase layer]
\label{lem:coh-limited-independence}
Consider a quantum algorithm whose only access to $f$ is through at most $q$ calls to $\widehat\Phi_f$. Its initial state and intervening operations, including any other oracle, are independent of $f$. Then its average acceptance probability is identical for a uniformly random $f$ and for $f\leftarrow\mathcal K_{2q}$. This remains true with any additional randomness sampled independently of $f$.
\end{lemma}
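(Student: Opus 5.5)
The plan is the standard polynomial-method argument, applied after purifying the algorithm and fixing its other randomness. First I condition on all randomness sampled independently of $f$: the other oracles, such as the permutation family $\bm\pi$ or the key $k_P$ used in the reduction, and the design element $C$. Conditioned on these, the algorithm is a fixed sequence of isometries $\mathcal B_1,\ldots,\mathcal B_q$ on a workspace, interleaved with $\widehat\Phi_f$. Intermediate measurements are deferred and any other oracle is treated as part of the fixed isometries. If the two acceptance probabilities agree for every fixing, averaging over the independent side randomness gives the full claim. For the same reason, adaptivity never enters: after deferral there are no classical choices that depend on $f$.

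Next, I expand the final state in the computational basis of the label and data registers at each query. Writing $\widehat\Phi_f=\sum_{x,z}s_{x,z}\ketbra{x,z}{x,z}$ with signs $s_{x,z}:=(-1)^{f(x,z)}$, the final state is
\[
\ket{\Psi_f}=\sum_{(x_1,z_1),\ldots,(x_q,z_q)}\Bigl(\prod_{i=1}^{q}s_{x_i,z_i}\Bigr)\ket{v_{(x_1,z_1),\ldots,(x_q,z_q)}},
\]
where the vectors $\ket{v_{\cdot}}$ do not depend on $f$. The acceptance probability is $\bra{\Psi_f}\Pi_{\mathrm{acc}}\ket{\Psi_f}$. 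This is a linear combination, with $f$-independent coefficients, of products $\prod_{i=1}^{q}s_{x_i,z_i}\prod_{i=1}^{q}s_{x'_i,z'_i}$. Each such product is a monomial in the signs of degree at most $2q$. Since $s^2=1$, it reduces to $\prod_{(x,z)\in S}s_{x,z}$ for some set $S$ with $|S|\leq2q$.

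It then remains to show that each such monomial has the same expectation under both distributions. Under a uniform $f$, the expectation of $\prod_{(x,z)\in S}s_{x,z}$ is $1$ if $S=\emptyset$ and $0$ otherwise. Under $f\leftarrow\mathcal K_{2q}$, the restriction of $f$ to $S$ is uniform on $\{0,1\}^S$ because $|S|\leq2q$, so the expectation is the same. Linearity of expectation over the finitely many monomials gives equality of the conditional acceptance probabilities, and then of the averaged ones.

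I expect no real obstacle here. The one point requiring care is ensuring that nothing other than the $q$ phase calls depends on $f$. This is exactly the hypothesis of the statement, and it excludes reductions that evaluate $f$ elsewhere. In particular, the permutation reduction in $\mathsf C_1\to\mathsf C_2$ evaluates the $2q$-wise independent function reversibly only inside each simulated phase call. Because its workspace is uncomputed, each call realizes $\widehat\Phi_f$ exactly, so the hypothesis holds.
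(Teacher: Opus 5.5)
Your proposal is correct and follows essentially the same route as the paper. Both arguments fix the side randomness independent of $f$ and purify intermediate measurements. Both then show that the acceptance probability is a polynomial of degree at most $2q$ in the signs $s_{x,z}$, multilinearize it using $s_{x,z}^2=1$, and match monomial expectations via $2q$-wise independence. Your explicit computational-basis path expansion is simply an unrolled form of the paper's inductive degree count over successive queries.
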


\begin{proof}
Fix all independently sampled randomness, including the permutation challenge, and purify intermediate measurements. Write $s_{x,z}:=(-1)^{f(x,z)}$. After $j$ phase queries, every amplitude is a polynomial of degree at most $j$ in the signs $s_{x,z}$: an intervening isometry preserves the degree, and a phase query multiplies each basis component by one sign. The final acceptance probability therefore has degree at most $2q$. Using $s_{x,z}^2=1$, multilinearize this polynomial. Every resulting monomial involves at most $2q$ distinct inputs, and its expectation is the same under uniform and $2q$-wise independent functions. Finally average over the fixed randomness. Adaptively obtained measurement outcomes are included in the purified computation, not conditioned upon as independent randomness.
\end{proof}

For completeness, set $m:=\max\{d(\lambda)+n(\lambda),\lceil\log_2(2q)\rceil,1\}$ and injectively encode $(x,z)$ in $\mathbb F_{2^m}$. Sample the $2q$ coefficients of a polynomial of degree at most $2q-1$ uniformly and independently in this field. Evaluate it at the encoded input and apply a fixed nonzero $\mathbb F_2$-linear functional to obtain a bit. Evaluations at any set of at most $2q$ distinct inputs are independent and uniform. A field representation can be fixed as part of the nonuniform circuit family; reversible field arithmetic then evaluates the function with complexity polynomial in $d(\lambda)$, $n(\lambda)$, and $q$. Computing the bit, applying a phase, and uncomputing the workspace implements its phase oracle exactly. This family is used only by the reduction; the honest construction does not depend on the adversary's query bound.

The permutation distinguisher $\mathcal B_P$ samples $\widetilde f\leftarrow\mathcal K_{2q}$ and $C$ independently and runs $\mathcal A$. It answers each oracle call by applying $I_L\otimes C$, then the efficiently implemented phase $\widehat\Phi_{\widetilde f}$, and finally its forward in-place tweakable-permutation challenge. In the real branch, the last operation is $\widehat P_{\Pi_{k_P}}$; in the ideal branch, it is $\widehat P_{\bm\pi}$ for a persistent family of independent uniformly random permutations. Applying Lemma~\ref{lem:coh-limited-independence} separately in the two branches shows that their acceptance probabilities equal $p_1$ and $p_2$, respectively. Hence
\begin{equation}
|p_1-p_2|=\Adv_\Pi^{\ipqTPRP}(\mathcal B_P;\lambda).
\label{eq:coh-qtprp-hybrid}
\end{equation}
This is the advantage in Eq.~\eqref{eq:coherent-primitive-advantages} for the interface of Definition~\ref{def:forward-inplace-tprp}. The reduction is nonuniform QPT and uses at most $q$ forward in-place queries.

Finally, Theorem~\ref{thm:pfc-haar-closeness} gives
\begin{equation}
|p_2-p_3|\leq\frac{2q(q-1)}{N+1}.
\label{eq:coh-path-hybrid}
\end{equation}
By Eq.~\eqref{eq:coherent-construction-advantage}, the original adversary's advantage is $|p_0-p_3|$. Applying the triangle inequality to Eqs.~\eqref{eq:coh-qprf-hybrid}, \eqref{eq:coh-qtprp-hybrid}, and~\eqref{eq:coh-path-hybrid} gives Eq.~\eqref{eq:coherent-pfc-security-bound}, proving Theorem~\ref{thm:coherent-pfc-construction} with the claimed query bounds.

\begin{proof}[Proof of Corollary~\ref{cor:coherent-existence}]
Fix a nonuniform QPT adversary and a polynomial $q(\lambda)$ bounding its number of calls. Since $n(\lambda)=\omega(\log\lambda)$, one has $q(\lambda)<2^{n(\lambda)}$ for all sufficiently large $\lambda$, and the statistical term in Eq.~\eqref{eq:coherent-pfc-security-bound} is negligible. The phase-oracle advantage is negligible by the reduction to standard qPRF security in Eqs.~\eqref{eq:app-coh-xor-oracle}--\eqref{eq:app-coh-phase-kickback}, and the permutation advantage is negligible by Definition~\ref{def:forward-inplace-tprp}. Therefore Eq.~\eqref{eq:coherent-pfc-security-bound} is negligible, as required by Definition~\ref{def:coherent-prfu}.
\end{proof}

\subsection{Why the construction is forward only}

Sharing one design element $C$ across all labels restricts the construction to forward security. To see this, suppose $|\X_\lambda|\geq2$ and choose distinct labels $x\neq x'$. If the adversary had both forward and inverse access, it could implement the relative unitary
\begin{equation}
G_k^{\mathsf{coh}}(x')\bigl(G_k^{\mathsf{coh}}(x)\bigr)^\dagger
=P_{\Pi_{k_P,x'}}\Phi_{g_{k_\Phi,x'}}\Phi_{g_{k_\Phi,x}}P_{\Pi_{k_P,x}}^\dagger,
\label{eq:coh-common-C-cancellation}
\end{equation}
where the common factor $C$ cancels. The right-hand side is a signed permutation. By contrast, for independent Haar blocks $H_x$ and $H_{x'}$, the relative unitary $H_{x'}H_x^\dagger$ is Haar random. Apply the relative unitary to two copies of the same computational-basis state and then measure both outputs in that basis. The outcomes agree with probability one for a signed permutation, whereas the Haar-averaged collision probability is $2/(N+1)$. Thus neither Theorem~\ref{thm:pfc-haar-closeness} nor Theorem~\ref{thm:coherent-pfc-construction} asserts strong coherent-label security.

\clearpage

\section{Proofs for the channel and comb liftings}
\label{app:process-liftings}

We prove the lifting theorems of Section~\ref{sec:processes} by simulating each channel call or comb tooth with one call to the classical-label forward oracle of Eq.~\eqref{eq:classical-oracle-access}. The reductions preserve the full joint quantum state, including correlations with the distinguisher's workspace and with other active sessions. Throughout, register sizes and process horizons are polynomially bounded, and the register identifications specified by the public process shape are efficient.

\subsection{Haar--Stinespring channels}

\begin{proof}[Proof of Theorem~\ref{thm:channel-lifting}]
Let $\mathcal A$ be an adaptive labeled-channel distinguisher. The reduction $\mathcal B$ runs $\mathcal A$ and implements the dilation in Eq.~\eqref{eq:Haar-Stinespring} using its unitary challenge. On a query with classical label $x$ and input register $A$, it prepares a fresh $E_{\mathrm{in}}$ in $\ketbra{0}{0}$, queries the challenge on $AE_{\mathrm{in}}$ under label $x$, and interprets the result as $BE_{\mathrm{out}}$. It returns $B$ after discarding $E_{\mathrm{out}}$.

For an input possibly entangled with a reference $Z$, the real challenge implements exactly
\begin{equation}
\rho_{AZ}\longmapsto
\bigl(\Phi_{G_k(x)}\otimes\id_Z\bigr)(\rho_{AZ}).
\label{eq:app-channel-real-call}
\end{equation}
With the ideal challenge, the same procedure implements $\Phi_{H_x}\otimes\id_Z$, where the Haar unitary $H_x$ is reused whenever the label repeats. The environment is freshly initialized on every call in both experiments, as required by the channel interface. Thus every simulated query agrees with the corresponding real or ideal channel query, even when later labels and inputs depend on earlier outputs. The reduction outputs $\mathcal A$'s final bit, so its two acceptance probabilities agree separately with those of $\mathcal A$. This proves Eq.~\eqref{eq:channel-lifting-bound}, with exactly one forward unitary call per channel call.
\end{proof}

\subsection{Exact simulation of concurrent sessions}
\label{app:comb-session-simulator}

Fix a shape $\Sigma$ satisfying Eq.~\eqref{eq:process-Haar-shape-condition} and the injective label map $\ell$ used in Eq.~\eqref{eq:PRF-comb}. Let $\bm V=(V_u)_{u\in\mathcal X_\lambda}$ be any fixed family of unitaries on the common $D$-dimensional register. These unitaries define the labeled combs $(V_{\ell(x,1)},\ldots,V_{\ell(x,R)})$. Their concurrent interface is the one in Definition~\ref{def:function-like-prcomb}: $\mathsf{Open}(x)$ starts a session with fresh private memory, $\mathsf{Step}(s,X_i)$ invokes its next tooth, and $\mathsf{Abort}(s)$ terminates the session. Only the classical commands and labels are restricted; all data registers and retained workspace may be entangled.

\begin{lemma}[Exact concurrent simulation]
\label{lem:exact-concurrent-simulation}
For every concurrent process distinguisher $\mathcal T$ making at most $q$ tooth invocations, there is an oracle algorithm $\mathcal B^{\mathcal O_{\bm V}}$ that simulates this interface exactly using at most $q$ classical-label forward calls. If $\mathcal T$ is QPT, then so is $\mathcal B$.
\end{lemma}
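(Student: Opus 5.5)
The plan is to construct $\mathcal B$ as an explicit bookkeeping simulator and then verify exactness by induction on the number of interface commands. $\mathcal B$ runs $\mathcal T$ internally and keeps a classical session table that maps each identifier $s$ to its process label $x_s$, its next round index $i_s$, and a status flag. For each active session it also keeps a private memory register $M_{s,i_s-1}$, which is never exposed to $\mathcal T$. The commands are handled as follows.
\begin{itemize}
\item On $\mathsf{Open}(x)$, $\mathcal B$ allocates a fresh identifier $s$, prepares $M_{s,0}$ in $\ketbra{0}{0}$, records $(x,1)$, and returns $s$.
\item On $\mathsf{Step}(s,X_i)$ for an active session with $i=i_s$, $\mathcal B$ computes the classical label $u=\ell(x_s,i)$. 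It applies the fixed efficient identification $X_iM_{s,i-1}\cong A$, calls $\mathcal O_{\bm V}(u,\cdot)$ once on $A$, and applies the inverse identification $A\cong Y_iM_{s,i}$. It then returns $Y_i$, stores $M_{s,i}$, and increments $i_s$. If $i=R$, it discards $M_{s,R}$ and marks the session complete.
\item On $\mathsf{Abort}(s)$, $\mathcal B$ traces out the stored memory.
\item Invalid commands, such as a step on a closed session, are answered exactly as the interface specifies, for instance by a fixed rejection symbol.
\end{itemize}
At the end, $\mathcal B$ outputs $\mathcal T$'s bit.

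Exactness follows from an induction over the sequence of commands. The invariant is that, conditioned on the classical transcript of commands and labels, the joint state of $\mathcal T$'s workspace together with all simulated memories $\{M_{s,i_s-1}\}_s$ equals the joint state of $\mathcal T$'s workspace and the genuine private memories in the interface run with the combs $(V_{\ell(x,1)},\ldots,V_{\ell(x,R)})$. Each command preserves this invariant. Open appends the same fixed product state, abort applies the same partial trace, and a step applies the same unitary $V_{\ell(x_s,i)}$, because the oracle returns exactly $V_u$ on label $u$ and the same $\bm V$ persists across calls. All other registers are acted on trivially. Since $\mathcal T$'s own operations are identical in both runs, the final output distributions coincide.

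The invariant holds for every fixed family $\bm V$, so it also holds after averaging over any distribution of $\bm V$. In particular, it covers the real PRFU family and the ideal Haar table, which is what Theorem~\ref{thm:prfu-function-like-prcomb-lifting} needs. No use of injectivity of $\ell$ is required here; that property is only needed later to identify the ideal side with independent Haar teeth.

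For the resource claims, each tooth invocation triggers exactly one oracle call and opens or aborts trigger none, so $\mathcal B$ makes at most $q$ calls. $\mathcal B$'s overhead beyond $\mathcal T$ is the session table and at most polynomially many memory registers, one per opened session, and the number of sessions is bounded by $\mathcal T$'s running time. It also uses efficient evaluation of $\ell$ and of the register identifications, so $\mathcal B$ is QPT whenever $\mathcal T$ is.

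The main point requiring care is that the invariant must be stated for the joint state across all concurrently active sessions, rather than session by session. Entanglement between $\mathcal T$'s workspace and several memories means a per-session argument would not suffice. I would handle this by phrasing the invariant as an equality of the global (purified) states, with the oracle action written as $V_u$ on one tensor factor and the identity elsewhere.
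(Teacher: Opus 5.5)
Your proposal is correct and follows the paper's proof essentially step for step. It uses the same session-table simulator with one forward oracle call per tooth under the classical label $\ell(x_s,i)$. It also uses the same command-by-command induction: the workspace and all other session memories are absorbed into an untouched register, so each step is $V_{\ell(x_s,i)}\otimes I$ on an arbitrary joint state. That induction holds for each fixed $\bm V$ and hence after averaging, and your observation that injectivity of $\ell$ is only needed afterwards matches the paper.
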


\begin{proof}
The simulator runs $\mathcal T$ and keeps a classical table containing the process label $x_s$ and next-round index $i_s$ for each active session $s$. It also retains that session's private memory register $M_{s,i_s-1}$. On $\mathsf{Open}(x)$, it allocates a new identifier $s$, sets $x_s=x$ and $i_s=1$, prepares $M_{s,0}$ in $\ketbra{0}{0}$, and returns $s$. On $\mathsf{Step}(s,X_i)$, with $i=i_s$, it submits $X_iM_{s,i-1}$ to its unitary oracle under the classical label $\ell(x_s,i)$. The returned register is partitioned as $Y_iM_{s,i}$ according to $\Sigma$. The simulator returns $Y_i$ and, if $i<R$, retains $M_{s,i}$ and sets $i_s=i+1$. After the final round, or upon $\mathsf{Abort}(s)$, it discards the memory and removes the session from the active table. Invalid commands receive the same fixed error response in both experiments.

For every fixed $\bm V$, equality of the simulated and genuine joint states follows by induction over the command sequence. Memory initialization, discarding, and classical table updates coincide by construction. For a tooth invocation, let $Z$ contain the distinguisher's remaining workspace and all other active session memories. Both implementations apply
\begin{equation}
\sigma_{X_iM_{s,i-1}Z}\longmapsto
\bigl(V_{\ell(x_s,i)}\otimes I_Z\bigr)
\sigma_{X_iM_{s,i-1}Z}
\bigl(V_{\ell(x_s,i)}^{\dagger}\otimes I_Z\bigr).
\label{eq:app-concurrent-step}
\end{equation}
This identity holds for an arbitrary joint input state, so the induction includes correlations between sessions and adaptive measurements by the distinguisher. It therefore also holds after averaging over any distribution of $\bm V$. Each tooth invocation requires one oracle call. Since an efficient distinguisher issues only polynomially many commands, the session table and private memories require only polynomial resources.
\end{proof}

\subsection{Reusable and function-like pseudorandom combs}

\begin{proof}[Proof of Theorem~\ref{thm:prfu-prcomb-lifting}]
Apply Lemma~\ref{lem:exact-concurrent-simulation} with a single fixed process label and tooth labels $\ell_1,\ldots,\ell_R$. The command $\mathsf{Open}$ creates a new memory register for each execution. With the real PRFU challenge, the simulated tooth unitaries are $G_k(\ell_i)$, so the service is precisely $\mathcal C_k^\Sigma$ from Theorem~\ref{thm:prfu-prcomb-lifting}. With the ideal challenge, distinctness of the $\ell_i$ makes $(H_{\ell_1},\ldots,H_{\ell_R})$ a tuple of independent Haar unitaries. The tuple is fixed across sessions, while each session has fresh initial memory. This is exactly the reusable sequential-Haar experiment of Definition~\ref{def:prcomb}. The simulator outputs the tester's final bit, proving equality of the two distinguishing advantages in Eq.~\eqref{eq:single-prcomb-lifting}. It uses at most $q$ oracle calls for $q$ tooth invocations.
\end{proof}

\begin{proof}[Proof of Theorem~\ref{thm:prfu-function-like-prcomb-lifting}]
Use the simulator of Lemma~\ref{lem:exact-concurrent-simulation} with the full label map $\ell(x,i)$. Its real challenge implements $G_k(\ell(x,i))$ at tooth $i$ of the comb labeled $x$, exactly as in Eq.~\eqref{eq:PRF-comb}. In the ideal experiment, define
\begin{equation}
H_{x,i}:=H_{\ell(x,i)}.
\label{eq:app-indexed-haar-teeth}
\end{equation}
Injectivity of $\ell$ ensures that these are independent Haar unitaries for distinct pairs $(x,i)$. Thus different process labels select independent sequential-Haar combs, and every session with the same process label reuses the same tooth tuple. Fresh initialization at $\mathsf{Open}(x)$ gives each session its own private memory, as required by Definition~\ref{def:function-like-prcomb}. Consequently,
\begin{align}
\Pr_k\!\left[\mathcal T[\{\mathcal C_{k,x}^{\Sigma}\}_x]=1\right]
&=\Pr_k\!\left[\mathcal B^{\mathcal O_{G_k}}=1\right],\nonumber\\
\Pr_{\bm H}\!\left[\mathcal T[\{\mathcal H_x^{\Sigma}\}_x]=1\right]
&=\Pr_{\bm H}\!\left[\mathcal B^{\mathcal O_{\bm H}}=1\right].
\label{eq:app-concurrent-probability-identities}
\end{align}
Taking the absolute difference proves Eq.~\eqref{eq:function-like-prcomb-lifting}. The reduction makes one call per tooth invocation, with no additional loss from the horizon or the number of process labels.
\end{proof}

\subsection{Why different teeth need distinct labels}

The independence of the ideal teeth is essential. The example mentioned after Theorem~\ref{thm:prfu-prcomb-lifting} already distinguishes repeated use of one Haar unitary from a sequential-Haar comb without any private memory.

\begin{proposition}[Repeated and independent unitary teeth]
\label{prop:reused-pru-separation}
For a two-round memoryless shape with $D$-dimensional inputs and outputs, a tester distinguishes $(U,U)$, with $U\leftarrow\Haar(\mathrm U(D))$, from $(H_1,H_2)$, with independent Haar $H_1,H_2$, with acceptance-probability gap
\begin{equation}
\frac12\left(1-\frac1{D^2}\right).
\label{eq:reused-pru-separation}
\end{equation}
\end{proposition}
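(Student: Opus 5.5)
The tester I would use works on the normalized Choi states of the two teeth and then applies a swap test. It prepares two maximally entangled states $\ket{\Phi_D}_{X_1Z_1}$ and $\ket{\Phi_D}_{X_2Z_2}$. It sends $X_1$ through the first tooth and $X_2$ through the second, keeping $Z_1,Z_2$ in its workspace. Since the shape is memoryless, the output registers then hold the normalized Choi states $J(U_1)$ on $Y_1Z_1$ and $J(U_2)$ on $Y_2Z_2$. Here $(U_1,U_2)=(U,U)$ in one experiment and $(H_1,H_2)$ in the other. Finally, the tester runs a swap test between the pure states on $Y_1Z_1$ and $Y_2Z_2$ and accepts if the test accepts. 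This is a legitimate tester in the sense of Fig.~\ref{fig:comb}: the second input is prepared in advance and stored in the workspace.

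The swap test on pure states $\ket{a},\ket{b}$ accepts with probability $\tfrac12(1+|\braket{a}{b}|^2)$.

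In the repeated case the two Choi vectors coincide, so the acceptance probability is exactly $1$ for every $U$.

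In the independent case,
\[
|\braket{J(H_1)}{J(H_2)}|^2=\bigl|\tr(H_1^\dagger H_2)\bigr|^2/D^2.
\]
By left invariance of Haar measure, $W:=H_1^\dagger H_2$ is Haar distributed. The only computation needed is the standard second moment $\mathbb E_W|\tr W|^2=1$. This follows from the Weingarten identity $\mathbb E[W_{ii}\overline{W_{jj}}]=\delta_{ij}/D$, or from the fact that the trace is the character of the fundamental representation, which is irreducible. Hence the average overlap squared is $D^{-2}$, and the average acceptance probability is $\tfrac12(1+D^{-2})$.

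Subtracting the two acceptance probabilities gives a gap of $1-\tfrac12(1+D^{-2})=\tfrac12(1-D^{-2})$, which is Eq.~\eqref{eq:reused-pru-separation}.

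There is no real obstacle here. The only points to check are the Choi-overlap identity, that is, that the inner product of the two Choi vectors equals $\tr(H_1^\dagger H_2)/D$ up to a phase, and the second-moment value $\mathbb E|\tr W|^2=1$. I would cite both as standard facts rather than derive them.
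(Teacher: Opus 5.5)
Your proposal is correct and follows the paper's proof essentially step for step. Both feed half of $\ket{\Phi_D}$ into each tooth, run a swap test on the two Choi states (accepting with probability $1$ for $(U,U)$), and reduce the independent case to $\mathbb E_W|\tr W|^2=1$ for Haar $W=H_1^\dagger H_2$ via the second-moment identity. This gives acceptance $\tfrac12(1+D^{-2})$ and the stated gap.
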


\begin{proof}
At each tooth, the tester supplies half of the normalized maximally entangled state $\ket{\Phi_D}=D^{-1/2}\sum_{j=1}^D\ket{j,j}$ and retains the other half. It stores the first output while invoking the second tooth, then performs a swap test on the resulting normalized Choi states $\ket{J(V)}=(V\otimes I)\ket{\Phi_D}$. Repeated use of $U$ produces identical Choi states, so the test accepts with probability one. For independent Haar teeth,
\begin{equation}
\mathbb E_{H_1,H_2}\!\left[|\braket{J(H_1)}{J(H_2)}|^2\right]
=\frac1{D^2}\mathbb E_V\!\left[|\tr V|^2\right]
=\frac1{D^2},
\label{eq:haar-choi-overlap}
\end{equation}
where $V=H_1^\dagger H_2$ is Haar. The final equality follows from $\mathbb E[V_{ij}\overline{V_{kl}}]=\delta_{ik}\delta_{jl}/D$. The averaged acceptance probability is therefore $\tfrac12(1+D^{-2})$, giving the stated gap.
\end{proof}

\section{Consequences for pseudorandom processes}
\label{sec:process-consequences}

The exact simulation above also establishes security for prefixes of a process, efficient causal transformations, and a simple prediction task at a fresh process label. All three consequences use the same classical control interface and allow arbitrary quantum workspace.

\subsection{Prefix-consistent variable horizons}
\label{subsec:variable-horizon}

Fix a public polynomial bound $R_{\max}$ and a shape $\Sigma_{\max}$ satisfying Eq.~\eqref{eq:process-Haar-shape-condition} through round $R_{\max}$. Choose distinct efficiently computable labels $\ell_1,\ldots,\ell_{R_{\max}}$. For $R\leq R_{\max}$, let $\Sigma_R$ be the truncated shape, with memory discarded after round $R$, and define
\begin{equation}
\mathcal C_k^{(R)}:=\bigl(G_k(\ell_1),\ldots,G_k(\ell_R)\bigr)
\quad\text{with shape }\Sigma_R.
\end{equation}
The labels do not depend on the selected horizon. In the ideal family, all horizons likewise use prefixes of one persistent sequence $(H_i)_{i\in[R_{\max}]}$ of independent Haar teeth.

\begin{proposition}[Variable-horizon security]
\label{prop:variable-horizon}
The family $\{\mathcal C_k^{(R)}\}_{R\in[R_{\max}]}$ is prefix-consistent: terminating any longer member after round $R$ gives $\mathcal C_k^{(R)}$. If $G$ is a classical-label forward PRFU, this family is computationally indistinguishable from the ideal prefix family under adaptive horizon selection and concurrent sessions. For every QPT distinguisher making at most $q$ total tooth invocations, there is a PRFU distinguisher making at most $q$ calls with the same advantage.
\end{proposition}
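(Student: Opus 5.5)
The plan is to reduce to the exact concurrent simulator of Lemma~\ref{lem:exact-concurrent-simulation}, as in the proofs of Theorems~\ref{thm:prfu-prcomb-lifting} and~\ref{thm:prfu-function-like-prcomb-lifting}, with the horizon playing the role of a classical process label.

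\textbf{Prefix consistency.} This is immediate from the definitions. The tooth label at round $i$ is $\ell_i$, which does not depend on $R$. Running $\mathcal C_k^{(R')}$ for $R'>R$ and discarding the memory after round $R$ therefore applies exactly $G_k(\ell_1),\ldots,G_k(\ell_R)$ to fresh $\ket{0}_{M_0}$ and then discards $M_R$. Since $\Sigma_R$ is the truncation of $\Sigma_{\max}$, this is the channel sequence of $\mathcal C_k^{(R)}$. The same argument applies to the ideal prefix family, because each member uses a prefix of one persistent sequence $(H_i)_{i\in[R_{\max}]}$.

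\textbf{Simulator.} For the security claim, model adaptive horizon selection with the interface $\mathsf{Open}(R)$, $\mathsf{Step}(s,X_i)$, $\mathsf{Abort}(s)$. A session opened with horizon $R$ accepts at most $R$ steps, and its memory is discarded after round $R$. The reduction $\mathcal B$ runs the distinguisher $\mathcal T$ and keeps a classical table $(R_s,i_s)$ together with each session's memory register. On $\mathsf{Step}(s,X_i)$ it forwards $X_iM_{s,i-1}$ to its classical-label oracle under label $\ell_i$, independent of $R_s$, and then discards the memory when $i=R_s$. This is the simulator of Lemma~\ref{lem:exact-concurrent-simulation}, with the map $(R,i)\mapsto\ell_i$ replacing $\ell(x,i)$. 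The inductive argument there does not use injectivity, and it gives exact agreement of the joint state for every fixed unitary family $\bm V$. That includes arbitrary entanglement across sessions and across different horizons.

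\textbf{Ideal branch.} With the real oracle, the simulated service is exactly $\{\mathcal C_k^{(R)}\}_R$. With the ideal oracle, the teeth are $H_{\ell_i}$. Because the $\ell_i$ are distinct, $(H_{\ell_i})_{i\in[R_{\max}]}$ is a tuple of independent Haar unitaries, fixed across all sessions and shared by all horizons as prefixes. That is precisely the ideal prefix family.

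\textbf{Conclusion and main obstacle.} The two acceptance probabilities therefore match separately, the advantages are equal, and one oracle call is used per tooth invocation, so at most $q$ calls in total. Negligibility then follows from PRFU security. The only point requiring care is the ideal branch. It must be correct that the reference experiment shares one tooth sequence across horizons rather than sampling independent combs per $R$; otherwise the simulation would need injective labels $\ell(R,i)$ and would lose prefix consistency. I would state this modeling choice explicitly and note that the simulator's classical bookkeeping for early discard at $i=R_s$ coincides with $\mathsf{Abort}$ semantics.
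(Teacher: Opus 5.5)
Your proposal is correct and follows the paper's proof. Prefix consistency comes from the horizon-independent tooth labels $\ell_i$, and the security claim from the exact concurrent simulator of Lemma~\ref{lem:exact-concurrent-simulation}, augmented with each session's horizon $R_s$, with distinctness of the $\ell_i$ identifying the ideal branch as prefixes of one Haar sequence $H_i=H_{\ell_i}$. Your remark that the inductive step of that lemma never uses injectivity, which matters only in identifying the ideal branch, is accurate and states explicitly what the paper leaves implicit.
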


\begin{proof}
For $R\leq R'$, the first $R$ rounds of $\mathcal C_k^{(R')}$ and $\mathcal C_k^{(R)}$ use identical tooth unitaries, register identifications, and initial memory. Discarding the current memory after round $R$ therefore gives the same output state for every tester, proving prefix consistency.

To simulate joint access to the family, augment the table in Lemma~\ref{lem:exact-concurrent-simulation} with each session's selected horizon $R_s$. At round $i$, query the unitary oracle under $\ell_i$ and discard the session memory upon reaching $R_s$ or receiving $\mathsf{Abort}(s)$. The simulation is exact command by command. Its real branch implements the keyed prefix family, and its ideal branch implements prefixes of the same Haar sequence $H_i=H_{\ell_i}$. This proves the security claim and query count. The tester may also decide during an execution to stop at any earlier round, using the existing abort command.
\end{proof}

Independent labeled families of prefixes follow by replacing $\ell_i$ with an injective map $\ell(x,i)$ and including $x$ in the session's opening command. Independence then holds between process labels; different horizons at the same process label retain their common prefix.

\subsection{Closure under efficient causal transformations}
\label{subsec:causal-closure}

An efficient causal transformation $\mathcal S$ is a quantum procedure that uses a source process through its declared interface and exposes a new process or an output system. It may initialize ancillary systems, interleave interventions, and discard or measure outputs. Let $r_{\mathcal S}(q)$ bound the number of source-process invocations needed when the transformed interface is invoked at most $q$ times. For ensembles $\mathsf X,\mathsf Y$ with the same interface, write
\begin{equation}
\Adv_{\mathsf{proc}}(\mathcal D;\mathsf X,\mathsf Y;\lambda)
:=\left|\Pr[\mathcal D[\mathsf X]=1]-\Pr[\mathcal D[\mathsf Y]=1]\right|.
\end{equation}

\begin{proposition}[Causal data processing]
\label{prop:causal-data-processing}
For every efficient causal transformation $\mathcal S$ and QPT distinguisher $\mathcal D$ making at most $q$ transformed-process invocations, there is a QPT distinguisher $\mathcal B$ making at most $r_{\mathcal S}(q)$ source-process invocations such that
\begin{equation}
\Adv_{\mathsf{proc}}\bigl(\mathcal D;\mathcal S(\mathsf X),\mathcal S(\mathsf Y);\lambda\bigr)
=\Adv_{\mathsf{proc}}(\mathcal B;\mathsf X,\mathsf Y;\lambda).
\label{eq:causal-data-processing}
\end{equation}
\end{proposition}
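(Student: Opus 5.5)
The reduction is to let $\mathcal B$ run $\mathcal D$ and simulate the transformed interface $\mathcal S$ on top of $\mathcal B$'s own access to the source process. This is the same composition argument used for Theorems~\ref{thm:channel-lifting}--\ref{thm:prfu-function-like-prcomb-lifting}, with $\mathcal S$ in place of the explicit session simulator. Concretely, $\mathcal B$ works through the following steps.
\begin{enumerate}
\item It keeps all of $\mathcal S$'s internal state: ancillas, intermediate registers, classical bookkeeping, and any session tables.
\item Whenever $\mathcal D$ issues a command to the transformed process, $\mathcal B$ executes the corresponding part of the procedure $\mathcal S$.
\item Each time $\mathcal S$ needs to invoke the source process through its declared interface, $\mathcal B$ forwards that invocation, with the same classical control data and quantum registers, to its own challenge process.
\item At the end, $\mathcal B$ outputs $\mathcal D$'s final bit.
\end{enumerate}
Efficiency follows because $\mathcal S$ and $\mathcal D$ are both efficient. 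By the definition of $r_{\mathcal S}$, at most $r_{\mathcal S}(q)$ source invocations are issued.

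The key step is exactness. For every fixed realization of the source process (fixed key, or fixed Haar sample), the joint state of $\mathcal D$'s workspace, $\mathcal S$'s internal registers, and the source's hidden memory should evolve identically in the genuine composed experiment $\mathcal D[\mathcal S(\mathsf X)]$ and in the simulation $\mathcal B[\mathsf X]$. I would prove this by induction over the interleaved command sequence, exactly as in Lemma~\ref{lem:exact-concurrent-simulation}. At each step both sides apply the same map, tensored with identity on all other registers. Since this holds for arbitrary joint input states, it covers entanglement across sessions and adaptively chosen classical commands; purifying $\mathcal D$ and $\mathcal S$ by deferred measurement keeps their classical outcomes coherent in the induction. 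Averaging over the distribution of $\mathsf X$ then gives $\Pr[\mathcal D[\mathcal S(\mathsf X)]=1]=\Pr[\mathcal B[\mathsf X]=1]$. The same simulator, run against $\mathsf Y$, gives the corresponding identity, and subtracting the two yields Eq.~\eqref{eq:causal-data-processing}.

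The main obstacle is being precise about what \emph{causal} means, so that the simulation really is available to $\mathcal B$. Three properties are needed:
\begin{itemize}
\item $\mathcal S$ touches the source only through the declared interface (Open/Step/Abort or oracle calls), never its private memory.
\item $\mathcal S$ is independent of the hidden key or Haar sample.
\item The order of source invocations is determined causally by the information available up to that point.
\end{itemize}
I would make these explicit as the definition of an efficient causal transformation, and note that they hold for all the uses in the paper (Stinespring dilation, session simulation). With these properties, every operation of $\mathcal S$ other than a source call is an oracle-independent isometry, and the induction above goes through with no loss.
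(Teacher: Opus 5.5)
Your proposal is correct and matches the paper's proof: $\mathcal B$ runs $\mathcal D$, implements $\mathcal S$ (including its randomness and intermediate measurements) while forwarding every source invocation to its own challenge, and outputs $\mathcal D$'s bit. The simulation is exact for either ensemble, and the invocation bound comes directly from the definition of $r_{\mathcal S}$. Your explicit induction and your list of causality conditions (interface-only access, independence from the hidden key or Haar sample) only spell out what the paper leaves implicit in its definition of an efficient causal transformation.
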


\begin{proof}
The reduction $\mathcal B$ implements $\mathcal S$ using its challenge process, runs $\mathcal D$ on the exposed interface, and outputs $\mathcal D$'s final bit. All operations of $\mathcal S$, including its randomness and intermediate measurements, are part of this simulation. For either challenge ensemble, the resulting interaction is exactly the corresponding transformed experiment. Equality of the acceptance probabilities proves Eq.~\eqref{eq:causal-data-processing}; the invocation bound follows from the definition of $r_{\mathcal S}$.
\end{proof}

Thus efficient interventions, coarse-graining, channel extraction, and measurement preserve computational indistinguishability. The ideal comparison is obtained by applying the same transformation to the sequential-Haar process; it need not itself be a sequential-Haar ensemble of another shape.

\subsection{Prediction at a fresh process label}
\label{subsec:fresh-process-labels}

Consider a learner that makes at most $q$ training tooth invocations through the function-like comb interface and retains quantum workspace $Z$. At the end of training, it chooses a classical process label $x^\star$ that has not occurred in any $\mathsf{Open}$ command and supplies an input $X_1$ for the first tooth of a new session. The challenger initializes the private memory, applies the first tooth, and measures the output $Y_1$ in a fixed rank-one orthonormal basis. The output register and measurement outcome are withheld. The learner, using only its retained workspace and with no further process queries, guesses the outcome. The input $X_1$ may be entangled with $Z$.

If $d_Y=\dim\mathcal H_{Y_1}$ and $G$ is a classical-label forward PRFU, then there is a PRFU distinguisher $\mathcal B$ for which
\begin{equation}
p_{\mathrm{guess}}^{\mathrm{real}}
\leq\frac1{d_Y}+\Adv_G^{\mathsf{cl},\rightarrow}(\mathcal B;\lambda),
\qquad q_{\mathcal B}\leq q+1.
\label{eq:fresh-process-label-prediction}
\end{equation}

\begin{proof}
In the ideal experiment, condition on the classical training transcript and the chosen value $x^\star$. Since the label is fresh and $\ell$ is injective, the first-tooth unitary $H_{x^\star,1}$ is independent of the conditional joint state $\rho_{X_1Z}$. This remains true with adaptive label selection: the training interaction accesses only entries associated with other process labels. Set $d_M=\dim\mathcal H_{M_1}$, so $D=d_Yd_M$ by Eq.~\eqref{eq:process-Haar-shape-condition}. Haar averaging on $X_1M_0\cong Y_1M_1$ gives
\begin{equation}
\begin{split}
&\mathbb E_H\!\left[(H\otimes I_Z)
\bigl(\rho_{X_1Z}\otimes\ketbra{0}{0}_{M_0}\bigr)
(H^\dagger\otimes I_Z)\right]\\
&\hspace{12mm}=\frac{I_{Y_1}}{d_Y}\otimes\frac{I_{M_1}}{d_M}\otimes\rho_Z,
\end{split}
\label{eq:app-fresh-haar-twirl}
\end{equation}
where the register ordering on the left is understood as $X_1M_0Z$. After tracing out the private memory, the measurement outcome is uniform and independent of $Z$. Every guessing measurement therefore succeeds with probability $1/d_Y$ in the ideal experiment.

The reduction $\mathcal B$ simulates training as in Lemma~\ref{lem:exact-concurrent-simulation}, makes one additional call under $\ell(x^\star,1)$ to implement the challenge tooth, and outputs one exactly when the learner guesses the measurement outcome. Its real acceptance probability is $p_{\mathrm{guess}}^{\mathrm{real}}$ and its ideal acceptance probability is $1/d_Y$. Their difference is bounded by its PRFU advantage, proving Eq.~\eqref{eq:fresh-process-label-prediction} with at most $q+1$ calls.
\end{proof}

This prediction bound uses a first tooth with fresh initial memory and a previously unused process label. It does not assert a uniform prediction baseline for later teeth, whose private memory may carry correlations from earlier rounds.

\section{Proofs for the applications}
\label{app:applications}

This appendix gives the reductions and quantitative bounds for Section~\ref{sec:applications}. Each reduction preserves the adversary's quantum workspace and simulates the stated interface throughout the interaction. Random unitaries are sampled once and reused whenever their labels recur.

\subsection{Variable-width unitaries}
\label{app:variable-width-gluing}

We use the widths $n_m=2mw$, labels in Eq.~\eqref{eq:gluing-labels}, and circuits in Eq.~\eqref{eq:two-layer}. For a family $\mathbf U=\{U_m\}_{m\in[M]}$, the oracle $\mathcal O_{\mathbf U}$ accepts a classical width index $m$ and applies $U_m$ to the supplied $n_m$-qubit register, with identity on the adversary's remaining registers. Widths may be chosen adaptively, and a repeated width selects the same unitary. Write
\begin{align}
\Adv_{\mathsf{var}}^{\rightarrow}(\mathcal A;\lambda)
&:=\Bigl|\Pr_k\!\left[\mathcal A^{\mathcal O_{\mathbf V_k}}(1^\lambda)=1\right]\nonumber\\
&\qquad-\Pr_{\mathbf H}\!\left[\mathcal A^{\mathcal O_{\mathbf H}}(1^\lambda)=1\right]\Bigr|,
\label{eq:variable-width-advantage}
\end{align}
where the $H_m$ are independent Haar unitaries on $n_m$ qubits.

The statistical ingredient is the adaptive gluing bound of Ma and Huang~\cite[Theorem~10 and Eqs.~(C.20)--(C.21)]{MaHuang25}. When every local gate in Eq.~\eqref{eq:two-layer} is an independent Haar unitary, the averaged output state of a $t$-query forward algorithm differs from its output with a single Haar unitary on the full register by at most
\begin{equation}
\delta_m(t):=\frac{18m\,t(t-1)}{2^w}
\label{eq:app-gluing-error}
\end{equation}
in trace distance. The bound applies to a persistent unitary and arbitrary adaptive interventions, including operations on a reference system. It therefore also bounds the difference of acceptance probabilities. For $m=1$, the two ensembles coincide exactly; Eq.~\eqref{eq:app-gluing-error} remains a valid upper bound.

\begin{proof}[Proof of Theorem~\ref{thm:variable-width-pru}]
Fix a QPT adversary $\mathcal A$ making at most $q$ global oracle calls. A classical-label PRFU distinguisher $\mathcal B$ simulates a query at width $n_m$ by evaluating the two layers in Eq.~\eqref{eq:variable-width-brickwork} through its challenge oracle, in the order prescribed by Eq.~\eqref{eq:two-layer}. Each such query uses $2m-1$ PRFU calls. Hence $\mathcal B$ makes at most $q(2M-1)$ calls and is efficient for polynomially bounded $q,M,w$.

Let $p_0$ be the real acceptance probability and let $p_1$ be the acceptance probability when every local gate is replaced by its ideal PRFU counterpart. The simulation is exact in both branches, so
\begin{equation}
|p_0-p_1|=\Adv_G^{\mathsf{cl},\rightarrow}(\mathcal B;\lambda).
\label{eq:app-width-prfu-step}
\end{equation}
In the ideal branch, the local gates are mutually independent Haar unitaries because their labels are distinct. Including $n_m$ in each label also makes the resulting circuits independent across widths.

Next, consider $q+1$ statistical hybrids, indexed by $j=0,\ldots,q$. In hybrid $j$, the first $j$ distinct widths encountered by $\mathcal A$ select independent global Haar unitaries; all later new widths select independently sampled circuits of Haar local gates. Once a width is encountered, its unitary is retained for all subsequent queries. Adjacent hybrids coincide up to the first query at the $j$th new width. Condition on that common prefix and the selected width $m$. No gate for this width has yet been used, so its unitary is independent of the conditional workspace and all previously sampled width instances. The remainder of the experiment is an adaptive algorithm making at most $q$ calls to this unitary, with the other width instances forming independent side processes. Equation~\eqref{eq:app-gluing-error} bounds the change in acceptance probability by $\delta_m(q)$. If fewer than $j$ widths are queried, the two hybrids coincide throughout.

At most $q$ distinct widths occur. Writing $p_2$ for the acceptance probability with independent global Haar unitaries gives
\begin{equation}
|p_1-p_2|\leq q\max_{m\leq M}\delta_m(q).
\label{eq:app-width-statistical-step}
\end{equation}
Combining the two comparisons yields the quantitative bound
\begin{align}
\Adv_{\mathsf{var}}^{\rightarrow}(\mathcal A;\lambda)
&\leq\Adv_G^{\mathsf{cl},\rightarrow}(\mathcal B;\lambda)
+q\max_{m\leq M}\delta_m(q)\nonumber\\
&\leq\Adv_G^{\mathsf{cl},\rightarrow}(\mathcal B;\lambda)
+\frac{18Mq^2(q-1)}{2^w}.
\label{eq:variable-width-bound}
\end{align}
The first term is negligible by PRFU security. The second is negligible for polynomially bounded $M,q$ and $w=\omega(\log\lambda)$, proving the theorem.
\end{proof}

The construction uses $2m-1$ local evaluations at width $n_m$, arranged in two layers of disjoint gates. The key is sampled independently of the requested width; the public parameters $w$ and $M$ specify the supported range. These statements concern the full $n_m$-qubit register, including any public padding used to accommodate a smaller payload.

\subsection{Inverse-secure recursive gluing}
\label{app:strong-variable-width-gluing}

A strong variant uses a different circuit, based on the three-factor gluing theorem of Ananth, Bostanci, Gulati, and Lin~\cite[Theorem~7.1]{AnanthBostanciGulatiLin2025}. Suppose $G_k(x)$ is a strong classical-label PRFU on $n_0$ qubits. Choose an overlap of $b$ qubits with $n_0\geq2b$, and define the supported widths recursively by
\begin{equation}
n_{s+1}:=2n_s-b,
\qquad n_s=2^s(n_0-b)+b.
\label{eq:app-strong-width-recurrence}
\end{equation}
For one gluing step, write the larger register as $ABC$, with $|A|=|C|=n_s-b$ and $|B|=b$. Three independent Haar unitaries on $n_s$ qubits, composed as $U^{(3)}_{AB}U^{(2)}_{BC}U^{(1)}_{AB}$, emulate a Haar unitary on $ABC$ under forward and inverse queries. For at most $t$ mixed-direction calls, their statistical distinguishing advantage is bounded by
\begin{equation}
\varepsilon_s(t):=C_{\mathsf{gl}}
\left(t^2 2^{-b/2}+t^3 2^{-b}+t^3 2^{-n_s/8}\right)
\label{eq:app-strong-gluing-step}
\end{equation}
for a universal constant $C_{\mathsf{gl}}$. The alternating query convention used in the cited theorem suffices here: a strategy making $t$ calls with classically chosen directions can be simulated in at most $t$ forward--inverse pairs by routing unused calls to fresh dummy registers and discarding their outputs.

Fix a maximum recursion depth $R$. For a target depth $r\in\{0,\ldots,R\}$ and each leaf $v\in\{1,2,3\}^r$, define
\begin{equation}
T_{0,v}^{(r)}:=G_k\!\left(\operatorname{enc}(\mathsf{strong\text{-}glue},r,v)\right).
\label{eq:app-strong-tree-leaves}
\end{equation}
For $s=0,\ldots,r-1$ and $v\in\{1,2,3\}^{r-s-1}$, set
\begin{equation}
T_{s+1,v}^{(r)}
:=\bigl(T_{s,v3}^{(r)}\bigr)_{AB}
\bigl(T_{s,v2}^{(r)}\bigr)_{BC}
\bigl(T_{s,v1}^{(r)}\bigr)_{AB},
\label{eq:app-strong-tree-recursion}
\end{equation}
with the register dimensions specified above. The root $T_{r,\emptyset}^{(r)}$ acts on $n_r$ qubits and uses $3^r$ local PRFU evaluations. Its inverse is obtained by reversing the circuit and inverting every local gate. Labels distinguish all leaves and target depths, while repeated evaluations reuse them.

To analyze this family, first replace all leaves jointly by independent Haar unitaries using strong PRFU security. Let $\Delta_r(t)$ bound the statistical distinguishing advantage between the resulting depth-$r$ root and a persistent Haar unitary on $n_r$ qubits, allowing at most $t$ forward or inverse calls. At depth zero the distributions agree. At the next level, replace the three children one at a time by independent Haar unitaries and then apply Eq.~\eqref{eq:app-strong-gluing-step}. Each global call invokes each child once, so
\begin{equation}
\Delta_0(t)=0,\qquad
\Delta_{s+1}(t)\leq3\Delta_s(t)+\varepsilon_s(t).
\label{eq:app-strong-gluing-recurrence}
\end{equation}
Consequently, $\Delta_r(t)\leq\sum_{s=0}^{r-1}3^{r-1-s}\varepsilon_s(t)$.

Let $\Adv_{\mathsf{var}}^{\pm}(\mathcal A;\lambda)$ denote the difference of acceptance probabilities between classical access to the pairs $(T_{r,\emptyset}^{(r)},(T_{r,\emptyset}^{(r)})^\dagger)$ and to independent Haar pairs $(H_r,H_r^\dagger)$ at the same widths. For a $q$-query adversary, the PRFU reduction makes at most $q3^R$ calls. The first-occurrence hybrid used above then gives
\begin{equation}
\Adv_{\mathsf{var}}^{\pm}(\mathcal A;\lambda)
\leq\Adv_G^{\mathsf{cl},\pm}(\mathcal B;\lambda)
+q\max_{r\leq R}\Delta_r(q).
\label{eq:app-strong-variable-width-bound}
\end{equation}
Here the advantage uses the absolute difference of acceptance probabilities, and the direction of each query is classical. Since $n_s\geq n_0$,
\begin{align}
q\max_{r\leq R}\Delta_r(q)
=O\!\left(3^R\left(q^3 2^{-b/2}
+q^4 2^{-b}+q^4 2^{-n_0/8}\right)\right).
\label{eq:app-strong-variable-width-asymptotic}
\end{align}
For polynomially bounded $n_0,q,3^R$ and $b=\omega(\log\lambda)$, this error is negligible and all supported widths and evaluation costs are polynomially bounded. Thus the recursive construction yields a jointly secure family under forward and inverse access, as asserted after Theorem~\ref{thm:variable-width-pru}.

\subsection{Fresh-nonce rejection}
\label{app:nonce-auth-proof}

We use the encoding in Eq.~\eqref{eq:nonce-authentication} with the unitary in Eq.~\eqref{eq:nonce-auth-unitary}. The ciphertext register is $C=MTB$, with dimensions $2^m$, $2^\ell$, and $2^b$ for its three factors. Write $R:=V^{(4)}_{a_4}V^{(2)}_{a_2}$, so that $U_{k,\nu}=G_{k_G}(\operatorname{enc}(\mathbf{auth},\nu))R$. The design seeds are sampled independently of $k_G$ and fixed throughout the experiment. After applying $U_{k,\nu}^{\dagger}$, verification accepts on the projector
\begin{equation}
P_0:=I_M\otimes\ketbra{0^\ell}{0^\ell}_T\otimes I_B.
\label{eq:app-auth-accept-projector}
\end{equation}
Upon acceptance it returns $M$ and discards $T$ and $B$; upon rejection it returns a failure flag.

\begin{proof}[Proof of Proposition~\ref{prop:fresh-nonce-rejection}]
Let $\mathcal A$ make at most $q$ encryption and verification queries before submitting a final nonce $\nu^\star$ and ciphertext, which may be entangled with its workspace $Z$. By hypothesis, $\nu^\star$ has appeared in none of the preceding queries.

In the ideal strong-PRFU experiment, condition on the design seeds, the classical transcript, and the selected fresh nonce. Lazy sampling leaves $H_{\operatorname{enc}(\mathbf{auth},\nu^\star)}$ independent of the conditional joint ciphertext--workspace state $\sigma_{CZ}$. Abbreviate this Haar unitary as $H$. The full decoding unitary is $R^{\dagger}H^{\dagger}$. Since $\dim C=2^{m+\ell+b}$, Haar twirling gives
\begin{align}
&\mathbb E_H\!\left[(R^{\dagger}H^{\dagger}\otimes I_Z)\sigma_{CZ}(HR\otimes I_Z)\right] \nonumber\\
&\qquad=(R^{\dagger}\otimes I_Z)\left(\frac{I_C}{2^{m+\ell+b}}\otimes\sigma_Z\right)(R\otimes I_Z)
=\frac{I_C}{2^{m+\ell+b}}\otimes\sigma_Z.
\label{eq:app-fresh-nonce-twirl}
\end{align}
The ideal probability of acceptance is therefore
\begin{equation}
\operatorname{Tr}\!\left[P_0\frac{I_C}{2^{m+\ell+b}}\right]
=\frac{2^{m+b}}{2^{m+\ell+b}}
=2^{-\ell}.
\label{eq:app-fresh-nonce-acceptance}
\end{equation}
This equality holds for every conditioned choice and hence after averaging over the design seeds and transcript. In particular, reusing the design layers across nonces does not change the fresh-nonce bound.

Construct a strong classical-label distinguisher $\mathcal B$ that samples the design seeds independently of its challenge. To simulate encryption, it appends the tag and auxiliary state, applies $V^{(2)}_{a_2}$ followed by $V^{(4)}_{a_4}$, and makes a forward challenge call under the requested nonce label. To simulate verification, it makes an inverse challenge call, applies $(V^{(4)}_{a_4})^{\dagger}$ followed by $(V^{(2)}_{a_2})^{\dagger}$, measures the tag, and returns the prescribed output. Thus it implements $GR$ in the forward direction and $R^{\dagger}G^{\dagger}$ in the inverse direction. Finally, it verifies the submitted ciphertext at $\nu^\star$ and outputs one precisely upon acceptance. All design operations are efficient and local. The simulation uses at most $q+1$ challenge calls, with real acceptance probability $p_{\mathsf{fresh}}$ and ideal acceptance probability $2^{-\ell}$, so
\begin{equation}
\bigl|p_{\mathsf{fresh}}-2^{-\ell}\bigr|
=\Adv_G^{\mathsf{cl},\pm}(\mathcal B;\lambda).
\label{eq:app-fresh-nonce-reduction}
\end{equation}
In particular, Eq.~\eqref{eq:fresh-nonce-rejection} follows.
\end{proof}

The same argument bounds $t$ adaptively chosen attempts at mutually distinct nonces, each unused before its attempt. In the Haar experiment every attempt has conditional acceptance probability $2^{-\ell}$; a union bound and a single PRFU reduction give
\begin{equation}
\Pr[\text{at least one attempt accepts}]
\leq t2^{-\ell}+\Adv_G^{\mathsf{cl},\pm}(\mathcal B_t;\lambda).
\label{eq:app-many-fresh-nonces}
\end{equation}
Thus polynomially many fresh-nonce attempts have negligible success probability when $\ell=\omega(\log\lambda)$, including replay of a ciphertext under a new nonce.

For the inherited authentication guarantee, choose the design accuracy and register sizes as in Ref.~\cite{HaugBansalMokKohBharti2025}: take $\ell,b=\omega(\log\lambda)$ and a relative-error $\varepsilon_4$-approximate $4$-design with, for example, $\varepsilon_4=2^{-2b-(\log_2\lambda)^2}$. These moment conditions are used for authentication; the fresh-nonce proof above only uses the independence and unitarity of the design layers. The authentication guarantee cited in the main text follows by fixing a nonce $\nu$. The family $\{G_{k_G}(\operatorname{enc}(\mathbf{auth},\nu))\}_{k_G}$ is an adaptive PRU: any single-unitary distinguisher can be simulated by querying the strong PRFU at this one label. The two design seeds are independent of this PRU key and of each other. Consequently, Eq.~\eqref{eq:nonce-auth-unitary}, together with the tag and mixed auxiliary state, satisfies the construction requirements of PQAS in Ref.~\cite{HaugBansalMokKohBharti2025}. Its authentication theorem therefore applies directly at a fixed nonce, including key reuse within its stated attack model. The fresh-nonce bound above separately covers adaptive encryption and verification queries across nonces. Rejection of unchanged ciphertexts replayed at their original nonce requires stateful nonce handling.

\subsection{Content hiding for coherently masked QRAM}
\label{app:qram-proof}

The interface in Proposition~\ref{prop:qram-content-hiding} exposes only forward applications of the masked lookup in Eq.~\eqref{eq:masked-qram}. On an address register $L$, an $m$-qubit bus $B$, and an arbitrary reference $R$, it acts as
\begin{equation}
\mathcal O_{\widetilde Q_{k,D}}(\rho_{LBR})
:=(\widetilde Q_{k,D}\otimes I_R)\rho_{LBR}
(\widetilde Q_{k,D}^{\dagger}\otimes I_R).
\label{eq:masked-qram-oracle}
\end{equation}
The database and key remain fixed throughout the interaction. The two challenge databases have the same address set and word length, are independent of the key, and admit efficient lookup circuits as in Eq.~\eqref{eq:QRAM}.

\begin{proof}[Proof of Proposition~\ref{prop:qram-content-hiding}]
For $j\in\{0,1\}$, construct a coherent-label PRFU distinguisher $\mathcal B_j$ by running $\mathcal A$ and answering each query with the known lookup $Q_{D_j}$ followed by one challenge call on the address and bus registers. The address itself serves as the PRFU label, exactly as in Eq.~\eqref{eq:masked-qram}. Thus the simulation preserves superpositions of addresses and uses one challenge call per masked lookup.

In the real branch the simulated unitary is $\widetilde Q_{k,D_j}$. In the ideal branch it is
\begin{equation}
\sum_{x\in\mathcal X_\lambda}\ketbra{x}{x}_L
\otimes H_x X^{D_j(x)}_B,
\label{eq:app-ideal-masked-qram}
\end{equation}
where $X^z\ket{b}=\ket{b\oplus z}$. Right invariance of Haar measure implies $H_xX^{D_j(x)}\overset{d}{=}H_x$ for every address. Applying these fixed right multiplications to the independently sampled $H_x$ preserves their joint distribution. This is an equality of distributions over persistent controlled unitaries, so it holds for the entire adaptive interaction, including interference between different addresses.

Let $p_j$ denote $\mathcal A$'s real acceptance probability for database $D_j$, and let $p_{\mathsf H}$ denote its acceptance probability under the common ideal distribution in Eq.~\eqref{eq:app-ideal-masked-qram}. Exact simulation gives $|p_j-p_{\mathsf H}|=\Adv_G^{\mathsf{coh},\rightarrow}(\mathcal B_j;\lambda)$. Therefore
\begin{align}
|p_0-p_1|
&\leq |p_0-p_{\mathsf H}|+|p_1-p_{\mathsf H}|\nonumber\\
&=\Adv_G^{\mathsf{coh},\rightarrow}(\mathcal B_0;\lambda)
+\Adv_G^{\mathsf{coh},\rightarrow}(\mathcal B_1;\lambda).
\label{eq:app-qram-triangle}
\end{align}
Both terms are negligible by coherent-label PRFU security, proving Eq.~\eqref{eq:qram-content-hiding}.
\end{proof}

A key holder can undo the mask coherently using its inverse circuit. This decoding operation is private: the adversary receives the combined masked interface, with no separate access to the mask or its inverse. The proposition concerns one fixed database and does not assert security for updates made while reusing the same masks.

\subsection{Emulation of random multi-time dynamics}
\label{app:random-dynamics-proof}

\begin{proof}[Proof of Corollary~\ref{cor:random-multitime-emulation}]
Fix a QPT concurrent distinguisher $\mathcal T$ making at most $Q$ step calls in total. Apply the simulator of Theorem~\ref{thm:prfu-function-like-prcomb-lifting} to the process family in Eq.~\eqref{eq:random-dynamics-emulator}, using $\ell_{\mathsf{dyn}}(x,i)$ for step $i$ of instance $x$. The public shape $\Sigma$ specifies the input, output, and memory registers; the simulator initializes separate private memory for every new session and updates it with one PRFU call at each step.

In the real branch this implements $\mathcal C_{k,x}^{\Sigma}$ exactly. In the ideal branch, injectivity of $\ell_{\mathsf{dyn}}$ gives an independent Haar unitary for every pair $(x,i)$, reused across sessions of the same instance. The simulated family is therefore exactly the sequential-Haar reference family. The acceptance-probability difference equals $\Adv_G^{\mathsf{cl},\rightarrow}(\mathcal B;\lambda)$ for a reduction using at most $Q$ calls and is negligible. Adaptive stopping requires no additional reduction: an $\mathsf{Abort}$ command discards the session memory, and the labels of preceding steps do not depend on when this command is issued.
\end{proof}

\end{document}